\newtheorem{theorem}{Theorem}
\newtheorem{lemma}{Lemma}
\newtheorem{corollary}{Corollary}
\theoremstyle{definition} \newtheorem{definition}{Definition}
\newtheorem{assumption}{Assumption}
\newtheorem{remark}{Remark}
\newtheorem{desired_feature}{Desired Feature}
\newtheorem{requirement}{Requirement}
\begin{document}
%
\title{
Bio-Inspired Local Information-Based Control for Probabilistic Swarm Distribution Guidance 
}
%
%
%

\author{Inmo~Jang,~
        Hyo-Sang~Shin,~
        and~Antonios~Tsourdos
\thanks{Inmo Jang, Hyo-Sang Shin, and Antonios Tsourdos are with the Centre for Autonomous and Cyber-Physical Systems, 
Cranfield University, MK43 0AL, United Kingdom
(e-mail: inmo.jang@cranfield.ac.uk; h.shin@cranfield.ac.uk; a.tsourdos@cranfield.ac.uk). }
}

\maketitle

\begin{abstract} 
This paper addresses a task allocation problem for a large-scale robotic swarm, namely \emph{swarm distribution guidance problem}. Unlike most of the existing frameworks handling this problem, the proposed framework suggests utilising \emph{local} information available to generate its time-varying stochastic policies. 
As each agent requires only local consistency on information with neighbouring agents, rather than the global consistency, the proposed framework offers various advantages, e.g., a shorter timescale for using new information and potential to incorporate an asynchronous decision-making process. 
We perform theoretical analysis on the properties of the proposed framework. 
From the analysis, it is proved that the framework can guarantee the convergence to the desired density distribution even using local information while maintaining advantages of global-information-based approaches. 
The design requirements for these advantages are explicitly listed in this paper. 
This paper also provides specific examples of how to implement the framework developed. 
The results of numerical experiments confirm the effectiveness and comparability of the proposed framework, compared with the global-information-based framework. 
\end{abstract}

\begin{IEEEkeywords}
Swarm robotics, 
Distributed robot systems, 
Networked robots, 
Markov chains.
\end{IEEEkeywords}

%
\IEEEpeerreviewmaketitle




\section{Introduction}

\IEEEPARstart{T}{his} paper addresses a task allocation problem for a large-scale multiple-robot system, called a \emph{robotic swarm}. 
Robotic swarms have attracted lots of attention because they are regarded as promising solutions to handle complicated missions that other systems may not be able to manage \cite{Brambilla2013, Kolling2016}. 
Agents in a swarm are assumed to be homogeneous because the swarm is usually realised through mass production \cite{Sahin2005}.  
In this context, the task allocation problem can be reduced to a problem of how to distribute a swarm of agents into given tasks (or bins), satisfying the desired population fraction (or swarm density) for each task. 
This problem is known as the \emph{swarm distribution guidance problem} \cite{Acikmese2012, Acikmese2014, Acikmese2015}. 

For a large number of agents,  
probabilistic approaches based on Markov chains \cite{Chattopadhyay2009, Acikmese2012, Acikmese2014, Acikmese2015, Demir2015, Luo2014, Demir2015a, Bandyopadhyay2013a, Morgan2014, Bandyopadhyay2017} or 
differential equations \cite{Halasz2007, Hsieh2008, Berman2008, Berman2009, Mather2011} 
have been widely utilised. 
Since these approaches focus not on individual agents but instead on the ensemble dynamics, 
they are also called \emph{Eulerian} \cite{Bandyopadhyay2013a, Morgan2014, Bandyopadhyay2017} or 
\emph{macroscopic} frameworks \cite{Lerman2005, Mather2011}.
In these approaches, swarm densities for each bin are represented as system states, and 
a state-transition matrix describes \emph{stochastic decision policies}, i.e., the probabilities that agents in a bin switch to another. 
Individual agents in the swarm make decisions based on these policies, but in a random, independent, and memoryless manner. 

Initially, \emph{open-loop-type} frameworks have been proposed  \cite{Berman2008, Berman2009, Mather2011, Chattopadhyay2009, Acikmese2012, Acikmese2014, Acikmese2015}. Agents under these frameworks are controlled by time-invariant stochastic decision policies. 
The policies, which make a swarm converge to a desired distribution, are pre-determined by a central controller and broadcasted to each agent before executing the mission. 
Communication between agents is hardly required during the mission, so that it can reduce communication complexity under these frameworks.  
However, the agents only have to follow the given policies without incorporating any feedbacks,  
and thus there still remain some agents who unnecessarily and continuously switch bins even after the swarm reaches the desired distribution. 
This gives rise to a trade-off between convergence rate and long-term system efficiency \cite{Berman2009}.
There have been also 
some other works, called \emph{closed-loop-type} frameworks \cite{Halasz2007, Hsieh2008, Luo2014, Demir2015a, Bandyopadhyay2013a, Morgan2014, Bandyopadhyay2017}. This type of frameworks allows agents to adaptively construct their own stochastic decision policies at the expense of sensing the concurrent swarm status through interactions with other agents. 
Based on such information, agents can synthesise time-inhomogeneous transition matrices to achieve certain objectives and requirements: for example, maximising convergence rates \cite{Demir2015a}, minimising travelling costs \cite{Bandyopadhyay2017}, and temporarily adjusting given policies when bins are more overpopulated or underpopulated than certain levels \cite{Halasz2007, Hsieh2008}. 
In particular, Bandyopadhyay et. al. \cite{Bandyopadhyay2017} recently proposed a closed-loop-type algorithm that 
exhibits faster convergence as well as less undesirable transition behaviours, compared with an open-loop-type algorithm. 
This algorithm is expected to mitigate the trade-off raised in open-loop-type frameworks. 

To the best of our knowledge, most of the existing closed-loop-type algorithms are based on \emph{Global Information Consistency Assumption} (GICA) \cite{Johnson2016}.
GICA implies that necessary information is required to be consistently known by entire agents. 
We refer to such information as \emph{global}, because 
achieving information consistency needs agents to somehow interact with all the others through a multi-hop fashion and thus it ``happens on a global communication timescale''\cite{Johnson2016}. 

This paper proposes a framework that requires \emph{Local Information Consistency Assumption} (LICA) \cite{Johnson2016}. Unlike GICA-based algorithms, the proposed framework require only local consistency on information with neighbouring agents, not the global consistency. 
LICA can provide various alternative advantages to the proposed framework, compared with GICA.  
Firstly, it ``provides a much shorter timescale for using new information because agents are not required to ensure that this information has propagated to the entire team before using it''\cite{Johnson2016}.
%
Secondly, LICA enables a foundation on which an asynchronous decentralised decision-making process can be developed. Note that the timescales for achieving the information consistency between the agents can be different depending on their local circumstances.  
Considering any possibly-extrinsic heterogeneity of agents (e.g., different sensing frequency due to local communication delays), an asynchronous algorithm is regarded as more realistic in coordinating a robotic swarm, so increasing its system efficiency \cite{Xue2014, Koh2006, Johnson2011}.
%
Finally, LICA makes the proposed approach additionally robust against dynamical changes in bins and those in agents.
Given that inclusions or exclusions of bins are perceived by neighbouring agents, the proposed approach works well even without requiring other far-away agents to know the changes.  

The LICA-based framework developed in this paper utilises \emph{local} information 
as its feedback gains, which is motivated from the recent GICA-based work in \cite{Bandyopadhyay2017}. 
This framework is inspired by the mechanism of decision-making in a fish swarm, in which each of them adjusts its individual behaviour based on those of neighbours \cite{Couzin2002, Couzin2005, Gautrais2008, Hoare2004}. 
Similarly, each agent in the framework developed uses its local status, i.e. the current density of its associated bin relative to those of its neighbour bins, to generate its time-varying stochastic decision policies.
The agent is not required to know any global information, and hence the aforementioned advantages of LICA can be exploited. 

We prove that, even using local information, the proposed framework asymptotically converges to a desired swarm distribution and it retains the advantages of existing closed-loop-type approaches.
This paper explicitly presents the design requirements for a time-inhomogeneous Markov chain to achieve these desired features. It is thus expected that the user can utilise the requirements in designing  their own algorithm.  
In addition, three specific examples are provided to demonstrate how to implement the proposed framework: 1) minimising travelling cost; 2) maximising convergence rate under upper flux bounds; and 3) generation of quorum-based policies (similar to \cite{Halasz2007, Hsieh2008}).

The rest of this paper are organised as follows. 
Section \ref{sec:preliminary} introduces the desired features of a swarm distribution guidance framework along with relevant definitions and notations. 
Section \ref{sec:framework0} proposes our framework with its design requirements, the biological inspiration, and an analysis regarding whether the desired features are satisfied. 
We provide examples of how to implement the framework for specific problems in Section \ref{sec:implementation_example}, and an asynchronous implementation in Section \ref{sec:async}. 
The results of numerical experiments are shown in Section \ref{sec:experiment}, followed by concluding remarks in Section \ref{Conclusion}.

\subsection*{Notations}
$\emptyset$, \textbf{0}, $I$ and \textbf{1} denote the empty set, the zero matrix of appropriate sizes, the identity matrix of appropriate sizes, and a row vector with all elements are equal to one, respectively.
$v \in \mathbb{P}^{n}$ is a stochastic (row) vector such that $v \ge \textbf{0}$ and $v \cdot \textbf{1}^{\top} = 1$.
$v[i]$ indicates the $i$-th element of vector $v$.
$\mathrm{Prob}(E)$ denotes the probability that event $E$ will happen.

\begin{table}[h]
\renewcommand{\arraystretch}{1.3}
\caption{Nomenclature}
\label{nomenclature}
\centering
\begin{tabular}{p{0.4in} p{2.7in}}
\hline
Symbol & Description \\
\hline
\hline
$\mathcal{B}_i$	 		& The $i$-th bin amongst a set of $n_{bin}$ bins (Definition \ref{def:bin}); \\
$\mathcal{A}$	& A set of agents  (Definition \ref{def:bin}); \\
$n^{\mathcal{A}}_k$	& The number of total agents at time instant $k$ (Definition \ref{def:bin}); \\
$n_k[i]$	& The number of agents at the $i$-th bin (Eqn. (\ref{eqn:local_dist})); \\
$a^j_k$		& The $j$-th agent's state indicator vector (Definition \ref{def:state});\\
$x^j_k$	& Stochastic state vector of the $j$-th agent (Definition \ref{def:prob_state}); \\
$M^j_k$	& Stochastic decision policy of the $j$-th agent (Definition \ref{def:prob_state}); \\
$\Theta$ 	& Desired swarm distribution (Definition \ref{def:theta});\\
$\mu_k^{\star}$ & Current (global) swarm distribution (Definition \ref{def:current_distribution});\\
$A_k$ & Physical motion constraint matrix (Definition \ref{def:physical_route});\\
$C_k$ & Communicational connectivity matrix (Definition \ref{def:comm_connect});\\
$\mathcal{N}_k(i)$ & A set of (communicationally-connected) neighbour bins of the $i$-th bin (Definition \ref{def:comm_connect});\\
$\mathcal{A}_{\mathcal{N}_k(i)}$ & A set of agents in $\mathcal{N}_k(i)$;\\
$\bar{\mu}_k^{\star}[i]$ & Current \emph{local} swarm density at the $i$-th bin (Eqn. (\ref{eqn:local_dist}));\\
$\bar{\mu}_k^{j}[i]$ & Estimate of $\bar{\mu}_k^{\star}[i]$ by the $j$-th agent;\\
$\bar{\Theta}[i]$ & Locally-desired swarm density at the $i$-th bin (Eqn. (\ref{eqn:local_dist_desired}));\\
$P^j_k$   & Primary guidance matrix (Eqn. (\ref{eqn:P_matrix}));\\
$S^j_k$ 	& Secondary guidance matrix (Eqn. (\ref{eqn:P_matrix}));\\
$\bar{\xi}^j_k[i]$ 	& Primary local-feedback gain (e.g., Eqn. (\ref{eqn:xi}));\\
$G^j_k[i]$ & Secondary local-feedback gain (Eqn. (\ref{eqn:eta}));\\
\hline
\end{tabular}
\end{table}

\section{Preliminaries}\label{sec:preliminary}
 
\subsection{Definitions}

This section presents necessary definitions and assumptions for our proposed framework, which will be shown in Section \ref{sec:framework0}. 
Since most of them are embraced from the recent existing literature \cite{Demir2015a, Bandyopadhyay2017}, we here briefly provide their essential meanings. 

\begin{definition}[\emph{Agents and Bins}]\label{def:bin}
A set of agents $\mathcal{A}$ are supposed to be distributed over a prescribed region in a state space $\mathcal{B}$.  
The entire space is partitioned into $n_{bin}$ disjoint \emph{bins} (subspaces) such that 
$\mathcal{B} = \cup_{i=1}^{n_{bin}} \mathcal{B}_i$ and $\mathcal{B}_i \cap \mathcal{B}_j = \emptyset$, $\forall i \neq j$.
We also regard $\mathcal{B} = \{\mathcal{B}_1,...,\mathcal{B}_{n_{bin}}\}$ as the set of all the bins. 
Each bin $\mathcal{B}_i$ represents a predefined range of an agent's state, e.g., position.
The number of the entire agents is time-varying, and its value at time instant $k$ is denoted by $n^{\mathcal{A}}_k = |\mathcal{A}|$. 
Note that we do not assume that the agents keep track of $n^{\mathcal{A}}_k$. 
\end{definition}

\begin{definition}[\emph{Agent's state}]\label{def:state}
Let $a^j_k \in \{0,1\}^{n_{bin}}$ be the state indicator vector of agent $j \in \mathcal{A}$ at time instant $k$.
If the agent's state belongs to bin $\mathcal{B}_i$, then $a^j_k[i] = 1$, otherwise $0$. 
\end{definition}

\begin{definition}[\emph{Current (global) swarm distribution}]\label{def:current_distribution}
\emph{The current (global) swarm distribution} $\mu_k^{\star} \in \mathbb{P}^{n_{bin}}$ is a row-stochastic vector such that 
each element $\mu_k^{\star}[i]$ is the population fraction (swarm density) of $\mathcal{A}$ in bin $\mathcal{B}_i$ at time instant $k$:
\begin{equation}\label{eqn:current_distribution}
\mu_k^{\star} := \frac{1}{|\mathcal{A}|} \sum_{\forall j \in \mathcal{A}} a^j_k.
\end{equation}
 %
\end{definition}


\begin{definition}[\emph{Agent's stochastic state and decision policy}]\label{def:prob_state}
Agent $j$'s \emph{stochastic state} is a row-stochastic vector $x^j_k \in \mathbb{P}^{n_{bin}}$  
in which each element $x^j_k[i]$ gives the probability that the agent's state belongs to bin $\mathcal{B}_i$ at time instant $k$:
\begin{equation}
x^j_k[i] := \mathrm{Prob}(a^j_k[i] = 1).
\end{equation} 
The probability that agent $j$ in bin $\mathcal{B}_i$ at time instant $k$ will transition to bin $\mathcal{B}_l$ before the next time instant is called its \emph{stochastic decision policy}, denoted as:
\begin{equation}\label{eqn:stochastic_policy}
M^j_k[i,l] := \mathrm{Prob}(a^j_{k+1}[l] = 1| a^j_k[i] = 1).
\end{equation}
Note that $M^j_k \in \mathbb{P}^{n_{bin} \times n_{bin}}$ is a row-stochastic matrix such that $M^j_k \ge \textbf{0}$ and $M^j_k \cdot \textbf{1}^{\top} = \textbf{1}^{\top}$, and will be referred as \emph{Markov matrix}.  
\end{definition}

\begin{definition}[\emph{Desired swarm distribution}]\label{def:theta}
\emph{The desired swarm distribution} $\Theta \in \mathbb{P}^{n_{bin}}$ is a row-stochastic vector such that
each element $\Theta[i]$ indicates the desired swarm density for bin $\mathcal{B}_i$. 
\end{definition}

\begin{assumption}
For ease of description for this paper, we assume that $\Theta[i] > 0$, $\forall i \in \{1,...,n_{bin}\}$.  
Obviously, in practice, there may exist some bins whose desired swarm densities are zero. 
These bins can be accommodated by adopting any subroutines ensuring that all agents eventually move to and remain in any of the positive-desired-density bins, for example, an escaping algorithm in \cite[Section III.C]{Bandyopadhyay2017}. 
\end{assumption}

\begin{assumption}[\emph{The number of agents} \cite{Hsieh2008, Berman2009, Acikmese2015, Demir2015a, Bandyopadhyay2017}]
It is assumed that $n^{\mathcal{A}}_k \gg n_{bin}$ so that the time evolution of the swarm distribution is governed by the stochastic decision policy in Equation (\ref{eqn:stochastic_policy}). 
Although the finite cardinality of the agents normally cause a residual convergence error, 
a lower bound on $n^{\mathcal{A}}_k$ that probabilistically guarantees a desired convergence error is analysed in \cite[Theorem 6]{Bandyopadhyay2017} by exploiting Chebyshev's equality. 
Note that this theorem is generally appliable and thus is also valid for our work. 
\end{assumption}

\begin{definition}[\emph{Physical motion constraint}  \cite{Acikmese2015, Demir2015a, Bandyopadhyay2017}]\label{def:physical_route}
Motion constraints of agents are denoted by the matrix $A_k \in \{0,1\}^{n_{bin} \times n_{bin}}$, 
where $A_k[i,l] = 1$ if agents in bin $\mathcal{B}_i$ at time instant $k$ are allowed to transition to bin $\mathcal{B}_l$ by the next time instant; 
$A_k[i,l] = 0$, otherwise. 
It is assumed that $A_k$ is symmetric and irreducible (i.e., strongly-connected); and $A_k[i,i] = 1$ for all agents, bins, and time instants. 
\end{definition}

\begin{definition}[\emph{Communicationally-connected}]\label{def:comm_connect}
Bins $\mathcal{B}_i$ and $\mathcal{B}_l$ are said to be \emph{communicationally-connected}, 
if 
there exists at least one agent in bin $\mathcal{B}_i$ who can directly communicate with some agents in bin $\mathcal{B}_l$, and vice versa. 
This communicational connectivity over all the bins at time instant $k$ is defined by the matrix $C_k \in \{0,1\}^{n_{bin} \times n_{bin}}$, 
where $C_k [i,l] = 1$ indicates that bins $\mathcal{B}_i$ and $\mathcal{B}_l$ are communicationally-connected. 
Note that $C_k$ is symmetric and all its diagonal entries are set to be one. 
For each bin $\mathcal{B}_i$, we define the set of its (communicationally-connected) \emph{neighbour bins} as $\mathcal{N}_k(i) = \{ \forall \mathcal{B}_l \in \mathcal{B} \ | \ C_k[i,l] = 1 \}$.
The set of agents in any of bins in $\mathcal{N}_k(i)$ is denoted by $\mathcal{A}_{\mathcal{N}_k(i)} = \{\forall j \in \mathcal{A} \ | \ a^j_k[l] = 1, \ \forall l:\mathcal{B}_l \in \mathcal{N}_k(i)\}$.  
\end{definition}

\begin{assumption}[{\emph{Communicational connectivity over bins}}]
\label{assum:communication_range}
The physical motion constraint of a robot is, in general, more stringent than its communicational constraint. 
From this, it can be assumed that if the transition of agents between bin $\mathcal{B}_i$ and $\mathcal{B}_l$ is allowed within a unit time instant, then the both bins are communicationally-connected, i.e., if $A_k[i,l] = 1$ then $C_k[i,l] = 1$. 
Note that we set $C_k[i,l] = 0$ if $A_k[i,l] = 0$.
This implies that the matrix $C_k$ is irreducible, as is $A_k$.  
The communication network over the agents is assumed to be strongly-connected \cite{Bandyopadhyay2017, Demir2015a}. 
Using distributed consensus algorithms \cite{Demir2015a, Bandyopadhyay2013a, Bandyopadhyay2014}, 
each agent can access necessary local information in its neighbour bins.
\end{assumption}

%
%

\begin{assumption}[{\emph{Pre-known Information}} \cite{Bandyopadhyay2017}]
\label{assum.desired_dist}
The desired swarm distribution $\Theta$, the motion constraint matrix $A_k$ (also $C_k$), and other pre-determined values such as variables regarding objective functions and user-design parameters (which will be introduced later) are known by all the agents before they begin a mission. 
\end{assumption}

\begin{assumption}[\emph{Agent's capability}\cite{Bandyopadhyay2017, Demir2015a}]\label{assum.nav} 
Each agent can determine the bin to which it belongs, and know the locations of neighbour bins so that it can navigate toward any of these bins. 
The agent is capable of collision avoidance behaviours against other agents or obstacles. 
\end{assumption}

\subsection{{Problem Statement}}\label{sec:design_goals}

The objective of the swarm distribution guidance problem considered in this paper is to distribute a set of agents $\mathcal{A}$ over a set of bins $\mathcal{B}$ by the Markov matrix $M^j_k$ in a manner that holds the following desired features:  

\begin{desired_feature} \label{goal.stability} The swarm distribution $\mu_k^{\star}$ asymptotically converges to the desired swarm distribution $\Theta$ as time instant $k$ goes to infinity.  
\end{desired_feature}

\begin{desired_feature} \label{goal.no_idle_switch} Transitions of the agents between the bins are controlled in a way that $M^j_k$ becomes close to $I$ as $\mu_k^{\star}$ converges to $\Theta$. This implies that the agents are settled down after $\Theta$ is achieved, and thus unnecessary transitions can be reduced. Moreover, the agents identify and compensate any partial loss or failure of the swarm distribution. 
\end{desired_feature}		
		
\begin{desired_feature} \label{goal.by_local_info} 
For each agent in bin $\mathcal{B}_i$, the information required for generating time-varying stochastic decision policies is not global information (e.g., $\mu_k^{\star}$) but locally available information within $\mathcal{A}_{\mathcal{N}_k(i)}$. 
Thereby, the resultant time-inhomogeneous Markov process is based on LICA, and has benefits such as a shorter timescale for obtaining new information (than GICA), the potential for an asynchronous process, etc.  
\end{desired_feature}

%

\begin{remark}
One of our main contributions is to provide Desired Feature \ref{goal.by_local_info} as well as to retain Desired Features \ref{goal.stability} and \ref{goal.no_idle_switch} by additionally adopting Assumption \ref{assum:communication_range}, which can be elicited from other assumptions in the existing literature. 
\end{remark}

\section{A Closed-loop-type Framework using Local Information}\label{sec:framework0}

This section proposes a LICA-based framework for the swarm distribution guidance problem. 
The framework is different from the recent closed-loop-type algorithms in \cite{Bandyopadhyay2017, Demir2015a} in the sense that 
they utilise the global information (e.g., the current swarm distribution in Equation (\ref{eqn:current_distribution})) for constructing a time-inhomogeneous Markov matrix, 
whereas ours uses the local information in Equation (\ref{eqn:local_current_distribution}). 
We present, in spite of using such relatively insufficient information, how the desired features described in the previous section can be achieved in the proposed framework.    
Before that, we introduce the biological idea, which is about decision-making mechanisms of a fish swarm, that inspires this framework to particularly attain Desired Feature \ref{goal.by_local_info}.  
In addition, we explicitly provide the design requirements for a Markov matrix in order for prospective users to easily incorporate their own specific objectives into this framework.

\subsection{The Biological Inspiration}

For a swarm of fishes, 
it has commonly been assumed that their crowdedness limits their perception ranges over other members, 
and their cardinality restricts the capacity for individual recognition \cite{Couzin2005}. 
How fishes end up with collective behaviours is different from the ways of other social species such as bees and ants, which are known to use recruitment signals for the guidance of the entire swarm \cite{Seeley1995, Keller2000}.  
Thus, in biology domain, a question naturally has arisen about the mechanism of fishes' decision-making in an environment where local information is only available and information transfer between members does not explicitly happen \cite{Partridge1982, Couzin2005, Becco2006, Couzin2002, Gautrais2008, Hoare2004}.


It has been experimentally shown that fishes' swimming activities vary depending on their perceivable neighbours.  
According to \cite{Partridge1982}, fishes have the tendency to maintain their statuses (e.g., position, speed, and heading angle) relative to those of other nearby fishes, which results in their organised formation structures.  
In addition, it is presented in \cite{Becco2006} that
spatial density of fishes has influences on both the minimum distances between them and the primary orientation of the fish school. 

Based on this knowledge, 
the works in \cite{Couzin2002, Couzin2005, Gautrais2008, Hoare2004} suggest individual-based models to further understand the collective behavioural mechanisms of fishes: for example, their repelling, attracting, and orientating behaviours \cite{Couzin2002, Gautrais2008}; 
how the density of informed fishes affects the elongation of the formation structure \cite{Couzin2005};
and group-size choices \cite{Hoare2004}. 
The common and fundamental characteristic of these models is that every agent maintains or adjusts its personal status with consideration of those of other individuals within its limited perception range.

As inspired by the understanding of fishes,
we believe that there must be an enhanced swarm distribution guidance approach  
in which each agent only needs to keep its relative status by using local information available from its nearby neighbours. 
In this approach, 
a global information is not necessary to be known by agents, 
and thereby the corresponding requirement of extensive information sharing over all the agents can be alleviated. 
\subsection{Fundamental Idea of the Proposed Approach}\label{sec:fundamental_idea}


\begin{figure}[tpb]
\centering
{\includegraphics[width=0.4\linewidth]{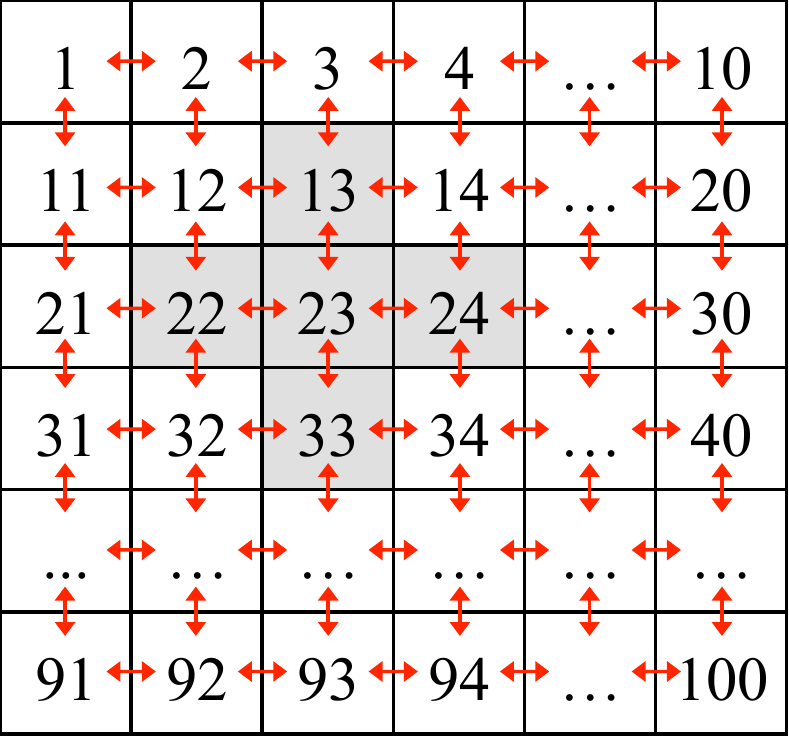}}
\caption{Examples of how to calculate $\bar{\mu}_k^{\star}[i]$: for bin $\mathcal{B}_{23}$, $\bar{\mu}_k^{\star}[23] = n_k[23] / ( n_k[13] + n_k[22] + n_k[23] + n_k[24] + n_k[33])$. 
In the proposed framework, agents in the bin only need to obtain the local information from other agents in its neighbour bins (shaded).
Note that each square indicates each bin, and the red arrow between two bins $\mathcal{B}_i$ and $\mathcal{B}_l$ means that $A_k[i,l] = 1$.  
}
\label{fig.comm_burden}
\end{figure}


Suppose that each agent in bin $\mathcal{B}_i$ is required to keep its local status $\bar{\mu}_k^{\star}[i]$, which we referred to as \emph{the current local swarm density at bin $\mathcal{B}_i$}, at the value of the corresponding \emph{locally-desired swarm density} $\bar{\Theta}[i]$. They are respectively defined as follows:
\begin{equation}
\label{eqn:local_dist}
		\bar{\mu}^{\star}_k[i] := \frac{n_k[i]}{\sum_{\forall l:\mathcal{B}_l \in \mathcal{N}_k(i)} n_k[l]}, 
\end{equation}
where $n_k[i]$ is the number of agents such that $a^j_k[i] = 1$; and
\begin{equation}
	\label{eqn:local_dist_desired}
	\bar{\Theta}[i] := \frac{\Theta [i]}{\sum_{\forall l:\mathcal{B}_l \in \mathcal{N}_k(i)} \Theta [l]}.
\end{equation}
We use the term $\bar{\mu}^{j}_k[i]$ as an estimate of $\bar{\mu}^{\star}_k[i]$ by agent $j$, which can be obtained through a distributed information consensus algorithm \cite{Demir2015a, Bandyopadhyay2013a, Bandyopadhyay2014}. 

The fundamental idea of the proposed approach is to make each agent $j$ in bin $\mathcal{B}_i$: 
\begin{enumerate}
\item[(i)] only need to estimate the difference of $\bar{\mu}^{\star}_k[i]$ and $\bar{\Theta}[i]$, which are both locally-available information within $\mathcal{N}_k(i)$; and
\item[(ii)] more reluctant to deviate from the current bin as the difference becomes smaller (i.e., $M^j_k[i,i] \rightarrow 1$ and $M^j_k[i,l] \rightarrow 0 \ \forall l $ as $\bar{\mu}^{j}_k[i] \rightarrow \bar{\Theta}[i]$). 
\end{enumerate}

Our proposed framework utilises the difference between $\bar{\mu}^{j}_k[i]$ and $\bar{\Theta}[i]$ as a local-information-based feedback gain, denoted by $\bar{\xi}^j_k[i]$, which is a scalar in $(0,1]$ that monotonically decreases as $\bar{\mu}^{j}_k[i]$ converges to $\bar{\Theta}[i]$.
For instance, this paper uses 
\begin{equation}\label{eqn:xi}
\bar{\xi}^j_k[i]:= \begin{cases}
		( \frac{|\bar{\Theta}[i] - \bar{\mu}^{j}_k[i]|}{\bar{\Theta}[i]} )^{\alpha} & \text{if $|\bar{\Theta}[i] - \bar{\mu}^{j}_k[i]| \le {\bar{\Theta}[i]}$}\\
		\epsilon_{\xi} & \text{if $( \frac{|\bar{\Theta}[i] - \bar{\mu}^{j}_k[i]|}{\bar{\Theta}[i]} )^{\alpha} < \epsilon_{\xi}$}\\
		1	& \text{otherwise}
		\end{cases}
\end{equation}
where $\alpha > 0$ and $\epsilon_{\xi} > 0$ are design parameters. 
We call this gain \emph{primary local-feedback gain} because it is utilised to control the primary guidance matrix $P_k^j$ (shown in the next subsection). 

\begin{remark}
Equation (\ref{eqn:local_dist}) is equivalent to the $i$-th element of the following vector:
\begin{equation}\label{eqn:local_current_distribution}
{\bar{\mu}_k(i) = }\frac{1}{|\mathcal{A}_{\mathcal{N}_k(i)}|} \sum_{\forall j \in \mathcal{A}_{\mathcal{N}_k(i)}} a^j_k.
\end{equation}
Namely, ${\bar{\mu}^{\star}_k[l]} = {\bar{\mu}_k(i)[l]}$ if $l = i$. 
Here, we intentionally introduce Equation (\ref{eqn:local_current_distribution}) for ease of comparison with the information required for feedback gains in the existing literature (e.g., Equation (\ref{eqn:current_distribution})). 
From this, it is implied that, 
in order for each agent in bin $\mathcal{B}_i$ to estimate ${\bar{\mu}_k(i)[i]}$ (i.e., the current local swarm density ${\bar{\mu}^{\star}_k[i]}$),  
the set of other agents whose information is necessary is restricted within $\mathcal{A}_{\mathcal{N}_k(i)}$.
That is, each agent needs to have neither a large perception radius nor an extensive information consensus process over the entire agents. 

\end{remark}


\subsection{A LICA-based Closed-loop-type Framework}\label{sec:framework}

This subsection presents our closed-loop-type framework based on locally-available information feedbacks. 
The basic form of the stochastic decision policy for agent $j$ in bin $\mathcal{B}_i$ is such that 
\begin{equation}\label{eqn:P}
        M^j_k[i,l] := \begin{cases}
                        		(1 - \omega^j_{k}[i]) P^j_k[i,l] + \omega^j_{k}[i]S^j_k[i,l] &\text{if $l = i$}\\
                        		(1 - \omega^j_{k}[i]) P^j_k[i,l] &\text{$\forall l \neq i$}.
                        \end{cases}
\end{equation}
Here, $\omega^j_k[i] \in [0,1)$ is the weighting factor to have different weights on the agent's {primary decision policy} $P^j_k[i,l] \in \mathbb{P}$ and {secondary decision policy} $S^j_k[i,l] \in \mathbb{P}$. It is defined as
 \begin{equation}\label{eqn:eta}
\omega^j_{k}[i] := {\exp(-\tau^j k)} \cdot {G^j_k[i]}  
\end{equation}
where $\tau^j$ is a design parameter; and $G^j_k[i] \in [0,1]$ is \emph{secondary local-feedback gain}, which is based on the difference between $\bar{\mu}^j_k[i]$ and $\bar{\Theta}[i]$.
Note that $\omega_k^j[i]$ is mainly affected by $G^j_k[i]$, while diminishing as time instant $k$ goes to infinity.

Equation (\ref{eqn:P}) can be represented in matrix form as
\begin{equation}\label{eqn:P_matrix}
M^j_k = (I - W^j_k) P^j_k + W^j_k S^j_k,
\end{equation}
where 
$P^j_k \in \mathbb{P}^{n_{bin} \times n_{bin}}$ and $S^j_k \in \mathbb{P}^{n_{bin} \times n_{bin}}$ are row-stochastic matrices, called \emph{primary guidance matrix} and \emph{secondary guidance matrix}, respectively.  
$W^j_k \in \mathbb{R}^{n_{bin} \times n_{bin}}$ is a diagonal matrix such that $\mathrm{diag}(W^j_k) = (\omega^j_k[1],...,\omega^j_k[n_{bin}])$. 
The stochastic state vector of agent $j$ is governed by the Markov process:
\begin{equation}\label{eqn:markov}
x^j_{k+1} = x^j_{k}M^j_k.
\end{equation}



For now, we claim that, in order for this Markov system to achieve Desired Features \ref{goal.stability}-\ref{goal.by_local_info}, $P^j_k$ must satisfy the following requirements.

\begin{requirement} $P^j_k$ is a matrix with row sums equal to one, i.e., 
\begin{equation}\tag{R1}\label{const:stochastic}
\sum_{l=1}^{n_{bin}} P^j_k[i,l] = 1, \forall i.
\end{equation}
In fact, $P^j_k$ needs to be row-stochastic, for which it should further hold that $P^j_k[i,l] \ge 0$, $\forall i,l$. 
Note that this constraint is implied by (\ref{const:irreducible}), which will be introduced later. 
\end{requirement}

\begin{requirement}
All diagonal elements are positive, i.e., 
\begin{equation}\tag{R2}\label{const:positive_diag}
P^j_k[i,i] > 0, \forall i.
\end{equation} 
\end{requirement}

\begin{requirement}
The stationary distribution of $P^j_k$ is the desired swarm distribution $\Theta$, i.e., 
	\begin{equation}\tag{R3}\label{const:reversible}
		\sum_{i=1}^{n_{bin}} \Theta[i] P^j_k[i,l] = \Theta[l], \forall l. 
	\end{equation}	
	With consideration of (\ref{const:stochastic}), 
	this requirement can be fulfilled by $\Theta[i] P^j_k[i,l] = \Theta[l] P^j_k[l,i]$, $\forall i$.
	A Markov process satisfying this property is said to be \emph{reversible}.  
\end{requirement}

\begin{requirement} $P^j_k$ is irreducible such that
\begin{equation}\tag{R4}\label{const:irreducible}
\begin{aligned}
P^j_{k}[i,l] > 0 & \quad \text{  if $C_{k}[i,l] = 1$}. \\
P^j_{k}[i,l] = 0 & \quad \text{ otherwise}.
\end{aligned}
\end{equation}
Note that $C_k$ is already assumed to be irreducible in Assumption \ref{assum:communication_range}. 
\end{requirement}


\begin{requirement}
{$P^j_k$ becomes close to $I$ as $\bar{\mu}^j_k$ converges to $\bar{\Theta}$}, i.e., 
\begin{equation}\tag{R5}\label{const:identity}
P^j_k[i,i] \to 1 \ \text{as $\bar{\mu}^j_k[i] \to \bar{\Theta}[i]$ (or $\bar{\xi}^j_k[i] \to 0$), $\forall i$}.
\end{equation}
\end{requirement}


Depending on the objectives of a user, $P^j_k$, $S^j_k$, $\bar{\xi}^j_k[i]$ and $G^j_k[i]$ can be designed differently under given specific constraints. 
As long as $P^j_k$ holds (\ref{const:stochastic})-(\ref{const:identity}) for all time instant $k$ and all agent $j \in \mathcal{A}$, the aforementioned desired features are achieved. 
Note that 
(\ref{const:stochastic})-(\ref{const:irreducible}) are associated with Desired Feature \ref{goal.stability}, whereas
(\ref{const:identity}) is with Desired Feature \ref{goal.no_idle_switch}.
The detailed analysis will be described in the next subsection. 

Every agent executes the following algorithm at every time instant.  
The detail regarding Line \ref{line:M_k}-\ref{line:G_k} will be presented in Section \ref{sec:implementation_example}, which shows examples of how to implement this framework. 

\begin{algorithm}
\caption{Decision making of agent $j$ at time instant $k$}\label{algorithm}
\begin{algorithmic}[1]

	\Statex \emph{// Obtain the local information}
	\State Identify the current bin $\mathcal{B}_i$;
	\State Identify neighbour bins $\mathcal{N}_k(i)$ (and $C_k[i,l] \ \forall l$);
	\State Compute $\bar{\Theta}[i]$ using (\ref{eqn:local_dist_desired});	
	\State Obtain $\bar{\mu}_k^j[i]$;
	\Statex \emph{// Generate the stochastic decision policy}	
	\State Compute $\bar{\xi}_k^j[i]$ (using (\ref{eqn:xi})); \label{line:xi}		
	\State Compute $P^j_k[i,l]$ $\forall l$; \label{line:M_k}
	\State Compute $S^j_k[i,l]$ $\forall l$; \label{line:Q_k}
	\State Compute $G^j_k[i]$; \label{line:G_k} 
	\State Compute $\omega^j_k[i]$ using (\ref{eqn:eta});
	\State Compute $M^j_k[i,l]$ $\forall l$ using (\ref{eqn:P});
	\Statex \emph{// Individually behave based on the policy}		
	\State Generate a random number $z \in \mathrm{unif}[0,1]$; \label{line:SSA_start}
	\State Select bin $\mathcal{B}_q$ such that 
	\Statex \quad \quad \quad ${\sum_{l=1}^{q-1} M_k^j[i,l] \le z < \sum_{l=1}^{q} M^j_k[i,l]}$;		
	\State Move to the selected bin;\label{line:SSA_end}
\end{algorithmic}
\label{alg:task_selection}
\end{algorithm}


\subsection{Analysis}\label{sec:analysis}

We first show that the Markov process in Equation (\ref{eqn:markov}) holds Desired Feature \ref{goal.stability} under the assumption that $P^j_k$ satisfies the requirements (\ref{const:stochastic})-(\ref{const:irreducible}) for each time instant. 
The stochastic state of agent $j$ at time instant $k \ge k_0$, 
governed by the Markov process from an arbitrary initial state $x^j_{k_0}$, can be written as:
\begin{equation}\label{eqn:U}
x^j_{k} = {x}^j_{k_0} U^j_{k_0,k} := {x}^j_{k_0} M^j_{k_0} M^j_{k_0+1} \cdots M^j_{k-1}. 
\end{equation}
For ease of analysis, we assume that every agent $j$ knows any necessary information correctly, i.e., $\bar{\mu}^{j}_k[i] = \bar{\mu}^{\star}_k[i]$.

\begin{theorem}\label{thm:converge}
Provided that the requirements (\ref{const:stochastic})-(\ref{const:irreducible}) are satisfied for all time instants $k \ge k_0$, 
it holds that $\lim_{k \to \infty} {x}^j_k = \Theta$ pointwise for all agents, irrespective of the initial condition.
\end{theorem}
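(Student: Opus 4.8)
The plan is to prove the stronger statement that the forward product $U^j_{k_0,k}=M^j_{k_0}\cdots M^j_{k-1}$ of (\ref{eqn:U}) converges to the rank-one matrix $\mathbf{1}^{\top}\Theta$ whose every row equals $\Theta$. Since $x^j_{k_0}$ is stochastic, $x^j_{k_0}\mathbf{1}^{\top}=1$, so this immediately gives $x^j_k=x^j_{k_0}U^j_{k_0,k}\to x^j_{k_0}\mathbf{1}^{\top}\Theta=\Theta$ for every initial state. As preparation I would peel off the secondary term: writing $M^j_k=P^j_k+E^j_k$ with $E^j_k:=W^j_k(S^j_k-P^j_k)$, the weighting $\omega^j_k[i]=e^{-\tau^j k}G^j_k[i]$ of (\ref{eqn:eta}), with $G^j_k[i]\in[0,1]$ and $\tau^j>0$, yields the summable bound $\|E^j_k\|_\infty\le 2e^{-\tau^j k}$ in the maximum-absolute-row-sum norm $\|\cdot\|_\infty$. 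A one-line check on (\ref{eqn:P}) shows $M^j_k$ keeps the sign pattern of $P^j_k$: each off-diagonal entry equals $(1-\omega^j_k[i])P^j_k[i,l]$, positive exactly when $C_k[i,l]=1$ (as $1-\omega^j_k[i]>0$), and each diagonal entry stays positive. Hence $M^j_k$ is, like $P^j_k$, irreducible with positive diagonal and identical support.

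Next I would establish strong ergodicity of the base chain formed from the $P^j_k$ alone. Requirement (\ref{const:reversible}) supplies a common left-fixed vector, $\Theta P^j_k=\Theta$ for all $k$, hence $\Theta\,P^j_{k_0}\cdots P^j_{k-1}=\Theta$. For weak ergodicity I use (\ref{const:positive_diag}) and (\ref{const:irreducible}): since each $C_k$ is irreducible (Assumption~\ref{assum:communication_range}) and every $P^j_k$ carries self-loops, a reachability argument shows that any product of $n_{bin}-1$ consecutive $P^j_k$ is strictly positive---with self-loops the set reachable from a fixed state is nondecreasing, and strong connectivity forces at least one new state at each step until all $n_{bin}$ states are covered. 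A strictly positive block is scrambling, so its coefficient of ergodicity is strictly below one; by submultiplicativity of that coefficient, chaining such blocks sends the coefficient of the whole product to zero. Weak ergodicity collapses every row of $P^j_{k_0}\cdots P^j_{k-1}$ to a common vector, and the fixed-vector identity then pins that row to $\Theta$ (using $\Theta\mathbf{1}^{\top}=1$), giving $P^j_{k_0}\cdots P^j_{k-1}\to\mathbf{1}^{\top}\Theta$.

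I would then transfer this limit to the actual $M$-chain by a tail argument that exploits the summability established above. For any $m>k_0$, a telescoping expansion of $M^j_m\cdots M^j_{k-1}-P^j_m\cdots P^j_{k-1}$ into single-factor perturbations $E^j_t$, each flanked by stochastic (norm-one) blocks, gives $\|M^j_m\cdots M^j_{k-1}-P^j_m\cdots P^j_{k-1}\|_\infty\le 2\sum_{t\ge m}e^{-\tau^j t}=O(e^{-\tau^j m})$. Combined with the base-chain limit applied from start time $m$, this yields $\limsup_k\|M^j_m\cdots M^j_{k-1}-\mathbf{1}^{\top}\Theta\|_\infty=O(e^{-\tau^j m})$. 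Factoring $U^j_{k_0,k}=U^j_{k_0,m}\,(M^j_m\cdots M^j_{k-1})$ and using that the stochastic front block fixes $\mathbf{1}^{\top}\Theta$ (since $U^j_{k_0,m}\mathbf{1}^{\top}=\mathbf{1}^{\top}$), the same $O(e^{-\tau^j m})$ bound holds for $U^j_{k_0,k}$; letting $m\to\infty$ forces $U^j_{k_0,k}\to\mathbf{1}^{\top}\Theta$, which is the claim.

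I expect the genuine obstacle to sit inside the weak-ergodicity step. Strict positivity of the length-$(n_{bin}-1)$ block products only gives a coefficient of ergodicity below one block-by-block, whereas driving the coefficient of the infinite product to zero requires this bound to be \emph{uniform} across blocks. Requirements (\ref{const:stochastic})--(\ref{const:irreducible}) guarantee that the correct entries are positive but impose no uniform floor on their magnitude, so the uniform bound must come from elsewhere---either compactness (finitely many bins together with a continuous, bounded rule generating $P^j_k$) or the specific design, e.g. the cut-off $\epsilon_{\xi}$ in (\ref{eqn:xi}) keeping the feedback gains, and hence the off-diagonal transition masses, away from degeneracy. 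Securing this uniform coefficient is the delicate point; the remaining steps are essentially bookkeeping.
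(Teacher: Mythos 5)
Your route is genuinely different from the paper's. The paper proves Theorem \ref{thm:converge} by verifying the hypotheses of Seneta's results on inhomogeneous forward products: Lemma \ref{lemma:P_U} establishes that $M^j_k$ is irreducible with positive diagonal, has a uniform positive lower bound $\gamma$ on its positive entries, and is asymptotically homogeneous with respect to $\Theta$ (because $W^j_k\to\textbf{0}$ forces $M^j_k\to P^j_k$ and $\Theta P^j_k=\Theta$); strong ergodicity then follows from \cite[Theorem 4.15]{Seneta1981} and the limit vector is identified as $\Theta$ via \cite[Theorem 4.12 and its Corollary]{Seneta1981}. You instead build everything by hand: weak ergodicity of the $P$-chain from scrambling blocks of length $n_{bin}-1$, identification of the limit from the common left-invariant vector, and a summable-perturbation/telescoping argument to carry the conclusion over to the $M$-chain. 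Your decomposition $M^j_k=P^j_k+E^j_k$ with $\|E^j_k\|_\infty\le 2e^{-\tau^j k}$ is a quantitative substitute for the paper's qualitative "asymptotic homogeneity" step, and it buys an explicit $O(e^{-\tau^j m})$ tail estimate that the citation-based proof does not give; the paper's approach buys brevity and avoids the block-product bookkeeping. Both arguments are structurally sound.

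The "delicate point" you flag at the end is real, and it is worth noting that it is not specific to your approach: it is exactly condition (b) in the paper's proof, namely the uniform positive lower bound $0<\gamma\le\min^{+}_{i,l}M^j_k[i,l]$ over all $k$, which Seneta's Theorems 4.12 and 4.15 also require. Lemma \ref{lemma:P_U} asserts this as property 4) and justifies it only by "implied by either the property 2) or 3)", which gives positivity of the minimum for each fixed $k$ but not uniformity in $k$ --- and uniformity is genuinely at risk here, since (\ref{const:identity}) drives the off-diagonal entries of $P^j_k$ toward zero as $\bar{\mu}^j_k\to\bar{\Theta}$. In the concrete constructions the floor is supplied by the cut-off $\epsilon_{\xi}$ in (\ref{eqn:xi}) (and by $\epsilon_M$, $\Theta[l]$, $f(E_k[i,l])$ being bounded away from zero), exactly as you conjecture. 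So your proof is correct modulo the same unstated uniformity hypothesis the paper itself relies on; to close it cleanly you should either add "the positive entries of $P^j_k$ are uniformly bounded below" as an explicit standing assumption, or derive it from the $\epsilon_{\xi}$ floor in the specific implementations, and then your chained scrambling coefficients are uniformly below one and the argument goes through.
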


\begin{proof}
This claim can be proved by following similar steps in proving \cite[Theorem 4]{Bandyopadhyay2017}.
The claim is true if $\lim_{k \to \infty} x^j_k = {x}^j_{k_0} \cdot \lim_{k \to \infty} U^j_{k_0,k} =  {x}^j_{k_0} \cdot \textbf{1}^{\top} \Theta = \Theta$. 
In order for that, the matrix product $U^j_{k_0,k}$ should (i) be {strongly ergodic} and (ii) have $\Theta$ as its unique limit vector, i.e., $\lim_{k \to \infty} U^j_{k_0,k} = \textbf{1}^{\top}\Theta$.
We will show that the two conditions are valid under the assumption that (\ref{const:stochastic})-(\ref{const:irreducible}) are satisfied.

Lemma \ref{lemma:P_U} in Appendix describes the characteristics of $M^j_k$ and $U^j_{k_0,k}$, which will be used for the rest of this proof. 
From this lemma, (a) $U^j_{k_0,k}$ is {primitive} (thus, regular); 
(b) there exists a positive lower bound $\gamma$ for $M_k^j$, $\forall k$; and
(c) $M^j_k$ is {asymptotically homogeneous}.  
Then, from \cite[Theorem 4.15, p.150]{Seneta1981} it follows that $U^j_{k_0,k}$ is {strongly ergodic}, which fulfils the condition (i).

Let $\textbf{e}_k \in \mathbb{P}^{n_{bin}}$ be the unique stationary distribution vector corresponding to $M^j_k$ (i.e., $\textbf{e}_k M^j_k = \textbf{e}_k$). 
Due to the prior condition (b) and the fact that (d) $M^j_k$ is {irreducible} for $\forall k \ge k_0$, 
it follows from \cite[Theorem 4.12, p.149]{Seneta1981} that 
the {asymptotical homogeneity} of $M^j_k$ with respect to $\Theta$ (i.e., $\lim_{k \to \infty} \Theta M^j_k = \Theta$) is equivalent to 
$\lim_{k \to \infty} \textbf{e}_k = \textbf{e}$ and $\Theta = \textbf{e}$, where $\textbf{e}$ is a limit vector. 
According to \cite[Corollary, p.150]{Seneta1981}, under the prior conditions (b) and (d), 
if $U^j_{k_0,k}$ is {strongly ergodic} with its unique limit vector $\textbf{v}$, then $\textbf{v} = \textbf{e}$. 
Hence, it turns out that the unique limit vector of $U^j_{k_0,k}$ is $\Theta$ (i.e, $\lim_{k \to \infty} U^j_{k_0,k} = \textbf{1}^{\top}\Theta$). 
Thereby, the condition (ii) is also fulfilled.
\end{proof}

Theorem \ref{thm:converge} implies that the stochastic state of any agent eventually converges to the desired swarm distribution, regardless of $S^j_k$, $G^j_k[i]$ and (\ref{const:identity}). 
In other words, even if (\ref{const:identity}) is not satisfied, the Markov system can converge to $\Theta$.
However, the system induces unnecessary transitions of agents even after being close enough to the desired swarm distribution, which means that Desired Feature \ref{goal.no_idle_switch} does not hold.


For now, we present that Desired Feature \ref{goal.no_idle_switch} can be obtained by (\ref{const:identity}) and Theorem \ref{theorem.converge_pursuing_local}, which will be described later. 
Suppose that, for every bin $\mathcal{B}_i$, $\bar{\mu}^{\star}_k[i]$ converges to and eventually reaches $\bar{\Theta}[i]$ at some time instant $k$. 
The following shows that at this moment it also holds that ${\mu}^{\star}_k$ reaches ${\Theta}$. 
From Equations (\ref{eqn:local_dist})-(\ref{eqn:local_dist_desired}) and the supposition of $\bar{\mu}^{\star}_k[i] = \bar{\Theta}[i]$ $\forall i$, it follows that  
$1/\bar{\Theta}[i] \cdot n_k[i] = \sum_{\forall j:\mathcal{B}_j \in \mathcal{N}_k(i)} n_k[j]$ $\forall i$. 
This can be rearranged as:
\begin{equation}
	\label{Def_B_matrix}
	\textbf{n}_k \cdot B  := \textbf{n}_k  \cdot (C_k - X)  = \textbf{0}
\end{equation} 
where $X \in \mathbb{R}^{n_{bin} \times n_{bin}}$ is a diagonal matrix such that $\text{diag}(X) =  ({1}/{\bar{\Theta}[1]}, {1}/{\bar{\Theta}[2]}, ..., {1}/{\bar{\Theta}[n_{bin}]})$; 
$C_k$ is the communicational connectivity matrix (in Definition \ref{def:comm_connect}); and $\textbf{n}_k \in \mathbb{R}^{n_{bin}}$ is a row vector such that the $i$-th element indicates $n_k[i]$, i.e., the number of agents in bin $\mathcal{B}_i$ at time instant $k$.

\begin{figure}[tpb]
\centering
\subfloat[Tree-type]{\includegraphics[width=0.4\linewidth]{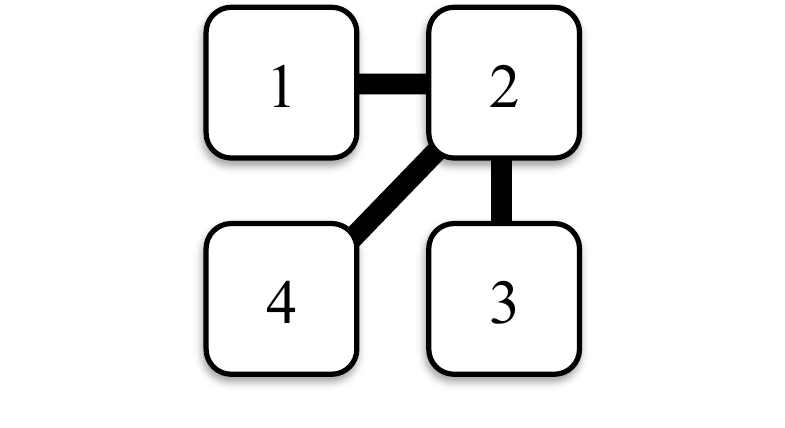} 
\label{fig.tree_type}}
\hfil
\subfloat[Strongly-connected]{\includegraphics[width=0.4\linewidth]{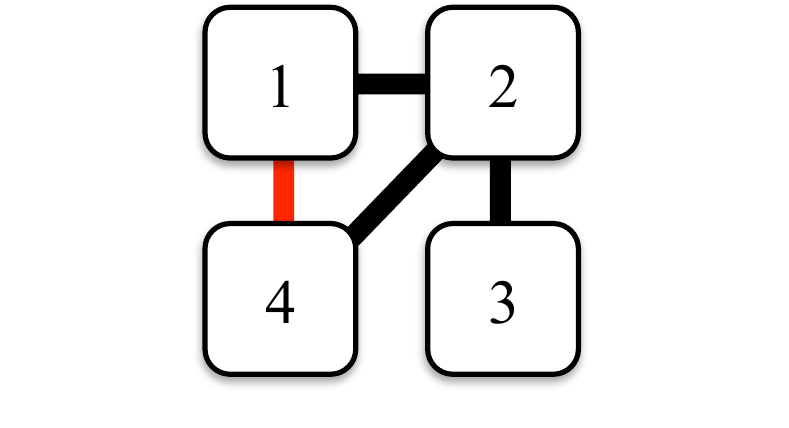}
\label{fig.tree_loop_type}}
\caption{Examples of simple bin topologies to help Lemma \ref{lemma.tree_type_rank} \& \ref{add_connect}: (a) tree-type; (b) strongly-connected. The red line in (b) indicates a newly-added route between bin $\mathcal{B}_1$ and $\mathcal{B}_4$ based on the topology in (a).}\label{fig.topology_example}
\end{figure}

\begin{lemma}
\label{lemma.tree_type_rank}
Given $n_{bin}$ bins communicationally-connected as a tree-type topology, the rank of its corresponding matrix $B$ in Equation (\ref{Def_B_matrix}) is $n_{bin}-1$. 
\end{lemma}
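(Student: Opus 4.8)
The plan is to compute the (right) null space of $B$ directly and show that it is exactly one-dimensional, which immediately yields $\mathrm{rank}(B) = n_{bin} - 1$. First I would write out the entries from Equation (\ref{Def_B_matrix}): for $i \neq l$ we have $B[i,l] = C_k[i,l]$ (equal to $1$ precisely when $\mathcal{B}_l$ and $\mathcal{B}_i$ are communicationally-connected), while on the diagonal, using $\mathcal{B}_i \in \mathcal{N}_k(i)$ together with the definition (\ref{eqn:local_dist_desired}) of $\bar{\Theta}[i]$,
\begin{equation*}
B[i,i] = 1 - \frac{1}{\bar{\Theta}[i]} = -\frac{1}{\Theta[i]} \sum_{l \in \mathcal{N}_k(i),\, l \neq i} \Theta[l].
\end{equation*}
The matrix $B$ is \emph{not} symmetric, so I would not treat it as a graph Laplacian directly; the key idea is a diagonal rescaling that exposes a hidden averaging (harmonic) structure.

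Concretely, let $D = \mathrm{diag}(\Theta[1],\dots,\Theta[n_{bin}])$, which is invertible since $\Theta[i] > 0$ for all $i$ by assumption. For a column vector $v$, substitute $v = D u$. A short computation then gives
\begin{equation*}
(Bv)[i] = \sum_{l \in \mathcal{N}_k(i),\, l \neq i} \Theta[l]\,\big(u[l] - u[i]\big),
\end{equation*}
so that $Bv = \mathbf{0}$ holds if and only if, for every bin $\mathcal{B}_i$, the value $u[i]$ equals the $\Theta$-weighted average of $u$ over the neighbouring bins. In particular $u = \mathbf{1}$ (equivalently $v = \Theta^{\top}$) is always a solution, so $\mathrm{rank}(B) \le n_{bin} - 1$; the real work is to show there are no other independent solutions.

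For the tree topology I would then argue by induction on $n_{bin}$ via leaf pruning. The base case $n_{bin} = 1$ is immediate. For the inductive step, a tree possesses a leaf $\mathcal{B}_i$ with a unique neighbour $\mathcal{B}_j$; the averaging equation at $\mathcal{B}_i$ collapses to $\Theta[j]\,(u[j] - u[i]) = 0$, forcing $u[i] = u[j]$ because $\Theta[j] > 0$. Substituting $u[i] = u[j]$ into the equation at $\mathcal{B}_j$ cancels the single term contributed by the leaf, leaving precisely the averaging system on the pruned tree (removing a leaf leaves the neighbourhoods of all remaining bins unchanged, and only $\mathcal{B}_j$ loses one neighbour). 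By the induction hypothesis $u$ is constant on the pruned tree, and $u[i] = u[j]$ extends that constant to $\mathcal{B}_i$. Hence every solution $u$ is constant, the null space of $B$ is spanned by $\Theta^{\top}$ alone, and $\mathrm{rank}(B) = n_{bin} - 1$.

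The main obstacle — and the only genuinely nontrivial step — is recognising the rescaling $v = D u$: because $B$ is asymmetric and its diagonal is not a plain weighted degree, the null structure is invisible until one re-reads $B$ as a weighted-averaging operator, after which the tree induction is routine. I would also remark that the same maximum-principle reasoning (a solution attaining its maximum at some bin forces every neighbour to share that maximum, and connectivity then propagates it to all bins) proves constancy on \emph{any} connected $C_k$, not only on trees; this observation is exactly what motivates and sets up the subsequent lemma concerning routes added to a strongly-connected topology.
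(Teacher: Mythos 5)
Your proof is correct, but it takes a genuinely different route from the paper's. The paper decomposes $B$ edge-by-edge as $B=\sum B_{(i,j)}$, observes that each summand has rank one with column space spanned by the incidence-type vector $v_{(i,j)}$ whose $i$-th and $j$-th entries are $-1/\Theta[i]$ and $1/\Theta[j]$, and concludes that the rank equals the number of edges, i.e.\ $n_{bin}-1$ for a tree; this is essentially a weighted incidence-matrix factorisation that bounds the column space from both sides. You instead compute the null space directly: the substitution $v=Du$ with $D=\mathrm{diag}(\Theta)$ converts $Bv=\mathbf{0}$ into the weighted-averaging system $\sum_{l\in\mathcal{N}_k(i),\,l\neq i}\Theta[l](u[l]-u[i])=0$, and leaf-pruning induction shows its solution space is exactly the constants, so the nullity is one. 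Both arguments are sound; yours has the advantage of being fully self-contained where the paper's step ``the number of linearly independent columns of $B$ equals the number of edges'' is asserted somewhat loosely, and your closing maximum-principle observation proves constancy of $u$ on \emph{any} connected topology, which would subsume Lemma~\ref{add_connect} and Corollary~\ref{col.rank_m_1} in one stroke. One minor inaccuracy: $B=C_k-X$ is in fact symmetric, since $C_k$ is symmetric and $X$ is diagonal (the paper's example matrix confirms this); what is true is that $B$ is not a standard graph Laplacian because its diagonal is not the negated row sum of its off-diagonal part, which is the real reason your rescaling is needed. This misstatement is purely motivational and does not affect the validity of any step. Note also that the paper's Equation~(\ref{Def_B_matrix}) concerns the left null space ($\mathbf{n}_k\cdot B=\mathbf{0}$) while you compute the right null space; for the rank claim of this lemma the two are interchangeable, and by symmetry of $B$ they in fact coincide.
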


\begin{proof}
	The matrix $B \in \mathbb{R}^{n_{bin} \times n_{bin}}$ can be linearly decomposed into $n_e$ of the same-sized matrices $B_{(i,j)}$, 
	where $n_e$ is the number of edges in the underlying graph of $C_k$. 
	Here, $B_{(i,j)} \in \mathbb{R}^{n_{bin} \times n_{bin}}$ is a matrix such that 
	$B_{(i,j)}[i,i] = -\Theta[j]/\Theta[i]$ and $B_{(i,j)}[j,j] = -\Theta[i]/\Theta[j]$; 
	$B_{(i,j)}[i,j] = B_{(i,j)}[j,i] = 1$; and all the other entries are zero. 	
	For example, consider that four bins are given and connected as shown in Figure \ref{fig.topology_example}(a). 
	Clearly, $B = B_{(1,2)} + B_{(2,3)} + B_{(2,4)}$, where 	
	\begin{equation*}
	B = 	 
	\begin{pmatrix} 
	-\frac{\Theta[2]}{\Theta[1]} & 1 & 0 & 0 \\
	1 & -\frac{\Theta[1] + \Theta[3] + \Theta[4]}{\Theta[2]} & 1 & 1 \\
	0 & 1 & -\frac{\Theta[2]}{\Theta[3]} & 0 \\
	0 & 1 & 0 & -\frac{\Theta[2]}{\Theta[4]} 		
	\end{pmatrix},
	\end{equation*}	

	\begin{equation*}
	B_{(1,2)} = 	 
	\begin{pmatrix} 
	-\frac{\Theta[2]}{\Theta[1]} & 1 & 0 & 0 \\
	1 & -\frac{\Theta[1]}{\Theta[2]} & 0 & 0 \\
	0 & 0 & 0 & 0 \\
	0 & 0 & 0 & 0		
	\end{pmatrix},
	\end{equation*}	

	\begin{equation*}
	B_{(2,3)} = 	 
	\begin{pmatrix} 
	0 & 0 & 0 & 0 \\
	0 & -\frac{\Theta[3]}{\Theta[2]} & 1 & 0 \\
	0 & 1 & -\frac{\Theta[2]}{\Theta[3]} & 0 \\
	0 & 0 & 0 & 0 	
	\end{pmatrix},
	\end{equation*}	

	\begin{equation*}
	B_{(2,4)} = 	 
	\begin{pmatrix} 
	0 & 0 & 0 & 0 \\
	0 & -\frac{\Theta[4]}{\Theta[2]} & 0 & 1 \\
	0 & 0 & 0 & 0 \\
	0 & 1 & 0 & -\frac{\Theta[2]}{\Theta[4]} 		
	\end{pmatrix}.
	\end{equation*}

	It is trivial that the rank of every $B_{(i,j)}$ is one, and the matrix has only one linearly independent column vector, denoted by $v_{(i,j)}$. 	
	Without loss of generality, we consider $v_{(i,j)} \in \mathbb{R}^{n_{bin}}$ as a column vector such that the $i$-th entry is $-\frac{1}{\Theta[i]}$, the $j$-th entry is $\frac{1}{\Theta[j]}$, and the others are zero: for an instance, $v_{(1,2)} = [-\frac{1}{\Theta[1]}, \frac{1}{\Theta[2]}, 0, 0]^{\top}$. 
	
	It is obvious that $v_{(i,j)}$ and $v_{(k,l)}$ are linearly independent when the bin pairs $\{i,j\}$ and $\{k,l\}$ are different. 
	This implies that the number of linearly independent column vectors of $B$ is the same as that of edges in the topology.
	Hence, for a tree-type topology of $n_{bin}$ bins, since there exist $n_{bin}-1$ edges, the rank of the corresponding matrix $B$ is $n_{bin}-1$. 
\end{proof}

\begin{lemma}
\label{add_connect}
Given a strongly-connected topology of bins, 
the rank of its corresponding matrix $B$ is not affected by adding a new edge that directly connects any two existing bins.  
\end{lemma}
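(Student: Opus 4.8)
The plan is to work entirely inside the edge-vector machinery set up in the proof of Lemma~\ref{lemma.tree_type_rank}. First I would pin down exactly how inserting a single new edge between bins $\mathcal{B}_p$ and $\mathcal{B}_q$ perturbs the matrix. Adding the edge flips $C_k[p,q]$ and $C_k[q,p]$ from $0$ to $1$, and it enlarges both $\mathcal{N}_k(p)$ and $\mathcal{N}_k(q)$, so that $1/\bar{\Theta}[p]$ grows by $\Theta[q]/\Theta[p]$ and $1/\bar{\Theta}[q]$ grows by $\Theta[p]/\Theta[q]$. These are precisely the four nonzero entries of the rank-one block $B_{(p,q)}$ from Lemma~\ref{lemma.tree_type_rank}, so the augmented matrix is exactly $B' = B + B_{(p,q)}$. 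The task thus reduces to showing that this rank-one perturbation does not change the rank.

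Next I would invoke the key identification from Lemma~\ref{lemma.tree_type_rank}, namely that the rank of such a $B$ equals the number of linearly independent edge vectors $v_{(i,j)}$, where $v_{(i,j)}$ carries $-1/\Theta[i]$ in its $i$-th entry and $1/\Theta[j]$ in its $j$-th entry. Letting $E$ denote the edge set of the original (strongly-connected) topology and $E' = E \cup \{(p,q)\}$ the augmented one, the claim becomes equivalent to proving that the new edge vector $v_{(p,q)}$ contributes nothing to $\mathrm{span}\{ v_e : e \in E \}$.

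The crux is a telescoping identity exploiting strong connectivity. Since the original topology is connected, there is a path $p = i_0, i_1, \ldots, i_m = q$ of existing edges joining the two bins, and summing the oriented edge vectors along it collapses to
\begin{equation*}
\sum_{t=0}^{m-1} v_{(i_t, i_{t+1})} = -\frac{1}{\Theta[i_0]}\mathbf{e}_{i_0} + \frac{1}{\Theta[i_m]}\mathbf{e}_{i_m} = v_{(p,q)},
\end{equation*}
because every intermediate contribution $\pm(1/\Theta[i_t])\mathbf{e}_{i_t}$ cancels. Hence $v_{(p,q)} \in \mathrm{span}\{ v_e : e \in E \}$, so $\mathrm{span}\{ v_e : e \in E' \} = \mathrm{span}\{ v_e : e \in E \}$, the number of linearly independent edge vectors is unchanged, and therefore $\mathrm{rank}(B') = \mathrm{rank}(B)$.

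The step I expect to be the genuine obstacle is not the telescoping but the rank identification itself: Lemma~\ref{lemma.tree_type_rank} only gives $\mathrm{col}(B) \subseteq \mathrm{span}\{v_e\}$ directly, and once a cycle is present the edge vectors are dependent, so one must secure equality rather than inclusion before the edge count can be trusted in the cyclic case. I would close this gap using the symmetry of $B$: writing $B_{(i,j)} = -\Theta[i]\Theta[j]\, v_{(i,j)} v_{(i,j)}^{\top}$ exhibits $-B' = \sum_{(i,j) \in E'} \Theta[i]\Theta[j]\, v_{(i,j)} v_{(i,j)}^{\top}$ as positive semidefinite, so that $x^{\top}(-B')x = 0$ forces $v_e^{\top} x = 0$ for every edge. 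Consequently the null space of $B'$ is exactly $\mathrm{span}\{v_e : e \in E'\}^{\perp}$, giving $\mathrm{rank}(B') = \dim \mathrm{span}\{v_e : e \in E'\}$ unconditionally and legitimising the edge-vector count in both the tree and the cyclic regimes.
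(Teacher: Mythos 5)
Your proposal is correct, and its core is the same reduction the paper uses: write the augmented matrix as $B' = B + B_{(p,q)}$ and show the lone edge vector $v_{(p,q)}$ of the new rank-one block already lies in the span of the existing edge vectors. Where you differ is in rigour, and the differences are worth noting. First, the paper only exhibits the two-edge instance $v_{(1,4)} = v_{(1,2)} + v_{(2,4)}$ on a four-bin example and then appeals to ``without loss of generality''; your telescoping sum along an arbitrary connecting path $p = i_0, \ldots, i_m = q$ is the honest general form of that step and costs nothing extra. Second, and more substantively, you correctly identify the gap that the paper's argument leaves open: Lemma \ref{lemma.tree_type_rank} establishes that the columns of $B$ lie in $\mathrm{span}\{v_e\}$, which bounds the rank from above by the number of independent edge vectors, but once a cycle is present the edge vectors are dependent and one needs the reverse inequality to conclude that the rank is preserved rather than merely not increased. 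Your observation that $B_{(i,j)} = -\Theta[i]\Theta[j]\, v_{(i,j)} v_{(i,j)}^{\top}$, so that $-B'$ is a weighted Gram matrix and hence positive semidefinite with null space exactly $\mathrm{span}\{v_e : e \in E'\}^{\perp}$, closes this cleanly and gives $\mathrm{rank}(B') = \dim \mathrm{span}\{v_e : e \in E'\}$ unconditionally. This buys more than the paper's proof does: it simultaneously tightens Lemma \ref{lemma.tree_type_rank} and makes Corollary \ref{col.rank_m_1} (and hence the nullity-one argument in Theorem \ref{theorem.converge_pursuing_local}) airtight for arbitrary strongly-connected topologies, not just those reachable from a tree by the informal induction the paper gestures at.
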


\begin{proof}
We will show that this claim is valid even when a tree-type topology is given, as it is a sufficient condition for strong-connectivity. 
Given the tree-type topology in Figure \ref{fig.topology_example}(a), suppose that bin $\mathcal{B}_1$ and $\mathcal{B}_4$ are newly connected. 
Then, the new topology becomes as shown in Figure \ref{fig.topology_example}(b), and it has new corresponding matrix $B_{new}$, 
where $B_{new} = B + B_{(1,4)}$. 
As explained in the proof of Lemma \ref{lemma.tree_type_rank}, 
the rank of $B_{(1,4)}$ is one and it has only linearly independent vector $v_{(1,4)}$. 
However, this vector can be produced as a linear combination of the existing $v$ vectors of $B$ (i.e., $v_{(1,4)} = v_{(1,2)} + v_{(2,4)}$).
Thus, the rank of $B_{new}$ retains that of $B$. 
Without loss of generality, this implies that
the rank of $B$ of a given strongly-connected topology is not affected by adding a new edge that directly connects any two existing bins.  
\end{proof}

Thanks to Lemma \ref{lemma.tree_type_rank} and \ref{add_connect}, 
we end up with the following corollary and theorem: 
\begin{corollary}
\label{col.rank_m_1}
Given $n_{bin}$ bins that are communicationally strongly-connected,
the rank of its corresponding $B$ is $n_{bin}-1$. 
\end{corollary}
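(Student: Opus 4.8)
The plan is to combine the two preceding lemmas through an edge-by-edge construction. First I would observe that any communicationally strongly-connected topology of $n_{bin}$ bins is, in the undirected sense induced by the symmetric matrix $C_k$, a connected graph, and therefore contains a spanning tree $T$ with exactly $n_{bin}-1$ edges. Writing $E$ for the full edge set and reusing the linear decomposition from the proof of Lemma \ref{lemma.tree_type_rank}, the matrix $B$ splits as
\begin{equation*}
B = B_T + \sum_{(i,j) \in E \setminus T} B_{(i,j)}, \qquad B_T := \sum_{(i,j) \in T} B_{(i,j)},
\end{equation*}
where $B_T$ is the matrix associated with the spanning tree alone.

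Next I would apply Lemma \ref{lemma.tree_type_rank} to $B_T$ to conclude that $\mathrm{rank}(B_T) = n_{bin}-1$, and then re-introduce the remaining edges of $E \setminus T$ one at a time. At every step of this process the already-constructed subgraph (the tree together with the edges added so far) is connected, so for the next edge $(i,j)$ there already exists a path between bins $\mathcal{B}_i$ and $\mathcal{B}_j$. By exactly the argument in the proof of Lemma \ref{add_connect} --- namely that the single independent column vector $v_{(i,j)}$ of $B_{(i,j)}$ is a linear combination of the $v$ vectors along that existing path --- adding the edge leaves the rank unchanged. Iterating over all of $E \setminus T$ therefore preserves the value $n_{bin}-1$, and after the last edge is added we recover the full matrix $B$; hence $\mathrm{rank}(B) = n_{bin}-1$.

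The only point that needs care is to ensure that Lemma \ref{add_connect} is genuinely applicable at each intermediate step, i.e. that the two endpoints of every newly re-introduced edge are already joined by a path in the current subgraph. This is guaranteed because the construction starts from a connected spanning tree and connectivity is never destroyed by adding edges, so I do not expect any substantive obstacle beyond this bookkeeping. As an independent consistency check one may verify directly that $\Theta B = \textbf{0}$ (using $1/\bar{\Theta}[i] = \sum_{l: \mathcal{B}_l \in \mathcal{N}_k(i)} \Theta[l] / \Theta[i]$), which confirms the upper bound $\mathrm{rank}(B) \le n_{bin}-1$; the two lemmas then supply the matching lower bound.
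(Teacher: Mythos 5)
Your proposal is correct and follows essentially the same route the paper intends: the corollary is stated as an immediate consequence of Lemmas \ref{lemma.tree_type_rank} and \ref{add_connect}, i.e.\ extract a spanning tree (rank $n_{bin}-1$ by Lemma \ref{lemma.tree_type_rank}) and re-insert the remaining edges one at a time without changing the rank (Lemma \ref{add_connect}). Your added observation that $\Theta \cdot B = \textbf{0}$ supplies the upper bound $\mathrm{rank}(B) \le n_{bin}-1$ is a harmless extra consistency check, not a departure from the paper's argument.
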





\begin{theorem}
\label{theorem.converge_pursuing_local}
Given $n_{bin}$ bins that are communicationally strongly-connected,
convergence of $\bar{\mu}^{\star}_k$ to $\bar{\Theta}$ is equivalent to convergence of $\mu_k^{\star}$ to $\Theta$. 
\end{theorem}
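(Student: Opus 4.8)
The plan is to reduce the claimed equivalence to a purely linear-algebraic statement about the left null space of the matrix $B = C_k - X$ from Equation (\ref{Def_B_matrix}). As already noted there, the exact condition $\bar{\mu}^{\star}_k[i] = \bar{\Theta}[i]$ for every $i$ is equivalent to $\textbf{n}_k B = \textbf{0}$, i.e. to $\textbf{n}_k$ lying in the left null space of $B$; and $\mu_k^{\star} = \Theta$ is equivalent to $\textbf{n}_k$ being a positive scalar multiple of $\Theta$, since $\mu_k^{\star}$ is just $\textbf{n}_k$ normalised by its entry-sum and $\Theta$ is stochastic. Hence the entire theorem hinges on pinning down this null space.

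First I would check that $\Theta$ itself lies in the left null space, i.e. $\Theta B = \textbf{0}$. This is a one-line substitution: using $\bar{\Theta}[l] = \Theta[l]/\sum_{i:\mathcal{B}_i \in \mathcal{N}_k(l)}\Theta[i]$ from Equation (\ref{eqn:local_dist_desired}), the $l$-th entry of $\Theta(C_k - X)$ equals $\sum_{i:\mathcal{B}_i \in \mathcal{N}_k(l)}\Theta[i] - \Theta[l]/\bar{\Theta}[l] = 0$. Next I would invoke Corollary \ref{col.rank_m_1}: since the bins are communicationally strongly-connected, $\mathrm{rank}(B) = n_{bin}-1$, so the left null space of $B$ is exactly one-dimensional. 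Together with $\Theta B = \textbf{0}$ this forces the null space to be $\mathrm{span}(\Theta)$. Consequently, for any $\textbf{n}_k$ with nonnegative entries, $\textbf{n}_k B = \textbf{0}$ holds if and only if $\textbf{n}_k = c\,\Theta$ for some $c > 0$, establishing the equivalence at the level of exact fixed points, namely $\bar{\mu}^{\star}_k = \bar{\Theta} \Leftrightarrow \mu_k^{\star} = \Theta$.

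Finally I would lift this fixed-point equivalence to the stated convergence claim via continuity and compactness. Writing the local-normalisation map $f(\mu)[i] = \mu[i]/\sum_{l:\mathcal{B}_l \in \mathcal{N}_k(i)}\mu[l]$, one has $\bar{\mu}^{\star}_k = f(\mu_k^{\star})$ and $\bar{\Theta} = f(\Theta)$, with $f$ continuous near $\Theta$ because every $\Theta[i] > 0$ keeps the denominators bounded away from zero. The forward direction is then immediate: $\mu_k^{\star}\to\Theta$ gives $\bar{\mu}^{\star}_k = f(\mu_k^{\star})\to f(\Theta) = \bar{\Theta}$. For the converse I would argue by contradiction: if $\bar{\mu}^{\star}_k\to\bar{\Theta}$ while $\mu_k^{\star}\not\to\Theta$, compactness of the probability simplex yields a subsequence $\mu_{k_j}^{\star}\to\mu^{*}\neq\Theta$; continuity then gives $f(\mu^{*}) = \lim_j\bar{\mu}^{\star}_{k_j} = \bar{\Theta}$, and the fixed-point equivalence above forces $\mu^{*} = \Theta$, a contradiction.

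I expect the main obstacle to be exactly this last upgrade, since it is the only genuinely analytic step and since the neighbour sets $\mathcal{N}_k(i)$ --- and therefore both $f$ and $B$ --- may depend on $k$. With the topology held fixed the subsequence argument closes the gap directly; in the time-varying case one exploits the fact that there are only finitely many strongly-connected $0/1$ topologies on $n_{bin}$ bins, so the characterisation of the null space as $\mathrm{span}(\Theta)$ holds uniformly over all of them and the compactness argument survives.
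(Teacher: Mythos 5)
Your core argument is the same as the paper's: both proofs reduce the statement to the observation that $\Theta \cdot B = \textbf{0}$ together with Corollary~\ref{col.rank_m_1}, so that the left null space of $B$ in Equation~(\ref{Def_B_matrix}) is exactly $\mathrm{span}(\Theta)$ and $\textbf{n}_k \cdot B = \textbf{0}$ forces $\textbf{n}_k = \epsilon\,\Theta$, i.e.\ $\mu_k^{\star} = \Theta$. Where you genuinely go beyond the paper is the last step: the paper only establishes the equivalence at the level of exact equality ($\bar{\mu}^{\star}_k = \bar{\Theta}$ at some instant $k$ implies $\mu_k^{\star} = \Theta$) and then asserts the convergence statement, whereas you explicitly lift the fixed-point equivalence to a statement about limits via the continuity of the local-normalisation map $f$ and a compactness/subsequence argument. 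That extra layer is worth having, since the theorem as stated is about convergence, not about equality at a finite time. One small caution in your converse direction: continuity of $f$ at a subsequential limit $\mu^{*}$ requires the neighbourhood sums $\sum_{l:\mathcal{B}_l\in\mathcal{N}(i)}\mu^{*}[l]$ to be positive, which is not automatic for an arbitrary point of the simplex. This is easily repaired: if $\mu^{*}$ had a zero entry, strong connectivity of $C_k$ yields a bin $\mathcal{B}_i$ with $\mu^{*}[i]=0$ but positive neighbourhood mass, whence $f(\mu^{*}_{k_j})[i]\to 0 \neq \bar{\Theta}[i]$ (using $\bar{\Theta}[i]>0$ from Assumption~1), contradicting $\bar{\mu}^{\star}_{k_j}\to\bar{\Theta}$; hence every subsequential limit has strictly positive entries and your continuity argument applies. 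Your closing remark about the $k$-dependence of $\mathcal{N}_k(i)$ and the finiteness of admissible topologies is a legitimate point that the paper does not address at all.
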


\begin{proof}
From Equation (\ref{eqn:local_dist_desired}), it can be said that $\Theta \cdot B = \textbf{0}$.
When $\bar{\mu}^{\star}_k[i]$ is assumed to converge to $\bar{\Theta}[i]$ at some time instant $k$ for every bin $\mathcal{B}_i$, 
Equation (\ref{Def_B_matrix}) is valid (i.e., $\textbf{n}_k \cdot B = 0$). 
Since the nullity of $B$ is one, due to Corollary \ref{col.rank_m_1}, there is only one linearly-independent row-vector $\textbf{a} \in \mathbb{R}^{n_{bin}}$ such that $\textbf{a} \cdot  B  = \textbf{0}$. 
Hence, it is obvious that $\textbf{n}_k = \epsilon \cdot \Theta$, where $\epsilon$ is an arbitrary scalar value.  
This implies that $\mu^{\star}_k[i] = n_k[i]/n^{\mathcal{A}}_k = \Theta[i], \ \forall i: \mathcal{B}_i \in \mathcal{B}$.
Therefore, convergence of $\mu_k^{\star}$ to $\Theta$ is equivalent to convergence of $\bar{\mu}^{\star}_k$ to $\bar{\Theta}$. 
\end{proof}

From this theorem and (\ref{const:identity}), Desired Feature \ref{goal.no_idle_switch} finally holds. 

\begin{corollary}\label{lemma:M_k}
If $P^j_k$ satisfies (\ref{const:identity}), it can be said from Theorem \ref{theorem.converge_pursuing_local} that $P^j_k$ becomes $I$ as $\mu_k^{\star}$ converges to $\Theta$. 
And this is also the case for the Markov process $M^j_k$, which satisfies Desired Feature \ref{goal.no_idle_switch}.
\end{corollary}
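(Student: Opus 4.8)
The plan is to chain the two convergence results already established with the structural requirement (\ref{const:identity}) and the decaying weighting factor in Equation (\ref{eqn:eta}). First I would invoke Theorem \ref{thm:converge}, which guarantees $x^j_k \to \Theta$ pointwise for every agent; combined with the large-number assumption this yields $\mu_k^{\star} \to \Theta$. By Theorem \ref{theorem.converge_pursuing_local}, convergence of $\mu_k^{\star}$ to $\Theta$ is equivalent to convergence of the local density $\bar{\mu}^{\star}_k$ to $\bar{\Theta}$, so $\bar{\mu}^{\star}_k[i] \to \bar{\Theta}[i]$ for every bin $\mathcal{B}_i$. Under the perfect-information assumption $\bar{\mu}^j_k[i] = \bar{\mu}^{\star}_k[i]$ adopted throughout the analysis, each agent's estimate then also satisfies $\bar{\mu}^j_k[i] \to \bar{\Theta}[i]$.

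Next I would apply (\ref{const:identity}), which asserts $P^j_k[i,i] \to 1$ precisely when $\bar{\mu}^j_k[i] \to \bar{\Theta}[i]$. Having established the antecedent for every $i$, I conclude $P^j_k[i,i] \to 1$ for all $i$. Because $P^j_k$ is row-stochastic by (\ref{const:stochastic}), each diagonal entry tending to one forces the corresponding off-diagonal entries $P^j_k[i,l]$, $l \neq i$, to vanish. Hence $P^j_k \to I$ as $\mu_k^{\star}$ converges to $\Theta$, which is the first claim of the corollary.

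For the second claim I would return to the matrix form $M^j_k = (I - W^j_k) P^j_k + W^j_k S^j_k$ in Equation (\ref{eqn:P_matrix}). The weighting factor $\omega^j_k[i] = \exp(-\tau^j k) \cdot G^j_k[i]$ is dominated by its exponential decay term: since $G^j_k[i] \in [0,1]$ is bounded and $\tau^j > 0$, we have $\omega^j_k[i] \to 0$ as $k \to \infty$, so $W^j_k \to \textbf{0}$ and $I - W^j_k \to I$. Because $S^j_k$ is row-stochastic its entries remain bounded, whence $W^j_k S^j_k \to \textbf{0}$. Combining these with $P^j_k \to I$ gives $M^j_k \to I$. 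Therefore the Markov matrix settles to the identity once the swarm reaches $\Theta$, the agents cease switching bins, and Desired Feature \ref{goal.no_idle_switch} is fulfilled.

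The main obstacle is not the final limit computation, which is routine, but the rigorous passage from the global convergence of $\mu_k^{\star}$ to the per-bin convergence of the local densities. This step is exactly what Theorem \ref{theorem.converge_pursuing_local} supplies through the rank argument on the matrix $B$: without the equivalence it provides, $\mu_k^{\star} \to \Theta$ would not on its own guarantee $\bar{\mu}^j_k[i] \to \bar{\Theta}[i]$ for each individual bin, and (\ref{const:identity}) could not be triggered uniformly over $i$. I would also emphasise that $P^j_k \to I$ alone is insufficient for $M^j_k \to I$; the decaying weighting factor is what suppresses the secondary policy $S^j_k$, so both mechanisms must be invoked together.
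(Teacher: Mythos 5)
Your proof is correct and follows exactly the chain the paper intends: Theorem \ref{theorem.converge_pursuing_local} converts global convergence of $\mu_k^{\star}$ into per-bin convergence of $\bar{\mu}^j_k[i]$ to $\bar{\Theta}[i]$, requirement (\ref{const:identity}) together with row-stochasticity then forces $P^j_k \to I$, and the decay of $W^j_k$ (the same observation the paper makes in the proof of Lemma \ref{lemma:P_U}, property 5) transfers the limit to $M^j_k$. The paper states this corollary without a written proof, and your argument supplies precisely the implicit reasoning, so there is nothing to add.
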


In order for each agent $j$ in bin $\mathcal{B}_i$ to generate the time-varying stochastic decision policy $M^j_k[i,l]$ in Equation (\ref{eqn:P}), 
the agent only needs to obtain its local information within $\mathcal{A}_{\mathcal{N}_k(i)}$. 
Therefore, Desired Feature \ref{goal.by_local_info} is also achieved.

%
%
%

\begin{remark}[\emph{Robustness against dynamic changes of agents and those of bins}]
The proposed framework is robust with against dynamic changes in the number of agents and bins. 
Similarly to what is claimed in \cite[Remark 8]{Bandyopadhyay2017}, as each agent behaves based on its current bin location and local information in a memoryless manner, Desired Features \ref{goal.stability}-\ref{goal.by_local_info} in the proposed framework won't be affected by inclusion or exclusion of agents in a swarm.     
Furthermore, provided changes on bins are perceived by at least nearby agents in the corresponding neighbour bins, robustness against those changes can be hold in the proposed framework.
This is because agents in bin $\mathcal{B}_i$ utilise only local information such as $\bar{\Theta}[i]$ and $\bar{\mu}_k^j[i]$, and are not required to know any information from other far-away bins. 
Moreover, the proposed framework does not need to recalculate $\Theta$, reflecting such changes on bins, so that $\sum_{\forall i} \Theta[i] = 1$ because computing $\bar{\Theta}[i]$ in (\ref{eqn:local_dist_desired}) includes normalisation of $\Theta$. 
\end{remark}

\section{Implementation Examples}\label{sec:implementation_example}
\subsection{Example I: Minimising Travelling Expenses}\label{sec:min_cost}
This section provides examples on implementations of the framework proposed. 
In particular, this subsection addresses a problem of minimising travelling expenses of agents during convergence to a desired swarm distribution. 
   
%

This problem can be defined as: given a cost matrix $E_k \in \mathbb{R}^{n_{bin} \times n_{bin}}$ in which each element ${E}_k[i,l]$ represents the travelling expense of an agent from bin $\mathcal{B}_i$ to $\mathcal{B}_l$, find $P^j_k$ such that
\begin{equation}\tag{P1}\label{problem_min_cost}
\min \sum_{i=1}^{n_{bin}} \sum_{l=1}^{n_{bin}} {E}_k[i,l]P^j_k[i,l]
\end{equation}
subject to (\ref{const:stochastic})-(\ref{const:identity}) and 
	\begin{equation}\label{lp:offdiagonals}
	\begin{split}		
		\epsilon_M {\Theta[l]} {f(\bar{\xi}^j_k[i],\bar{\xi}^j_k[l])} f(E_k[i,l]) \le P^j_k[i,l] \\
		 \text{if $C_k[i,l] = 1$, $\forall i \neq l$}		
		 \end{split}
	\end{equation}	
where $\epsilon_M \in (0,1]$ is a design parameter. 
$f(\bar{\xi}^j_k[i],\bar{\xi}^j_k[l]) \in (0,1]$ is set by 
\begin{equation}\label{eqn:ex_xi}
{f(\bar{\xi}^j_k[i],\bar{\xi}^j_k[l])} = \max(\bar{\xi}^j_k[i],\bar{\xi}^j_k[l])
\end{equation}
so that the value monotonically increases with regard to increase of either $\bar{\xi}^j_k[i]$ or $\bar{\xi}^j_k[l]$ and diminishes as $\bar{\xi}^j_k[i]$ and $\bar{\xi}^j_k[l]$ simultaneously reduces.
This value controls the lower bound of $P^j_k[i,l]$ in Equation (\ref{lp:offdiagonals}). 
$\Theta[l]$ enables agents in bin $\mathcal{B}_i$ to be distributed over its neighbour bins in proportion to the desired swarm distribution. 
$f(E_k[i,l]) \in (0,1]$ is a scalar that monotonically decreases as $E_k[i,l]$ increases (see Equation (\ref{eqn:f_E_k}) for instances),
encouraging agents in bin $\mathcal{B}_i$ to avoid spending higher transition expenses. 
Note that we assume that $E_k$ is symmetric; $E_k[i,l] > 0$ if $A_k[i,l] = 1$; and its diagonal entries are zero.

\begin{corollary}
The optimal matrix $P^j_k$ of the problem (\ref{problem_min_cost}) is given by:
$\forall i, l \in \{1,...,n_{bin}\}$ and $i \neq l$, 
\begin{equation}\label{eqn:optimal_offdiag}
P^j_k[i,l] = \begin{cases}	
	\epsilon_M {\Theta[l]} {f(\bar{\xi}^j_k[i],\bar{\xi}^j_k[l])} f(E_k[i,l])    \quad   \text{if $C_k[i,l] = 1$}\\
	0 \quad \quad \quad \quad \quad \quad \quad \quad \quad\quad\quad\quad\quad  \text{ otherwise}
\end{cases}
\end{equation}
and $\forall i = l$, 
\begin{equation}\label{eqn:optimal_diag}
P^j_k[i,i] = 1 - \sum_{\forall l \neq i} P^j_k[i,l].
\end{equation}
\end{corollary}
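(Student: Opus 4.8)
The plan is to argue feasibility and optimality separately: first show that the matrix defined by (\ref{eqn:optimal_offdiag})--(\ref{eqn:optimal_diag}) lies in the feasible set of (\ref{problem_min_cost}), and then show that it attains the minimum of the (linear) objective. For feasibility I would check the requirements (\ref{const:stochastic})--(\ref{const:identity}) together with the bound (\ref{lp:offdiagonals}) in turn. Requirement (\ref{const:stochastic}) holds by construction, since the diagonal entries in (\ref{eqn:optimal_diag}) are chosen precisely to make each row sum to one. For (\ref{const:positive_diag}) I would bound the off-diagonal row sum: because $\epsilon_M \le 1$, $\Theta[l] > 0$ with $\sum_l \Theta[l] = 1$, and both $f(\bar{\xi}^j_k[i],\bar{\xi}^j_k[l]) \le 1$ and $f(E_k[i,l]) \le 1$, one obtains $\sum_{l \neq i} P^j_k[i,l] \le \epsilon_M(1-\Theta[i]) < 1$, whence $P^j_k[i,i] > 0$. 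Requirement (\ref{const:irreducible}) is immediate from the case split in (\ref{eqn:optimal_offdiag}), as the off-diagonal entries are strictly positive exactly when $C_k[i,l]=1$ and zero otherwise. For (\ref{const:identity}) I would note that the off-diagonal entries are dominated by the factor $f(\bar{\xi}^j_k[i],\bar{\xi}^j_k[l]) = \max(\bar{\xi}^j_k[i],\bar{\xi}^j_k[l])$ from (\ref{eqn:ex_xi}), which vanishes as the local-feedback gains tend to zero, so that every off-diagonal entry of row $i$ goes to zero and $P^j_k[i,i] \to 1$.

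The crux of feasibility is the reversibility condition (\ref{const:reversible}). Here I would exploit the deliberate symmetry engineered into the lower bound in (\ref{lp:offdiagonals}). Multiplying the proposed $P^j_k[i,l]$ by $\Theta[i]$ gives $\epsilon_M \Theta[i]\Theta[l]\, f(\bar{\xi}^j_k[i],\bar{\xi}^j_k[l])\, f(E_k[i,l])$, while multiplying $P^j_k[l,i]$ by $\Theta[l]$ gives $\epsilon_M \Theta[i]\Theta[l]\, f(\bar{\xi}^j_k[l],\bar{\xi}^j_k[i])\, f(E_k[l,i])$. Because $\max(\cdot,\cdot)$ is symmetric in its arguments and $E_k$ is symmetric, these two expressions coincide, so $\Theta[i]P^j_k[i,l] = \Theta[l]P^j_k[l,i]$ and hence (\ref{const:reversible}) holds with equality. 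This is exactly the point of placing the $\Theta[l]$ weight and the symmetric $f(\bar{\xi}^j_k[i],\bar{\xi}^j_k[l])$ inside the bound: the per-entry lower bounds are themselves reversibility-compatible, so they can all be met simultaneously.

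For optimality, I would observe that since $E_k$ has zero diagonal, the objective in (\ref{problem_min_cost}) reduces to $\sum_{i \neq l} E_k[i,l] P^j_k[i,l]$, a linear function with nonnegative coefficients. For any feasible matrix the constraint (\ref{lp:offdiagonals}) forces each connected off-diagonal entry to be at least its lower bound, and (\ref{const:irreducible}) forces the non-connected ones to zero; consequently the objective of any feasible matrix is bounded below by the objective evaluated at the proposed matrix. Since the proposed matrix is itself feasible (established above) and attains this bound entrywise, it is globally optimal. I expect the only delicate step in the whole argument to be the verification of (\ref{const:reversible}): reversibility is an equality that couples the $(i,l)$ and $(l,i)$ entries, and it is the sole obstacle to driving every off-diagonal down to its independent lower bound. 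The symmetric design of the bound dissolves exactly this coupling, which is what makes the coordinate-wise minimiser feasible and therefore optimal.
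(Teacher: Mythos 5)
Your proposal is correct and follows essentially the same route as the paper: show the candidate matrix attains the entrywise lower bounds of (\ref{lp:offdiagonals}) (which, together with the zero diagonal of $E_k$, makes it optimal for the relaxed problem), then verify that it also satisfies (\ref{const:stochastic})--(\ref{const:identity}) and is therefore feasible, hence optimal for the full problem. Your explicit check of reversibility via the symmetry of $\max(\cdot,\cdot)$ and of $E_k$ spells out a step the paper states without detail, but it is the same argument.
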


\begin{proof}
We can prove this by following the proof of \cite[Corollary 1]{Bandyopadhyay2017}.
Suppose that the problem is only subject to (\ref{const:irreducible}) and (\ref{lp:offdiagonals}), without (\ref{const:stochastic})-(\ref{const:reversible}) and (\ref{const:identity}). 
Then, the off-diagonal elements of an optimal matrix should be their corresponding lower bounds in (\ref{lp:offdiagonals}) if $C_k[i,l] = 1$.
The diagonal elements of the matrix do not affect the objective function due to the fact that $E_k[i,i] = 0, \forall i$. 
Accordingly, the matrix $P^j_k$ that holds (\ref{eqn:optimal_offdiag}) and (\ref{eqn:optimal_diag}) is also an optimal matrix for the simplified problem.

Let us now consider (\ref{const:stochastic})-(\ref{const:reversible}) and (\ref{const:identity}).
Since $\epsilon_M$, $f(\bar{\xi}^j_k[i],\bar{\xi}^j_k[l])$ and $f(E_k[i,l])$ are upper-bounded by $1$ and $\sum_{\forall l \neq i} \Theta[l] < 1$, $P^j_k[i,i]$ in  (\ref{eqn:optimal_diag}) is always positive for all $i$, which fulfils  (\ref{const:positive_diag}). 
It is also obvious that (\ref{const:stochastic}) is satisfied by Equation (\ref{eqn:optimal_diag}). 
From Equation (\ref{eqn:optimal_offdiag}), it holds that $\Theta[i] P^j_k[i,l] = \Theta[l] P^j_k[l,i]$, complying with (\ref{const:reversible}).
Since (\ref{const:stochastic})-(\ref{const:irreducible}) are satisfied, the Markov process is converging to a desired distribution due to Theorem \ref{thm:converge}. 
Noting that $f(\bar{\xi}^j_k[i],\bar{\xi}^j_k[l])$ diminishes as $\bar{\xi}_k^j$ gets close to zero (i.e., $\bar{\mu}_k^{j} \to \bar{\Theta}$), (\ref{const:identity}) is also fulfilled by  Equation (\ref{eqn:optimal_diag}).
Hence, $P^j_k$ is the optimal solution for the problem (\ref{problem_min_cost}). 
\end{proof}

For reducing unnecessary transitions of agents during this process, 
it is favourable that agents in bin $\mathcal{B}_i$ such that $\bar{\mu}^j_k[i] \le \bar{\Theta}[i]$ (i.e., underpopulated) do not deviate.
To this end, we set 
$S^j_k = I$ and $G^j_k[i]$ as follows \cite{Bandyopadhyay2017}:
\begin{equation}\label{eqn:G_k_P1}
G^j_k[i] := \frac{\exp(\beta (\bar{\Theta}[i] - \bar{\mu}^j_k[i]) )}{\exp(\beta |\bar{\Theta}[i]-\bar{\mu}^j_k[i]| )}.
\end{equation}
The gain value is depicted in Figure \ref{fig.G_P1P3}(a) with regard to $\beta$. 

\begin{remark}[\emph{Increase of Convergence Rate}]\label{remark3}
Due to the fact that $\sum_{\forall l \neq i} P^j_k[i,l] \le \sum_{\forall l:B_l \in \mathcal{N}_k(i) \setminus \mathcal{B}_i} \Theta[l]$ from Equation (\ref{eqn:optimal_offdiag}), 
the total outflux of agents from bin $\mathcal{B}_i$ becomes smaller as the bin has fewer connections with other bins. 
This eventually makes the convergence rate of the Markov process slower. 

Adding an additional variable into $P^j_k[i,l]$ in (\ref{eqn:optimal_offdiag}) does not affect the obtainment of Desired Features \ref{goal.stability}-\ref{goal.by_local_info} as long as $P^j_k$ satisfies (\ref{const:stochastic})-(\ref{const:identity}). 
Thus, in order to enhance the convergence rate under the requirements, 
one can add 
\begin{equation}
\epsilon_{{\Theta}} := \min \{ \frac{1}{\sum_{\forall s: \mathcal{B}_s \in \mathcal{N}_k(i) \setminus \mathcal{B}_i}  \Theta[s]}, \frac{1}{\sum_{\forall s: \mathcal{B}_s \in \mathcal{N}_k(l)  \setminus \mathcal{B}_l }  \Theta[s]}  \}
\end{equation}
into $P^j_k[i,l]$, as follows: 
\begin{equation}\label{eqn:optimal_offdiag2}
P^j_k[i,l] = \begin{cases}	
	\epsilon_{{\Theta}}  \epsilon_M {\Theta[l]} {f(\bar{\xi}^j_k[i],\bar{\xi}^j_k[l])} f(E_k[i,l])\\
	    \quad \quad \quad \quad \quad \quad \quad \quad \quad\quad\quad\quad\quad    \text{if $C_k[i,l] = 1$}\\
	0 \quad \quad \quad \quad \quad \quad \quad \quad \quad\quad\quad\quad  \text{ otherwise,}
\end{cases}
\end{equation}
which can be substituted for Equation (\ref{eqn:optimal_offdiag}). 
\end{remark}

\begin{algorithm}
\caption{Subroutine of Algorithm \ref{algorithm} (Line \ref{line:M_k}--\ref{line:G_k}) for P1}\label{algorithm_P1}
\begin{algorithmic}[1]

	\State Compute $P^j_k[i,l]$ $\forall l$ using (\ref{eqn:optimal_offdiag}) (or (\ref{eqn:optimal_offdiag2})) and (\ref{eqn:optimal_diag});
	\State Set $S^j_k[i,i] = 1$ and $S^j_k[i,l] = 0, \forall l \neq i$
	\State Compute $G^j_k[i]$ using (\ref{eqn:G_k_P1});	

\end{algorithmic}
\label{alg:task_selection}
\end{algorithm}

\subsection{Example II: Maximising Convergence Rate within Upper Flux Bounds}\label{sec:max_conv}


This subsection presents an example in which the specific objective is to maximise the convergence rate under upper bounds regarding transitions of agents between bins, denoted by \emph{upper flux bounds}. 
The bounds can be interpreted as safety constraints in terms of collision avoidance and congestion: higher congestions may induce higher collisions amongst agents, which may bring unfavourable effects on system performance. 
A similar problem is addressed by an open-loop-type algorithm in \cite{Berman2009}, where transitions of agents are limited only at a desired swarm distribution.  
This restriction is not for considering the aforementioned safety constraints, but rather for mitigating the trade-off between convergence rate and long-term system efficiency. 

For the sake of imposing upper flux bounds during the entire process, 
we consider the following one-way flux constraint:
\begin{equation}\label{const:flux_upper_bound}
n_k[i] P^j_k[i,l] \le c_{(i,l)}, \quad \forall i, \forall l \neq i. 
\end{equation}
This means that the number of agents moving from bin $\mathcal{B}_i$ to $\mathcal{B}_l$ is upper-bounded by $c_{(i,l)}$. 
The bound value is assumed to be very small with consideration of mission environments such as the number of agents, the number of bins, and their topology. 
Otherwise, all the agents can be distributed over the bins very soon so that the upper flux bounds become meaningless, and thus the corresponding problem can be trivial. 

Regarding the convergence rate of a Markov chain, there are respective analytical methods depending on whether it is time-homogeneous or time-inhomogeneous. 
For a time-homogeneous Markov chain, if the matrix is irreducible, 
the second largest eigenvalue of the matrix 
is used as an index indicating its asymptotic convergence rate \cite[p.389]{Sebbane2011}.
In contrast, for a time-inhomogeneous Markov chain, \emph{coefficients of ergodicity} can be utilised as a substitute for the second largest eigenvalue, which is not useful for this case \cite{Ipsen2011}. 
Particularly, this paper uses the following \emph{proper} coefficient of ergodicity, amongst others:
\begin{definition}(\emph{Coefficient of Ergodicity} \cite[pp. 136--137]{Seneta1981}).
Given a stochastic matrix $\mathcal{M} \in \mathbb{P}^{n \times n}$, a (proper) coefficient of ergodicity $0 \le \tau(\mathcal{M}) \le 1$ can be defined as:
\begin{equation}\label{eqn:coeffi_ergodic}
\tau(\mathcal{M}) := \max_{\forall s}\max_{\forall i, \forall l} |\mathcal{M}[i,s] - \mathcal{M}[l,s]|.
\end{equation}
A coefficient of ergodicity is said to be \emph{proper} if $\tau(\mathcal{M}) = 0$ if and only if $\mathcal{M} = \textbf{1}^{\top}\cdot v$, where $v \in \mathbb{P}^n$ is a row-stochastic vector. 
\end{definition} 

The convergence rate of a time-inhomogeneous Markov chain $\mathcal{M}_k \in \mathbb{P}^{n \times n}$, $\forall k > 1$ can be maximised by minimising $\tau(\mathcal{M}_k)$ at each time instant $k$, thanks to \cite[Theorem 4.8, p.137]{Seneta1981}: 
$\tau(\mathcal{M}_1\mathcal{M}_2 \cdots \mathcal{M}_r) \le \prod_{k=1}^{r} \tau(\mathcal{M}_k)$. 
Hence, the objective of the specific problem considered in this subsection can be defined as: find $P^j_k$ such that
\begin{equation} \label{eqn:problem2_original}
\mathrm{min} \ \tau(P^j_k)
\end{equation}
subject to (\ref{const:stochastic})-(\ref{const:identity}) and (\ref{const:flux_upper_bound}).

\begin{remark}[\emph{Advantages of the coefficient of ergodicity in (\ref{eqn:coeffi_ergodic})}]
Other proper coefficients in \cite[p. 137]{Seneta1981} such as 
\begin{equation*}
\tau_1(\mathcal{M}) = 1 - \min_{i,l} \sum_{\forall s} \min \left( \mathcal{M}[i,s], \mathcal{M}[l,s] \right)
\end{equation*} or
\begin{equation*}
\tau_2(\mathcal{M}) = 1 - \sum_{\forall s} \min_{\forall i} \left( \mathcal{M}[i,s] \right).
\end{equation*} 
may have the trivial case such that $\tau_1(P^j_k) = 1$ (or $\tau_2(P^j_k) = 1$) for some time instant $k$, when they are applied to this problem.  
This is because, given a strongly-connected topology $C_k$, there may exist a pair of bins $\mathcal{B}_i$ and $\mathcal{B}_l$ such that $P^j_k[i,s] = 0$ or $P^j_k[l,s]=0$, $\forall s$. 
To avoid this trivial case, the work in \cite{Bandyopadhyay2017} instead utilises $\tau_1 ( (P^j_k)^{d_{C_k}} )$ as the proper coefficient of ergodicity, 
where $d_{C_k}$ denotes the diameter of the underlying graph of $C_k$. 
However, this implies that agents in bin $\mathcal{B}_i$ are required to additionally access the information from other bins beside $\mathcal{N}_k(i)$, causing additional communicational costs. The coefficient of ergodicity in (\ref{eqn:coeffi_ergodic}) does not suffer this issue. 
Note that $\tau(\mathcal{M}) \le \tau_1(\mathcal{M}) \le \tau_2(\mathcal{M})$\cite[p. 137]{Seneta1981}. 
\end{remark}

Finding the optimal solution for the problem (\ref{eqn:problem2_original}) is another challenging issue, which can be called \emph{fastest mixing Markov chain problem}. 
Since the purpose of this section is to show an example of how to implement our proposed framework, 
we heuristically address this problem at this moment. 
 
Suppose that matrix $P^j_k$ satisfying (\ref{const:stochastic})-(\ref{const:identity}) is given, and the topology of bins is not fully-connected.
Since the matrix is non-negative and there exists at least one zero-value entry in each column, 
the coefficient of ergodicity can be said as $\tau(P^j_k) = \max_{\forall i,\forall s} (P^j_k[i,s])$. 
Assuming that $\max_{\forall l \neq i} P^j_k[i,l] \le 1/|\mathcal{N}_k(i)|$, which is generally true due to the smallness of $c_{(i,l)}$, 
it turns out that each diagonal element of $P^j_k$ is the largest value in each row.
Thus, we can say that $\tau(P^j_k) = \max_{\forall i} P^j_k[i,i]$. 
The objective function of this problem can be said as $\mathrm{maxmin}_{\forall i} \sum_{\forall l \neq i} P^j_k[i,l]$
because minimising the maximum diagonal element of a stochastic matrix is equivalent to maximising the minimum row-sum of its off-diagonal elements.

We turn now to the constraints (\ref{const:stochastic})-(\ref{const:identity}) and (\ref{const:flux_upper_bound}). 
In order to comply with (\ref{const:reversible}), we initially set $P^j_k[i,l] = \Theta[l]Q^j_k[i,l]$, where $Q^j_k$ is a symmetric matrix that we will design now. 
The constraint (\ref{const:flux_upper_bound}), (\ref{const:irreducible}), and the symmetricity of $Q_k$ are integrated into the following constraint: $\forall i$, $\forall l \neq i$, 
\begin{equation}\label{const:irreducible_2}
\begin{aligned}
 \min(\frac{c_{(i,l)}}{n_k[i]\Theta[l]},\frac{c_{(l,i)}}{n_k[l]\Theta[i]}) \ge Q^j_{k}[i,l] > 0 & \quad \text{  if $C_{k}[i,l] = 1$} \\
Q^j_{k}[i,l] = 0 & \quad \text{ otherwise}.
\end{aligned}
\end{equation}
For (\ref{const:positive_diag}) and (\ref{const:identity}), we set the diagonal entries of $P^j_k$ as
\begin{equation*}
	P^j_k[i,i] \ge 1 - \bar{\xi}^j_k[i], \quad \forall i. 
\end{equation*}
This can be rewritten, with consideration of (\ref{const:stochastic}) (i.e., $\sum_{l=1}^{n_{bin}} \Theta[l]Q^j_k[i,l] = 1, \forall i$), as
\begin{equation}\label{const:new2}
	\sum_{\forall l \neq i} \Theta[l] Q^j_k[i,l] \le \bar{\xi}^j_k[i], \quad \forall i. 
\end{equation}
Then, the reduced problem can be defined as: find $Q^j_k$ such that 
\begin{equation}\tag{P2}\label{problem_max_conv}
\mathrm{maxmin}_{\forall i} \sum_{\forall l \neq i} \Theta[l]Q^j_k[i,l]
\end{equation}
subject to (\ref{const:irreducible_2}) and (\ref{const:new2}).


The algorithm for this problem is shown in Algorithm \ref{algorithm_P2}.
If we neglect (\ref{const:new2}), an optimal solution can be obtained by making $Q^j_k[i,l]$ equal to its upper bound of (\ref{const:irreducible_2}) (Line \ref{P2:line3}).
However, this solution may not hold (\ref{const:new2}). 
Thus, we lower the entries of $Q^j_k$ to satisfy (\ref{const:new2}), while keeping them symmetric and as higher as possible (Line \ref{P2:line9}--\ref{P2:line14}). 
In details, Line \ref{P2:line9} (or Line \ref{P2:line12}) ensures the constraint (\ref{const:new2}) for each bin $\mathcal{B}_i$ in a way that, if this is not the case, obtains the necessary lowering factor $\bar{\epsilon}_Q'[i]$ (or $\bar{\epsilon}_Q[i]$). 
In order to keep $Q^j_k$ as higher as possible, we temporarily take ${\epsilon}_Q'[i,l]$ as the maximum value of $\{\bar{\epsilon}_Q'[i],\bar{\epsilon}_Q'[l]\}$ (Line \ref{line_epsilon_1}).
After curtailing $Q^j_k[i,l]$ by applying ${\epsilon}_Q'[i,l]$, we obtain the corresponding lowering factor again (Line \ref{P2:line11}--\ref{P2:line12}). 
The minimum value is taken for both maintaining $Q^j_k$ symmetric and satisfying (\ref{const:new2}) (Line \ref{line_epsilon}). 
Then, the corresponding stochastic decision policy is generated based on the resultant $Q^j_k$ (Line \ref{P2:line14}--\ref{P2:end}). 
Note that we set $G^j_k[i] = 0$ for all time instants, all bins, and all agents, so $M^j_k = P^j_k$.

\begin{algorithm}
\caption{Subroutine of Algorithm \ref{algorithm} (Line \ref{line:M_k}) for P1}\label{algorithm_P2}
\begin{algorithmic}[1]

	\Statex \emph{// Initialise $P^j_k$}
	\State $P^j_k[i,l] = 0$, $\forall l \in \{1,2,...,n_{bin}\}$;
	\Statex \emph{// Compute $Q^j_k$ satisfying (\ref{const:irreducible_2})}		
		\State $Q^j_k[i,l] = \min(\frac{c_{(i,l)}}{n_k[i]\Theta[l]},\frac{c_{(l,i)}}{n_k[l]\Theta[i]})$, $\forall \mathcal{B}_l \in \mathcal{N}_k(i) \setminus \{\mathcal{B}_i\}$; \label{P2:line3}	
	\Statex \emph{// Lower $Q^j_k$ to satisfy	 (\ref{const:new2})}		
	\State $\bar{\epsilon}_Q'[i] = \min (\frac{\bar{\xi}^j_k[i]}{\sum_{\forall l \neq i} \Theta[l] Q^j_k[i,l]}, 1)$; \label{P2:line9}	
	\State $\epsilon_Q'[i,l] = \max (\bar{\epsilon}_Q'[i], \bar{\epsilon}_Q'[l])$, $\forall \mathcal{B}_l \in \mathcal{N}_k(i) \setminus \{\mathcal{B}_i\}$; \label{line_epsilon_1}
	\State $Q^j_k[i,l] := \epsilon_Q'[i,l] Q^j_k[i,l]$, $\forall \mathcal{B}_l \in \mathcal{N}_k(i) \setminus \{\mathcal{B}_i\}$;\label{P2:line11}
	\State $\bar{\epsilon}_Q[i] = \min (\frac{\bar{\xi}^j_k[i]}{\sum_{\forall l \neq i} \Theta[l] Q^j_k[i,l]}, 1)$;\label{P2:line12}
	\State $\epsilon_Q[i,l] = \min (\bar{\epsilon}_Q[i], \bar{\epsilon}_Q[l])$, $\forall \mathcal{B}_l \in \mathcal{N}_k(i) \setminus \{\mathcal{B}_i\}$; \label{line_epsilon}
	\State $Q^j_k[i,l] := \epsilon_Q[i,l] Q^j_k[i,l]$, $\forall \mathcal{B}_l \in \mathcal{N}_k(i) \setminus \{\mathcal{B}_i\}$;
	\Statex \emph{// Compute $P^j_k$}
	\State $P^j_k[i,l] = \Theta[l]Q^j_k[i,l]$, $\forall \mathcal{B}_l \in \mathcal{N}_k(i) \setminus \{\mathcal{B}_i\}$;\label{P2:line14}
	\State $P^j_k[i,i] = 1 - \sum_{\forall l \neq i} P^j_k[i,l]$; \label{P2:end}
\end{algorithmic}
\end{algorithm}
\subsection{Example III: Local-information-based Quorum Model}\label{sec:quorum}
This subsection shows that the proposed framework is able to incorporate a quorum model, which is introduced in \cite{Halasz2007, Hsieh2008}. 
In this model, if a bin is overpopulated above a certain level of predefined threshold called \emph{quorum}, 
the probabilities that agents in the bin move to neighbour bins are temporarily increased, rather than following given $P^j_k$.   
This feature eventually brings an advantage to the convergence performance of the swarm.

To this end, we set the secondary guidance matrix $S^j_k$ as follows: $\forall i, l \in \{1,...,n_{bin}\}$ and $\forall j \in \mathcal{A}$, 
\begin{equation}\label{eqn:quorum_Q}
S^j_k[i,l] := \begin{cases}	
	{1}/{|\mathcal{N}_k(i)|}   \quad\quad   \text{if $C_k[i,l] = 1$}\\
	0 \quad \quad \quad \quad \quad \      \text{ otherwise.}
\end{cases}
\end{equation}
This matrix makes agents in a bin equally disseminated over its neighbour bins. 
In addition, 
the secondary feedback gain $G^j_k[i]$ is defined as
\begin{equation}\label{eqn:G_k}
G^j_k[i] := \left(1+\exp \Big(\gamma ( q_i - \frac{\bar{\mu}_k^j[i]}{\bar{\Theta}[i]} ) \Big) \right)^{-1},
\end{equation}
where $\gamma > 0$ is a design parameter, and $q_i > 1$ is the quorum for bin $\mathcal{B}_i$. 
The value of the gain is shown in Figure \ref{fig.G_P1P3}(b), varying depending on $\gamma$ and $q_i$. 
As ${\bar{\mu}_k^j[i]}/{\bar{\Theta}[i]}$ becomes higher than the quorum, $G^j_k[i]$ gets close to $1$ (i.e., $S^j_k[i,l]$ becomes more dominant than $P^j_k[i,l]$).  
The steepness of the function at the quorum value is regulated by $\gamma$. 

The existing quorum models in \cite{Halasz2007, Hsieh2008} require each agent to know 
$\mu_k^{\star}[i]$, which implies that the total number of agents $n^\mathcal{A}_k$ should be tracked in real time. 
It could be possible that some agents in a swarm unexpectedly become faulted by internal or external effects during a mission, 
which hinders for other alive agents from keeping track of $n^\mathcal{A}_k$ in a timely manner. 
On the contrary, this requirement is not the case for the quorum model in this subsection, and it works by using the local information available from $\mathcal{A}_{\mathcal{N}_k(i)}$.

\begin{figure}[t]
\centering
\subfloat[P1]{\includegraphics[width=0.5\linewidth]{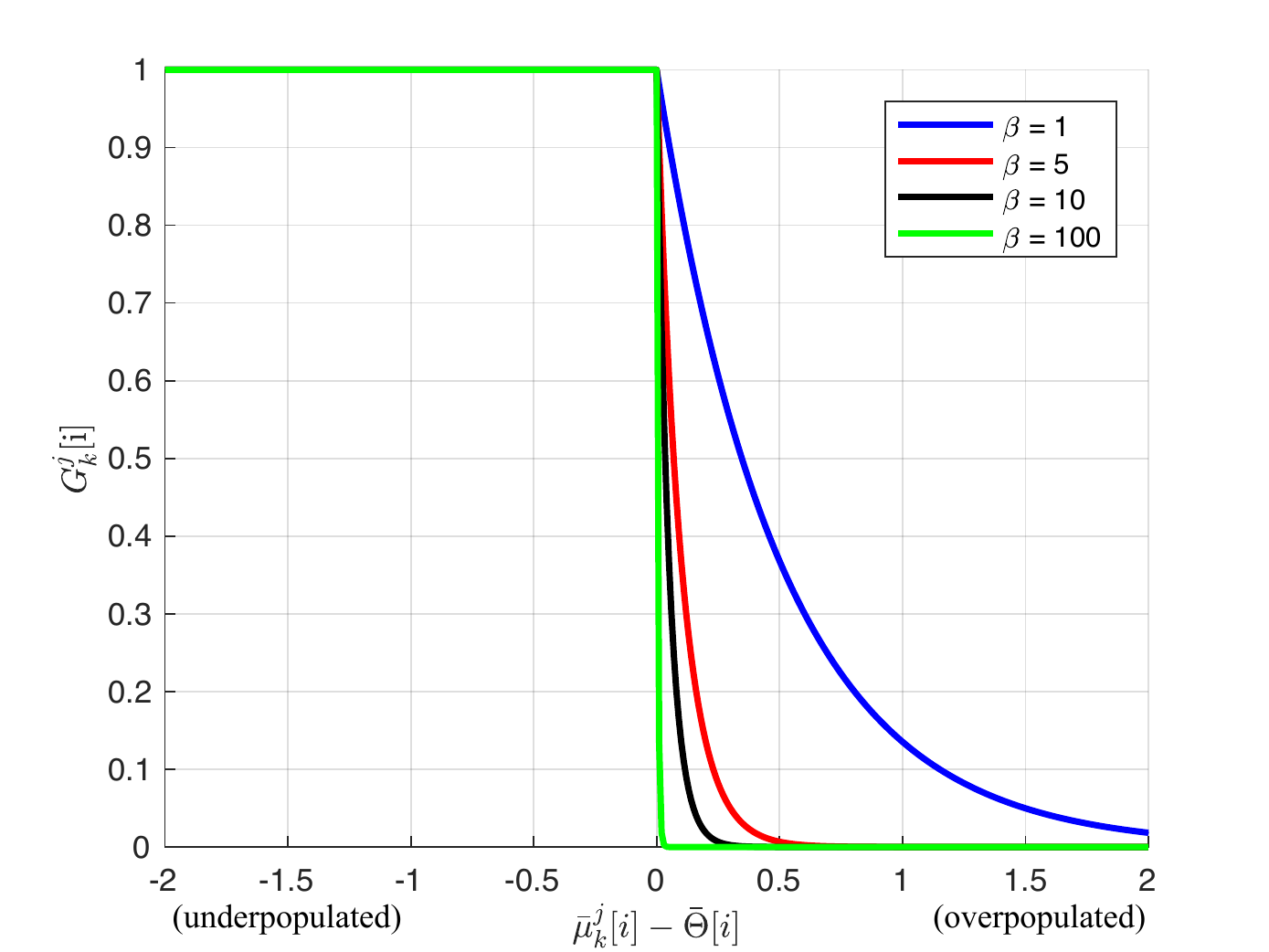}}
\hfil
\subfloat[the Quorum Model]{\includegraphics[width=0.5\linewidth]{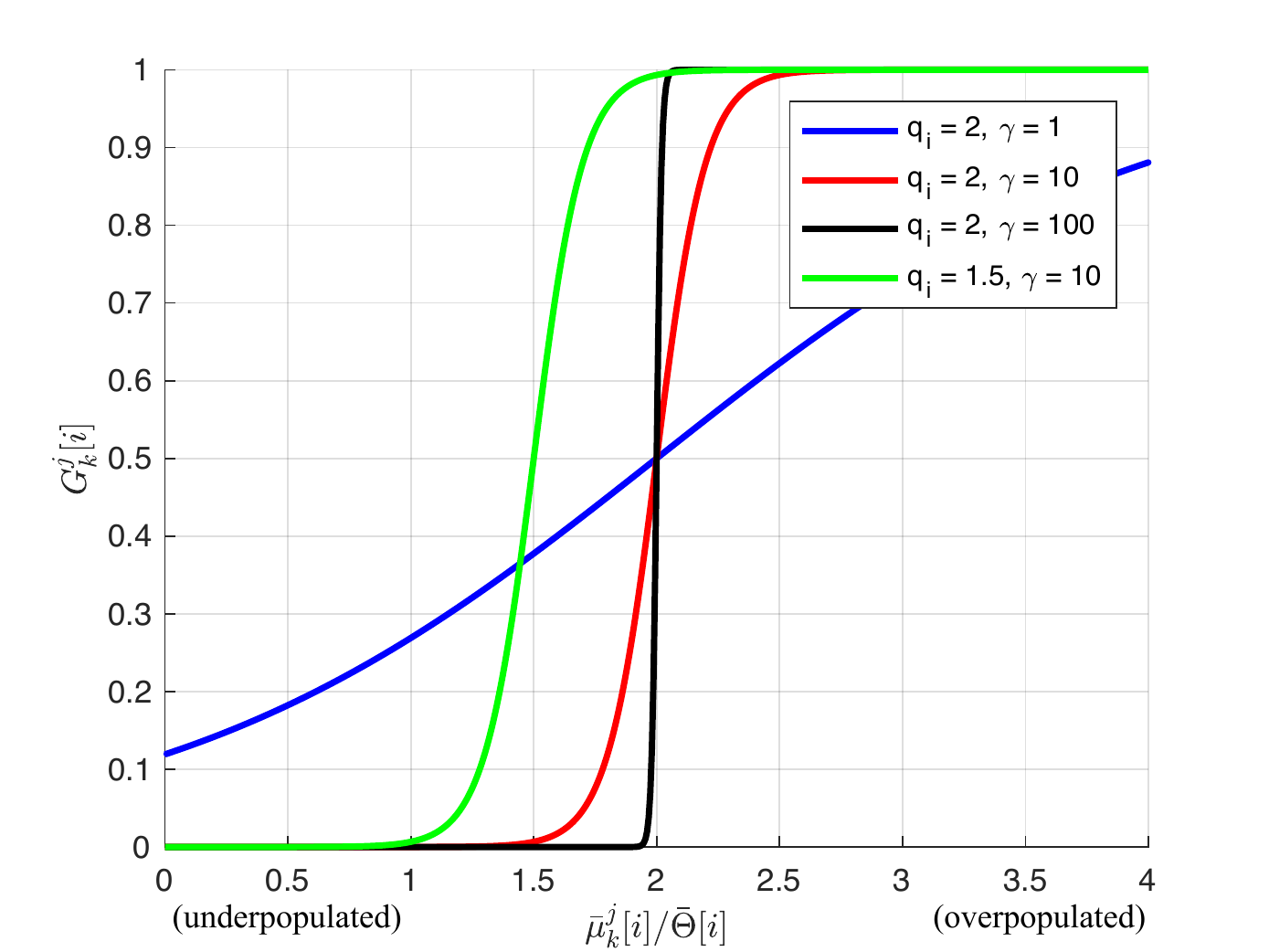}}
\hfil
\caption{The secondary feedback gains $G_k^j[i]$ depending on the associated design parameters: (1) for P1 (Eqn. (\ref{eqn:G_k_P1})); (2) for the quorum model (Eqn. (\ref{eqn:G_k}))}
\label{fig.G_P1P3}
\end{figure}

\begin{algorithm}
\caption{Subroutine of Algorithm \ref{algorithm} (Line \ref{line:Q_k}-\ref{line:G_k}) for the quorum-based method}\label{algorithm_quorum}
\begin{algorithmic}[1]

	\State Compute $S^j_k[i,l]$ $\forall l$ using (\ref{eqn:quorum_Q});	
	\State Compute $G^j_k[i]$ using (\ref{eqn:G_k});	

\end{algorithmic}
\end{algorithm}

\section{{Asynchronous Implementation}}\label{sec:async}

A synchronous process induces extra time delays and inter-agent communications to make entire agents, who may have different timescales for obtaining new information and make decisions, remain in sync. 
Such unnecessary waiting time and communications may cause unfavourable effects on mission performance or even may not be realisable in practice \cite{Johnson2011}. 

In the previous sections, it was assumed that a swarm of agents act synchronously at every time instant. 
Here we show that the proposed framework allows agents to operate in an asynchronous manner, assuming that the union of underlying graphs of the corresponding Markov matrices across some time intervals is frequently and infinitely strongly-connected. 


\begin{algorithm}[b]
\caption{{Asynchronous Construction of $P^j_k[i,l]$ (Substitute for Line \ref{line:M_k} of Algorithm \ref{algorithm})}}\label{algorithm:async}
\begin{algorithmic}[1]
\If{$\mathcal{B}_i \in \mathcal{R}_k^{+}$ \& ${\mathrm{\textbf{isnonempty}}}(\mathcal{R}_k^{+} \setminus \{\mathcal{B}_i\})$} 

		\State Compute $P^j_k[i,l]$ as usual, $\forall \mathcal{B}_l \in \mathcal{R}_k^{+} \setminus \{\mathcal{B}_i\}$; \label{algorithm:async:like_existing}
		\State $P^j_k[i,l] = 0$, $\forall \mathcal{B}_l \in \mathcal{R}_k^{-}$; \label{algorithm:async:set_zero}

\State $P^j_k[i,i] = 1 - \sum_{\forall l \neq i} P^j_k[i,l]$; \label{algorithm:async:diagonal}
	\Else
	\State $P^j_k[i,i] = 1$; $P^j_k[i,l] = 0$, $\forall l \neq i$; \label{algorithm:async_notready}
	\EndIf
\end{algorithmic}
\end{algorithm}


Suppose that an algorithm to compute $P^j_k$ that satisfies (\ref{const:stochastic})-(\ref{const:identity}) in a synchronous environment is given (e.g., Algorithm \ref{algorithm_P1} or \ref{algorithm_P2}). 
We propose an asynchronous implementation, as shown in Algorithm \ref{algorithm:async}, which substitutes for Line \ref{line:M_k} in Algorithm \ref{algorithm}. 
We refer to a set of bins where agents are ready to use their respective local information (e.g., $\bar{\mu}^j_k[i]$) as $\mathcal{R}_k^{+}$, and a set of the other bins as $\mathcal{R}_k^{-}$. 
It is assumed that each agent $j$ in bin $\mathcal{B}_i \in \mathcal{R}_k^{+}$ also knows the local information of its neighbour bin $\mathcal{B}_l \in \mathcal{N}_k(i)$ if $\mathcal{B}_l \in \mathcal{R}_k^{+}$. 
As shown in Line \ref{algorithm:async:like_existing}, the agent follows an existing procedure as long as all information required to generate $P^j_k[i,l]$ is available (e.g., $\bar{\mu}^j_k[i]$ and $\bar{\mu}^j_k[l]$ for Algorithm \ref{algorithm_P1}, and $\bar{\epsilon}_Q'[i]$ and $\bar{\epsilon}_Q'[l]$ for Algorithm \ref{algorithm_P2}).  
On the contrary, if any local information of its neighbour bin $\mathcal{B}_l \in \mathcal{N}_k(i)$ is not available, 
the probability to transition to the bin is set as zero (Line \ref{algorithm:async:set_zero}). 
In the meantime, each agent for whom necessary local information is not ready does not deviate but remains at the bin it belonged to. 
Equivalently, it can be said that $P^j_k[i,i] = 1$ and $P^j_k[i,l] = 0$, $\forall l \neq i$ (Line \ref{algorithm:async_notready}).

Hereafter, for the sake of differentiation from the original $P^j_k$ generated in a synchronous environment, let us refer to the matrix resulted by Algorithm \ref{algorithm:async} as \emph{asynchronous primary guidance matrix}, denoted by $\bar{P}^j_k$. 
Accordingly, the \emph{asynchronous Markov matrix} can be defined as:
\begin{equation*}
\bar{M}^j_k := (I - W^j_k) \bar{P}^j_k + W^j_k S^j_k.
\end{equation*}

Here, we show that this asynchronous Markov process also converges to the desired swarm distribution. 

\begin{lemma}\label{lemma:async_matrix}
The matrix $\bar{P}_{k}$, for every time instant $k$, satisfies the following properties:
(1) row-stochastic;
(2) all diagonal elements are positive, and all other elements are non-negative; 
and (3) $\sum_{i=1}^{n_{bin}} \Theta[i] \bar{P}^j_k[i,l] = \Theta[l], \forall l$.
\end{lemma}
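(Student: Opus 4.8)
The plan is to prove the three properties by first giving an explicit entrywise description of $\bar{P}^j_k$ as produced by Algorithm~\ref{algorithm:async}, and then checking each property against that description. The crucial observation I would establish first is that, for $i \neq l$, the off-diagonal entry $\bar{P}^j_k[i,l]$ equals the synchronous value $P^j_k[i,l]$ precisely when both $\mathcal{B}_i \in \mathcal{R}_k^{+}$ and $\mathcal{B}_l \in \mathcal{R}_k^{+}$, and equals $0$ otherwise. Indeed, if $\mathcal{B}_l \in \mathcal{R}_k^{-}$ then $\bar{P}^j_k[i,l]=0$ either by Line~\ref{algorithm:async:set_zero} (when $\mathcal{B}_i$ is ready) or by Line~\ref{algorithm:async_notready} (when $\mathcal{B}_i$ is not); if $\mathcal{B}_i \in \mathcal{R}_k^{-}$ the entire off-diagonal row is zero by Line~\ref{algorithm:async_notready}; and when both bins are ready, $\mathcal{R}_k^{+}\setminus\{\mathcal{B}_i\}$ is automatically nonempty (it contains $\mathcal{B}_l$), so Line~\ref{algorithm:async:like_existing} assigns $\bar{P}^j_k[i,l]=P^j_k[i,l]$. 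The diagonal is in every case fixed by $\bar{P}^j_k[i,i]=1-\sum_{l\neq i}\bar{P}^j_k[i,l]$ (Line~\ref{algorithm:async:diagonal}, or trivially with an empty sum in Line~\ref{algorithm:async_notready}).

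Given this description, property (1) is immediate: each row sum equals $\bar{P}^j_k[i,i]+\sum_{l\neq i}\bar{P}^j_k[i,l]=1$ by the construction of the diagonal, and entrywise nonnegativity follows from property (2). For property (2), the off-diagonal entries are either copies of the nonnegative synchronous values (nonnegative by (\ref{const:stochastic}) and (\ref{const:irreducible})) or zero, hence nonnegative. For the diagonal I would use that the asynchronous row retains only a subset of the original off-diagonal mass, so that $\sum_{l\neq i}\bar{P}^j_k[i,l]\le \sum_{l\neq i}P^j_k[i,l]=1-P^j_k[i,i]<1$ by (\ref{const:positive_diag}); therefore $\bar{P}^j_k[i,i]>0$ (and in the non-ready case it equals $1$).

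The heart of the argument, and the step I expect to be the main obstacle, is property (3). Rather than summing columns directly, I would show that $\bar{P}^j_k$ inherits \emph{detailed balance}, i.e. $\Theta[i]\bar{P}^j_k[i,l]=\Theta[l]\bar{P}^j_k[l,i]$ for all $i,l$, from which property (3) follows by summing over $i$ and invoking row-stochasticity:
\[
\sum_{i=1}^{n_{bin}} \Theta[i]\bar{P}^j_k[i,l] = \sum_{i=1}^{n_{bin}} \Theta[l]\bar{P}^j_k[l,i] = \Theta[l]\sum_{i=1}^{n_{bin}}\bar{P}^j_k[l,i] = \Theta[l].
\]
Detailed balance is trivial when $i=l$. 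For $i\neq l$, the entrywise description makes the key symmetry visible: if both bins are ready, then $\bar{P}^j_k[i,l]=P^j_k[i,l]$ and $\bar{P}^j_k[l,i]=P^j_k[l,i]$, so the identity reduces exactly to the reversibility of the synchronous matrix, i.e. (\ref{const:reversible}); if at least one bin lies in $\mathcal{R}_k^{-}$, then both $\bar{P}^j_k[i,l]$ and $\bar{P}^j_k[l,i]$ vanish and both sides are zero. The subtlety I must be careful about is precisely this symmetric zeroing: the asynchronous construction never suppresses a single directed edge in isolation — whenever a neighbour is not ready, the edge between the two bins is removed in \emph{both} directions — and this is exactly what preserves reversibility. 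Once detailed balance is established, property (3) is immediate, completing the lemma.
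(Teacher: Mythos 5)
Your proof is correct, and for the key property (3) it takes a genuinely different (and arguably cleaner) route than the paper. The paper verifies $\sum_{i} \Theta[i]\bar{P}^j_k[i,l] = \Theta[l]$ by direct column-sum computation: it splits the sum into contributions from $\mathcal{R}_k^{-}$, from $\mathcal{R}_k^{+}\setminus\{\mathcal{B}_i\}$, and from the diagonal, and it must explicitly track where the suppressed off-diagonal mass goes, namely into the diagonal via $\bar{P}^j_k[i,i] = P^j_k[i,i] + \sum_{\forall l:\mathcal{B}_l\in\mathcal{R}_k^{-}} P^j_k[i,l]$, before invoking reversibility of $P^j_k$ term by term. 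You instead prove the stronger statement that $\bar{P}^j_k$ itself satisfies detailed balance, $\Theta[i]\bar{P}^j_k[i,l] = \Theta[l]\bar{P}^j_k[l,i]$, which needs only the observation that Algorithm~\ref{algorithm:async} zeroes off-diagonal entries \emph{symmetrically} (both directions of an edge are removed whenever either endpoint is not ready) together with (\ref{const:reversible}) for the surviving entries; property (3) then drops out from row-stochasticity with no diagonal bookkeeping at all. Your approach buys a slightly shorter argument and the explicit by-product that the asynchronous chain remains reversible with respect to $\Theta$ (which the paper uses implicitly but never states); the paper's computation has the minor advantage of exhibiting exactly how the lost transition mass is reabsorbed, which also hands it property (2) via $\bar{P}^j_k[i,i]\ge P^j_k[i,i]$ — your bound $\sum_{l\neq i}\bar{P}^j_k[i,l]\le 1-P^j_k[i,i]$ is equivalent. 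Properties (1) and (2) are handled essentially identically in both arguments, and your preliminary entrywise description of $\bar{P}^j_k$, including the check that the nonemptiness guard in Line~1 of Algorithm~\ref{algorithm:async} is automatically satisfied when two ready bins are involved, matches the case analysis the paper performs implicitly.
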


\begin{proof}
The matrix $\bar{P}^j_{k}$ is row-stochastic because of Line \ref{algorithm:async:diagonal} and \ref{algorithm:async_notready} in Algorithm \ref{algorithm:async}. 
Furthermore, given that ${P}^j_{k}$ satisfies (\ref{const:positive_diag}), 
the property (2) is valid for $\bar{P}^j_{k}$ because $\bar{P}^j_{k}[i,i] \ge {P}^j_{k}[i,i]$ for $\forall i$.

Let us now turn to the property (3). 
For $\forall \mathcal{B}_i \in \mathcal{R}_k^{-}$, 
it is trivial that $\sum_{l=1}^{n_{bin}} \Theta[l] \bar{P}^j_k[l,i] = \Theta[i]$ because of Line \ref{algorithm:async_notready}. 
For $\forall \mathcal{B}_i \in \mathcal{R}_k^{+}$, it turns out from Algorithm \ref{algorithm:async} that (i) $\bar{P}^j_k[i,l] = \bar{P}^j_k[l,i] = 0$ for $\forall \mathcal{B}_l \in \mathcal{R}_k^{-}$;
(ii) $\bar{P}^j_k[i,l] = {P}^j_k[i,l]$ for $\forall \mathcal{B}_l \in \mathcal{R}_k^{+} \setminus \{\mathcal{B}_i\}$ 
and (iii) $\bar{P}^j_k[i,i] = {P}^j_k[i,i] + \sum_{\forall l:\mathcal{B}_l \in \mathcal{R}_k^{-}}{P}^j_k[i,l]$.
We apply the findings into the following equation:
\begin{equation}\label{eqn:for_async_proof}
\begin{split}
\sum_{l=1}^{n_{bin}} \Theta[l] \bar{P}^j_k[l,i] &= \sum_{\forall l:\mathcal{B}_l \in \mathcal{R}_k^{-}} \Theta[l] \bar{P}^j_k[l,i] \\ 
& + \sum_{\forall l:\mathcal{B}_l \in \mathcal{R}_k^{+} \setminus \{\mathcal{B}_i\}} \Theta[l] \bar{P}^j_k[l,i] + \Theta[i] \bar{P}^j_k[i,i]. 
\end{split}
\end{equation}
The first term of the right hand side becomes zero because of (i). 
Due to (ii) and the fact that $\Theta[i] {P}^j_k[i,l] = \Theta[l] {P}^j_k[l,i]$ $\forall l$, 
the second term becomes $\Theta[i] \sum_{\forall l:\mathcal{B}_l \in \mathcal{R}_k^{+} \setminus \{\mathcal{B}_i\}} {P}^j_k[i,l]$. 
The last term becomes $\Theta[i]{P}^j_k[i,i] + \Theta[i]\sum_{\forall l:\mathcal{B}_l \in \mathcal{R}_k^{-}}{P}^j_k[i,l]$ because of (iii).
Putting all of them together, Equation (\ref{eqn:for_async_proof}) is equivalent to 
$\Theta[i] \sum_{\forall l:\mathcal{B}_l \in \mathcal{R}_k^{+} \setminus \{\mathcal{B}_i\}} {P}^j_k[i,l] + \Theta[i]{P}^j_k[i,i] + \Theta[i]\sum_{\forall l:\mathcal{B}_l \in \mathcal{R}_k^{-}}{P}^j_k[i,l] = \Theta[i] \sum_{l=1}^{n_{bin}} {P}^j_k[i,l] = \Theta[i]$.
%
%
%
%
%
%
%
\end{proof}

\begin{lemma}
\label{lem:r3_dynamic}
If the union of a set of underlying graphs of $\{ \bar{P}_{k_1}, \bar{P}_{k_1+1}, ..., \bar{P}_{k_2-1} \}$ is strongly-connected, 
then the matrix product $\bar{P}_{k_1,k_2} := \bar{P}_{k_1} \bar{P}_{k_1+1} \cdots \bar{P}_{k_2-1}$ is irreducible.
\end{lemma}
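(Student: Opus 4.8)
The plan is to characterise irreducibility of the product through the strong connectivity of its underlying directed graph, and to exploit the strictly positive diagonals guaranteed by Lemma \ref{lemma:async_matrix}. Recall that a non-negative square matrix $M$ is irreducible precisely when, for every ordered pair $(i,l)$, there is a directed path from $i$ to $l$ in the graph whose edge set is $\{(i,l) : M[i,l]>0\}$. Hence it suffices to show that the underlying graph of $\bar{P}_{k_1,k_2}$ is strongly connected.

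The central step I would establish is a single-edge persistence claim: if $\bar{P}_{m}[i,l]>0$ for some $m \in \{k_1,\dots,k_2-1\}$, then $\bar{P}_{k_1,k_2}[i,l]>0$. To see this, I would expand the $(i,l)$ entry of the product as a sum over index sequences $i = s_{k_1}, s_{k_1+1}, \dots, s_{k_2} = l$ of the products $\prod_{t} \bar{P}_{t}[s_{t},s_{t+1}]$, every term being non-negative. It then suffices to exhibit one sequence with all factors positive. I would take the sequence that stays at $i$ up to time $m$, performs the transition $i \to l$ at time $m$, and stays at $l$ thereafter; the diagonal factors $\bar{P}_{t}[i,i]$ for $t<m$ and $\bar{P}_{t}[l,l]$ for $t>m$ are all strictly positive by property (2) of Lemma \ref{lemma:async_matrix}, and $\bar{P}_{m}[i,l]>0$ by hypothesis, so the whole product is positive.

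With this claim, every edge appearing in any individual $\bar{P}_{m}$ over the window also appears in the underlying graph of $\bar{P}_{k_1,k_2}$; consequently the latter graph contains the union of the underlying graphs of $\{\bar{P}_{k_1},\dots,\bar{P}_{k_2-1}\}$. Since that union is strongly connected by hypothesis, any two vertices are joined by a directed path built from such edges, and each of these edges survives in the product graph, so $\bar{P}_{k_1,k_2}$ has a strongly connected underlying graph and is therefore irreducible. I expect the main obstacle to be stating the entry-expansion of the matrix product cleanly and justifying that a single positive index-sequence suffices, rather than any deep difficulty; the positivity of the diagonals is exactly what lets an edge realised at one time instant be \emph{carried} across the entire window.
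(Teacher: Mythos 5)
Your proof is correct and follows essentially the same route as the paper: both arguments rest on the fact that the strictly positive diagonals of each $\bar{P}_k$ (from Lemma \ref{lemma:async_matrix}) allow every edge realised by any single factor to survive into the product, so that the underlying graph of $\bar{P}_{k_1,k_2}$ contains the (strongly connected) union of the factors' graphs. The only difference is that the paper obtains this via the cited bound $\bar{P}_{k_1,k_2} \ge \gamma \sum_{k=k_1}^{k_2-1}\bar{P}_k$ from Jadbabaie et al., whereas you prove the needed edge-persistence directly by exhibiting a positive index sequence in the entry expansion --- which is precisely the argument behind that cited lemma.
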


\begin{proof}
Since the union of a set of underlying graphs of $\{ \bar{P}_{k_1}, \bar{P}_{k_1+1}, ..., \bar{P}_{k_2-1} \}$ is strongly-connected, 
the underlying graph of $\sum_{k = k_1}^{k_2-1} \bar{P}_k$ is also strongly-connected. 
Noting that every $\bar{P}_k$, $\forall k \in \{k_1,k_1+1,...,k_2-1\}$ is a nonnegative $n_{bin} \times n_{bin}$ matrix and its diagonal elements are positive (by Lemma \ref{lemma:async_matrix}), 
it follows from \cite[Lemma 2]{Jadbabaie2003} that $\bar{P}_{k_1,k_2} \ge \gamma \sum_{k = k_1}^{k_2-1}\bar{P}_k$, where $\gamma > 0$. 
This implies that the underlying graph of $\bar{P}_{k_1,k_2}$ is strongly-connected, and thus the matrix $\bar{P}_{k_1,k_2}$ is irreducible. 
\end{proof}

\begin{theorem}
Suppose that there exists an infinite sequence of non-overlapping time intervals $[k_i,k_{i+1})$, $i=0,1,2,...$, 
such that the union of underlying graphs of $\{ \bar{P}_{k_i}, \bar{P}_{k_i+1}, ..., \bar{P}_{k_{i+1}-1} \}$ in each interval is strongly-connected. 
Let the stochastic state of agent $j$ at time instant $k \ge k_0$, governed by the corresponding Markov process from an arbitrary state $x^j_{k_0}$, be
$x^j_{k} = {x}^j_{k_0} \bar{U}^j_{k_0,k} := {x}^j_{k_0} \bar{M}^j_{k_0} \bar{M}^j_{k_0+1} \cdots \bar{M}^j_{k-1}$. 
Then, it holds that $\lim_{k \to \infty} {x}^j_k = \Theta$ pointwise for all agents, irrespective of the initial condition. 
\end{theorem}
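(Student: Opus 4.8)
The plan is to mirror the argument of Theorem~\ref{thm:converge}: the claim follows once the backward product $\bar{U}^j_{k_0,k}$ is shown to be strongly ergodic with unique limit vector $\Theta$, i.e. $\lim_{k\to\infty}\bar{U}^j_{k_0,k}=\mathbf{1}^{\top}\Theta$, whence $\lim_{k\to\infty}x^j_k = x^j_{k_0}\mathbf{1}^{\top}\Theta=\Theta$. The essential new difficulty, compared with Theorem~\ref{thm:converge}, is that a single asynchronous matrix $\bar{M}^j_k$ need \emph{not} be irreducible, since agents in $\mathcal{R}_k^{-}$ merely wait and $\bar{M}^j_k$ may contain an identity-like block. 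Irreducibility and primitivity must therefore be recovered at the level of \emph{block products} over the given intervals rather than for individual time steps.

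First I would record the properties that $\bar{M}^j_k$ inherits from $\bar{P}^j_k$. Writing $\bar{M}^j_k=(I-W^j_k)\bar{P}^j_k+W^j_k S^j_k$ with $1-\omega^j_k[i]>0$, Lemma~\ref{lemma:async_matrix} yields that $\bar{M}^j_k$ is row-stochastic, has strictly positive diagonal (as $\bar{M}^j_k[i,i]\ge(1-\omega^j_k[i])\bar{P}^j_k[i,i]>0$), and that every positive off-diagonal entry of $\bar{P}^j_k$ stays positive in $\bar{M}^j_k$, so the underlying graph of $\bar{M}^j_k$ contains that of $\bar{P}^j_k$. Because $\omega^j_k[i]=\exp(-\tau^j k)G^j_k[i]\to 0$, for all large $k$ we have $1-\omega^j_k[i]\ge\tfrac12$, so the positive entries of $\bar{M}^j_k$ admit (up to the factor $\tfrac12$) the same uniform lower bound as those of $\bar{P}^j_k$. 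The same decay gives asymptotic homogeneity with respect to $\Theta$: since $\Theta\bar{P}^j_k=\Theta$ by Lemma~\ref{lemma:async_matrix}(3), $\Theta\bar{M}^j_k=\Theta-\Theta W^j_k(\bar{P}^j_k-S^j_k)\to\Theta$ as $k\to\infty$.

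Next I would aggregate over the intervals $[k_i,k_{i+1})$. As the union of the underlying graphs of $\{\bar{P}_{k_i},\dots,\bar{P}_{k_{i+1}-1}\}$ is strongly-connected and the graph of each $\bar{M}^j_k$ contains that of $\bar{P}^j_k$, the union of the graphs of $\{\bar{M}^j_{k_i},\dots,\bar{M}^j_{k_{i+1}-1}\}$ is strongly-connected too; the proof of Lemma~\ref{lem:r3_dynamic}, which uses only nonnegativity, positive diagonals and strong-connectivity of the union, then shows each block product $\bar{M}^j_{k_i,k_{i+1}}:=\bar{M}^j_{k_i}\cdots\bar{M}^j_{k_{i+1}-1}$ is irreducible, and with its positive diagonal it is primitive. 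Treating the block sequence $\{\bar{M}^j_{k_i,k_{i+1}}\}_{i\ge0}$ as a new time-inhomogeneous chain -- each matrix primitive, with a common positive lower bound on its positive entries, and asymptotically homogeneous with respect to $\Theta$ (the block version of the previous paragraph) -- I would invoke exactly the Seneta machinery used in Theorem~\ref{thm:converge} (\cite[Theorems 4.12, 4.15 and Corollary, pp.\,149--150]{Seneta1981}) to conclude that the product, aligned to block boundaries, is strongly ergodic with unique limit vector $\Theta$. Finally, for $k$ not a block boundary the trailing factors form a stochastic matrix whose action, by the asymptotic homogeneity just established, leaves $\mathbf{1}^{\top}\Theta$ unchanged in the limit, so $\lim_{k\to\infty}x^j_k=\Theta$.

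I expect the main obstacle to be precisely this ergodicity step with non-irreducible single-step matrices: one must ensure that the common positive lower bound on the entries of the \emph{block} products (obtained via the estimate $\bar{M}^j_{k_i,k_{i+1}}\ge\gamma'\sum_k\bar{M}^j_k$ of the Jadbabaie-type argument used in Lemma~\ref{lem:r3_dynamic}) stays bounded away from zero across all blocks. This is where the ``frequently and infinitely strongly-connected'' hypothesis, understood as bounded interval lengths, is genuinely required, since otherwise $\gamma'$ could degrade to zero as the intervals lengthen and both the uniform lower bound and the block-level asymptotic homogeneity would fail.
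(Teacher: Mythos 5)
Your proposal follows essentially the same route as the paper's proof, which simply invokes Lemma~\ref{lemma:async_matrix} and Lemma~\ref{lem:r3_dynamic} to conclude that the block products $\bar{P}_{k_i,k_{i+1}}$ over each interval satisfy (\ref{const:stochastic})--(\ref{const:irreducible}) and then defers to the argument of Theorem~\ref{thm:converge}. You supply considerably more detail than the paper's two-sentence sketch --- notably the transfer of the Seneta conditions from $\bar{P}^j_k$ to $\bar{M}^j_k$, the handling of non-boundary time instants, and the observation that a uniform positive lower bound across blocks implicitly requires the interval lengths to be bounded --- all of which the paper leaves unstated.
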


\begin{proof}
Thanks to Lemma \ref{lemma:async_matrix} and \ref{lem:r3_dynamic}, 
the matrix product $\bar{P}_{k_i, k_{i+1}}$ for each time interval $[k_i,k_{i+1})$ satisfies (\ref{const:stochastic})-(\ref{const:irreducible}).  
Therefore, one can prove this theorem by similarly following the proof of Theorem \ref{thm:converge}.
\end{proof}

\section{Numerical Experiments}\label{sec:experiment}

\subsection{Effects of Primary Local-feedback Gain $\bar{\xi}^j_k[i]$ }\label{sec:result.xi}


Depending on the shape of primary feedback gain $\bar{\xi}^j_k[i]$, the performance of the proposed framework changes, especially with respect to convergence rate, fraction of transitioning agents, and residual convergence error. 
Let us first investigate the effect of changes in the feedback gain using Algorithm \ref{algorithm_P1} with Equation (\ref{eqn:optimal_offdiag2}).

We consider a scenario where a set of $2,000$ agents are supposed to be distributed over an arena consisting of $10 \times 10$ bins, as depicted in Figure \ref{fig.comm_burden}. 
There are vertical and horizontal paths between adjacent bins. 
Note that the agents are allowed to move at most $3$ paths away 
within a unit time instant. 
All the agents start from a bin, which reflects the fact that they are generally deployed from a base station at the beginning of a mission. 
The desired swarm distribution $\Theta$ is uniform-randomly generated at each scenario. 
The agents are assumed to estimate necessary information correctly, e.g. $\bar{\mu}^j_k[i] = \bar{\mu}^{\star}_k[i]$.

For the rigorous validation, the performance of the proposed algorithm will be compared with that of the GICA-based algorithm \cite{Bandyopadhyay2017}. To this end, 
 $f(E_k[i,l])$ is set to be the same as the corresponding coefficient in \cite[Corollary 1]{Bandyopadhyay2017}:
\begin{equation}\label{eqn:f_E_k}
 f(E_k[i,l]) := 1 - \frac{E_k[i,l]}{E_{k,max} + \epsilon_E}
\end{equation}
where $E_{k,max}$ is the maximum element of the travelling expense matrix $E_k$, and $\epsilon_E$ is a user-design parameter. 
$E_k[i,l]$ is defined as a linear function based on the distance between bin $\mathcal{B}_i$ and $\mathcal{B}_l$: 
\begin{equation}\label{eqn:E_k}
E_k[i,l] := \epsilon_{E_1} \cdot \Delta s_{(i,l)} + \epsilon_{E_0}
\end{equation}
where $\Delta s_{(i,l)}$ is the minimum required number of paths from $\mathcal{B}_i$ to $\mathcal{B}_l$; 
$\epsilon_{E_1}$ and $\epsilon_{E_0}$ are user-design parameters. 
The agents are assumed to follow any shortest route when they transition between two bins. 
The design parameters are set as follows: $\epsilon_{E_1} = 1$ and $\epsilon_{E_0} = 0.5$ in (\ref{eqn:E_k}); 
$\epsilon_E = 0.1$ in (\ref{eqn:f_E_k}); 
$\epsilon_{\xi} = 10^{-9}$ in (\ref{eqn:xi}); 
$\epsilon_M = 1$ in (\ref{eqn:optimal_offdiag2}); 
$\beta = 1.8 \times 10^5$ in (\ref{eqn:G_k_P1}); and $\tau^j = 10^{-6}$ in (\ref{eqn:eta}).

As a performance index for the closeness between the current swarm distribution $\mu_k^{\star}$ and $\Theta$, we use \emph{Hellinger Distance}, i.e., 
\begin{equation*}\label{eqn:HD}
D_{H}(\Theta, \mu_k^{\star}) := \frac{1}{\sqrt{2}} \sqrt{ \sum_{i=1}^{n_{bin}} \left(\sqrt{\Theta[i]} - \sqrt{\mu_k^{\star}[i]} \right)^2 }, 
\end{equation*}
\emph{Hellinger Distance} is known as a ``concept of measuring similarity between two distributions'' \cite{Chung1989} and is utilised as a feedback gain in the existing work \cite{Bandyopadhyay2017}.

More importantly, to examine the effects of the shape of $\bar{\xi}^j_k[i]$, we set $\alpha$ in (\ref{eqn:xi}) as $0.2, 0.4, 0.6, 0.8, 1$ and $1.2$.  

\begin{figure}[t]
\centering
\subfloat[Primary Local-feedback Gain]{\includegraphics[width=0.45\linewidth]{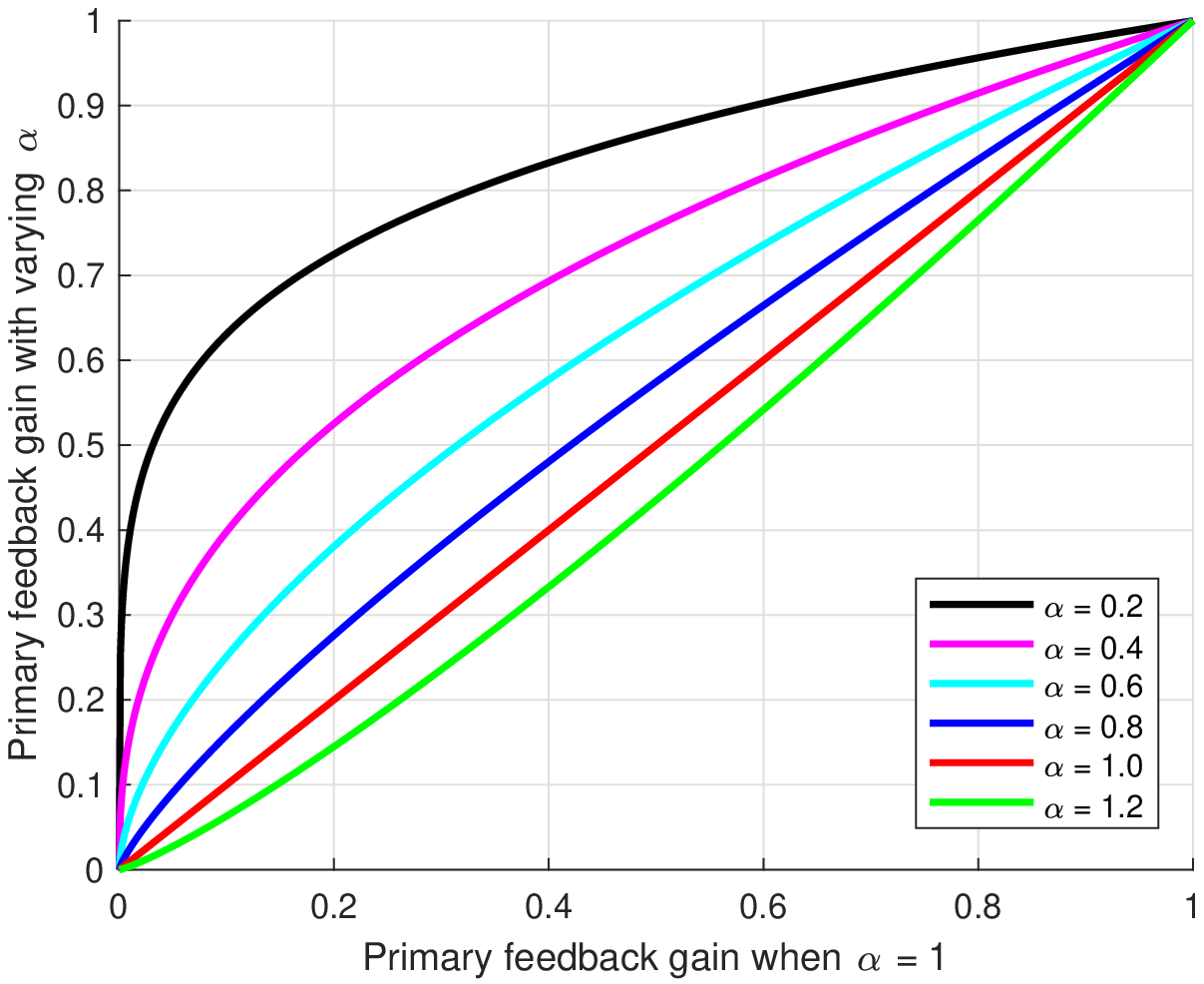}}
\hfil
\subfloat[Fraction of Transitioning Agents]{\includegraphics[width=0.45\linewidth]{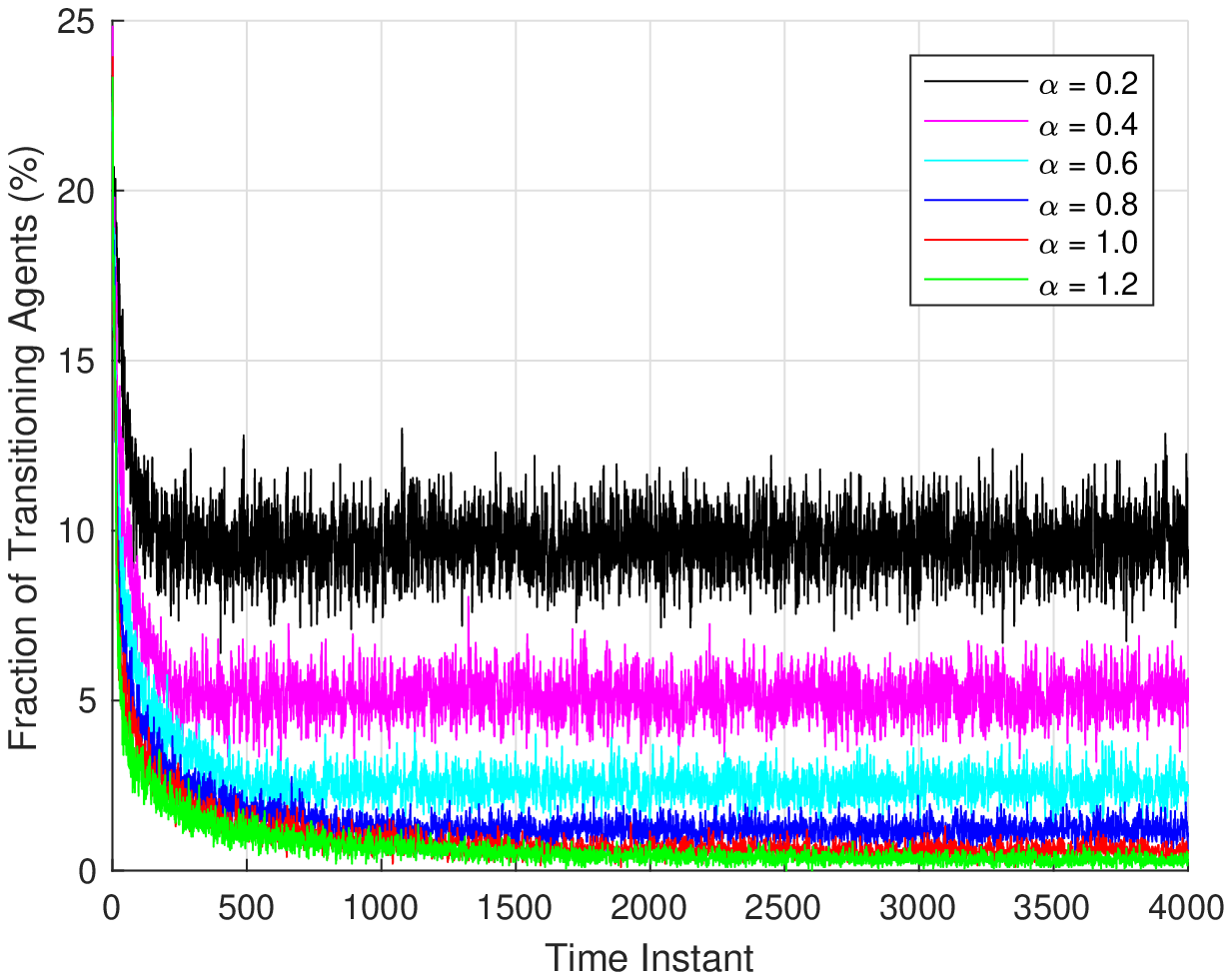}}
\hfil
\subfloat[Convergence Error]{\includegraphics[width=0.45\linewidth]{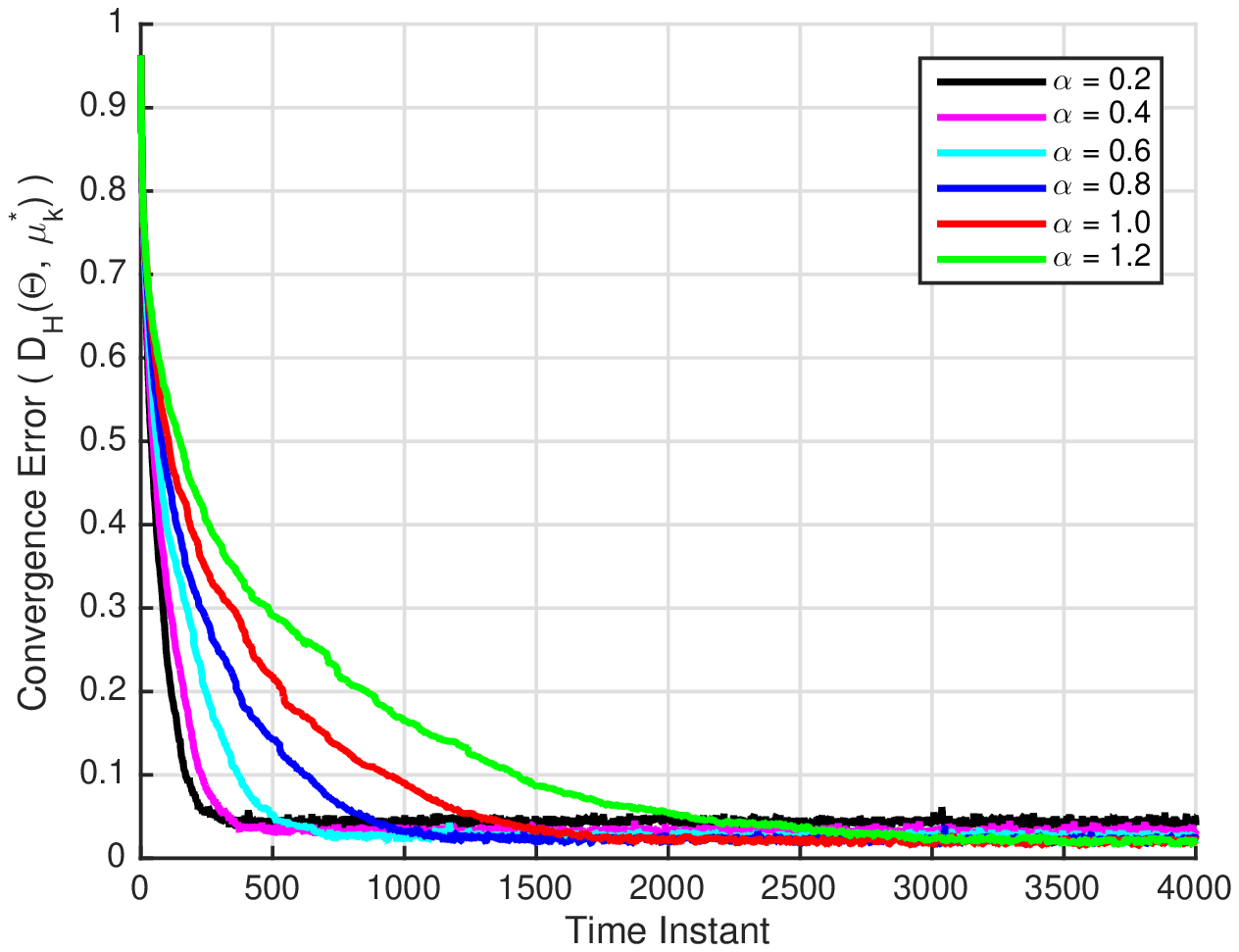}}
\hfil
\subfloat[Convergence Error (zoomed-in)]{\includegraphics[width=0.45\linewidth]{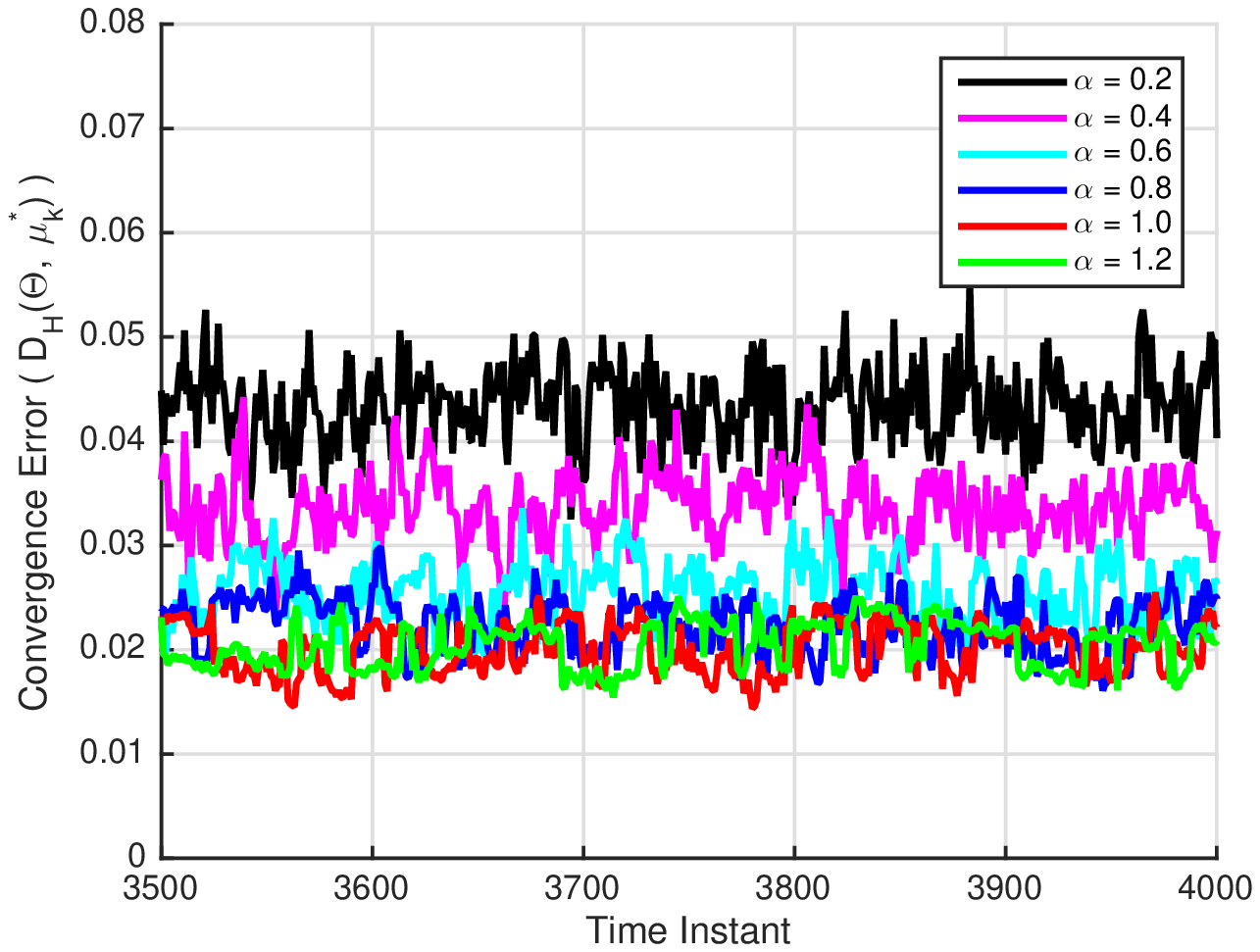}}
\hfil
\caption{Performance comparison depending on the primary local-feedback gain $\bar{\xi}_k^j[i]$ with different setting of $\alpha$: (a) the value of $\bar{\xi}_k^j[i]$; (b) the fraction of transitioning agents; (b) the convergence performance; (d) the convergence performance (zoomed-in for time instant between $2000$ and $4000$)}
\label{fig.result_xi}
\end{figure}

Figure \ref{fig.result_xi} reveals that the convergence rate can be traded off against the fraction of transitioning agents and the residual convergence error. 
As $\bar{\xi}^j_k[i]$ becomes more concave, i.e. the value of $\alpha $ decreases, the summation of off-diagonal entries of $P^j_k$ becomes higher, leading to more transitioning agents, but a faster convergence rate. 
At the same time, a higher value of $\bar{\xi}^j_k[i]$ even at a low value of $|\bar{\Theta}[i] - \mu^{j}_k[i]|$ gives rise to unnecessarily higher off-diagonal entries of $P^j_k$.  Hence, the swarm tends to be prevented from converging to the desired swarm distribution properly, resulting in higher residual convergence error.

\subsection{Comparison with a GICA-based Method}\label{sec:P1_experiment}

\begin{figure*}[htbp]
\centering
\subfloat[Convergence Error]{\includegraphics[width=0.33\linewidth]{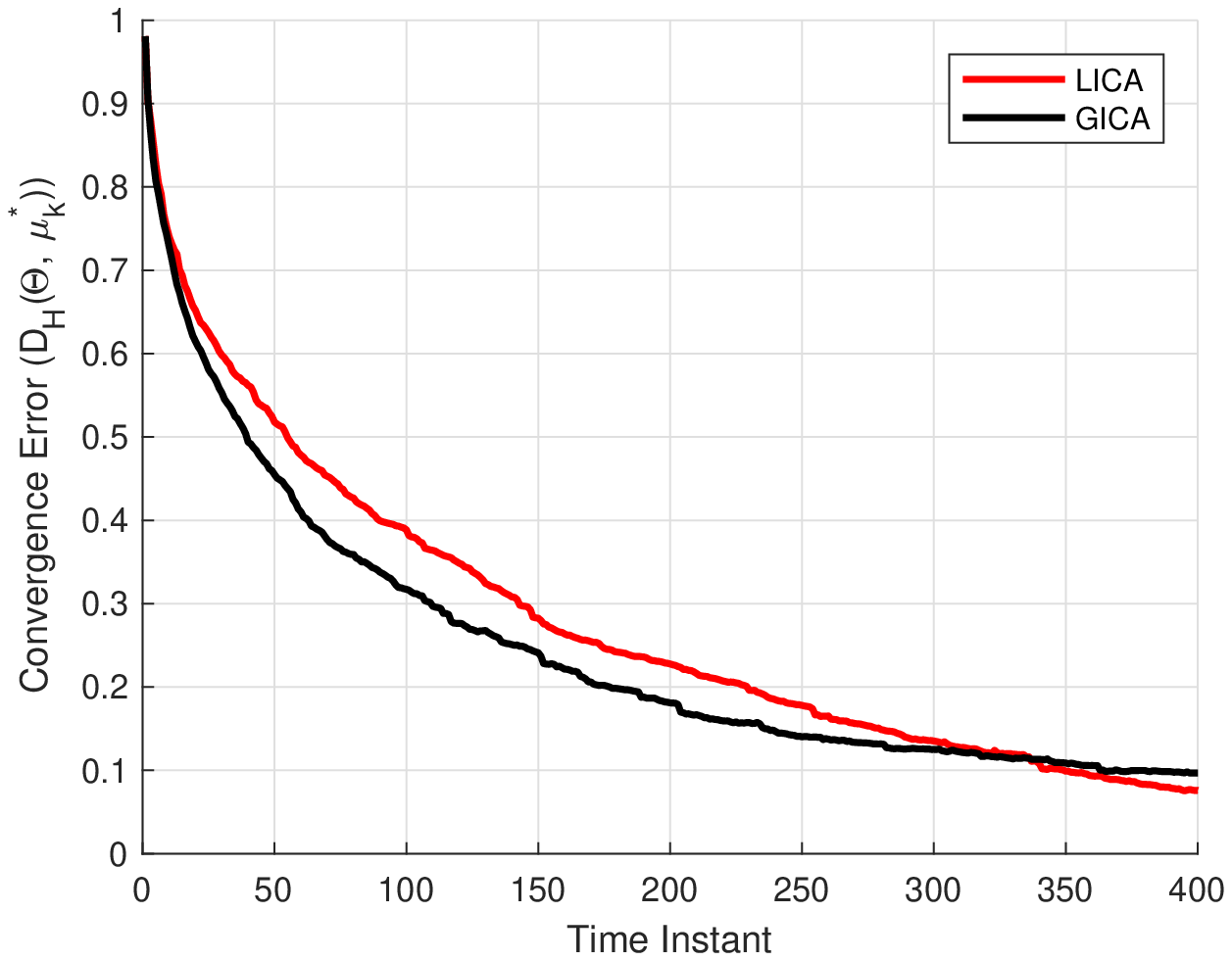}}
\hfil
\subfloat[Fraction of Transitioning Agents]{\includegraphics[width=0.33\linewidth]{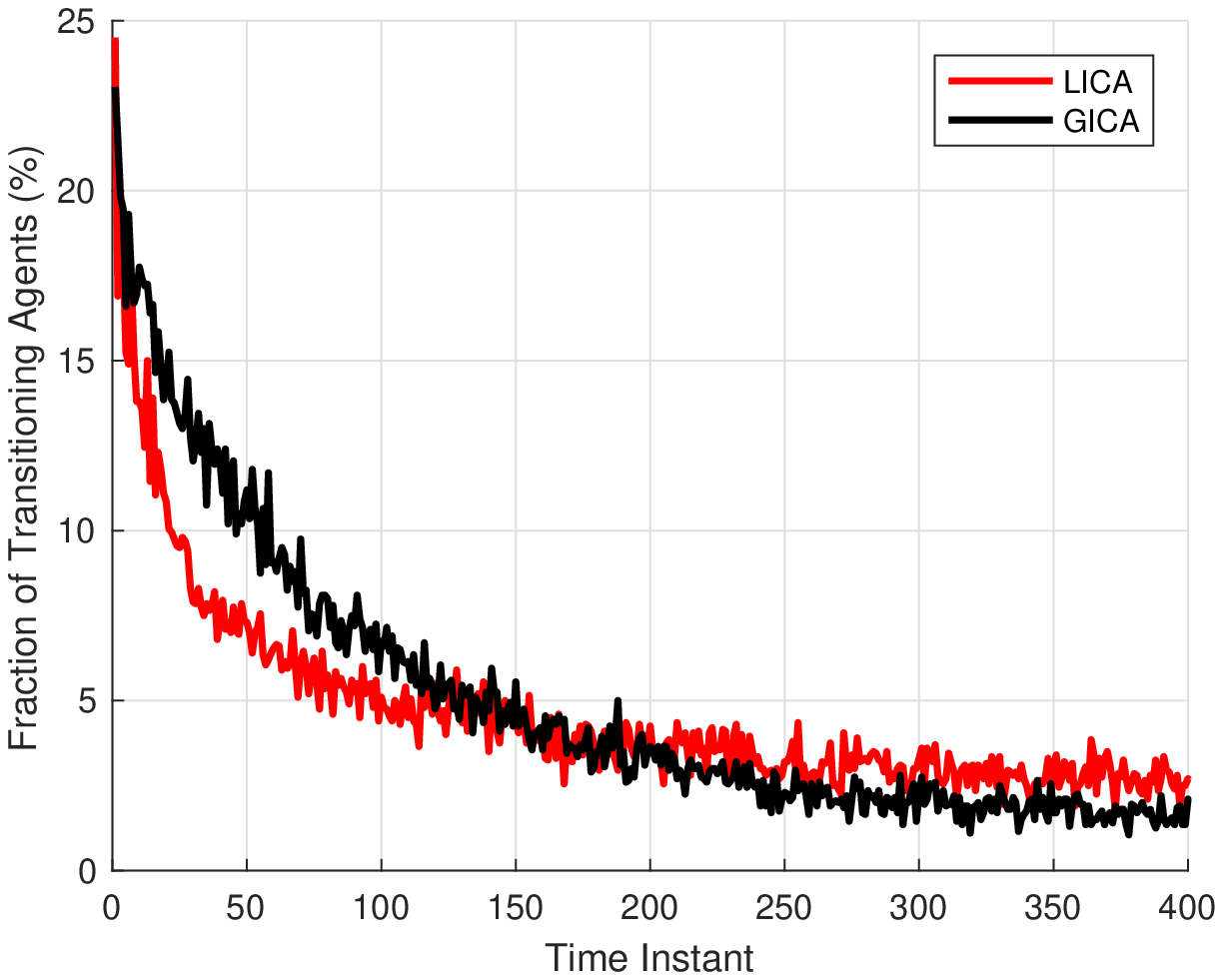}}
\hfil
\subfloat[Cumulative Travel Expenses]{\includegraphics[width=0.33\linewidth]{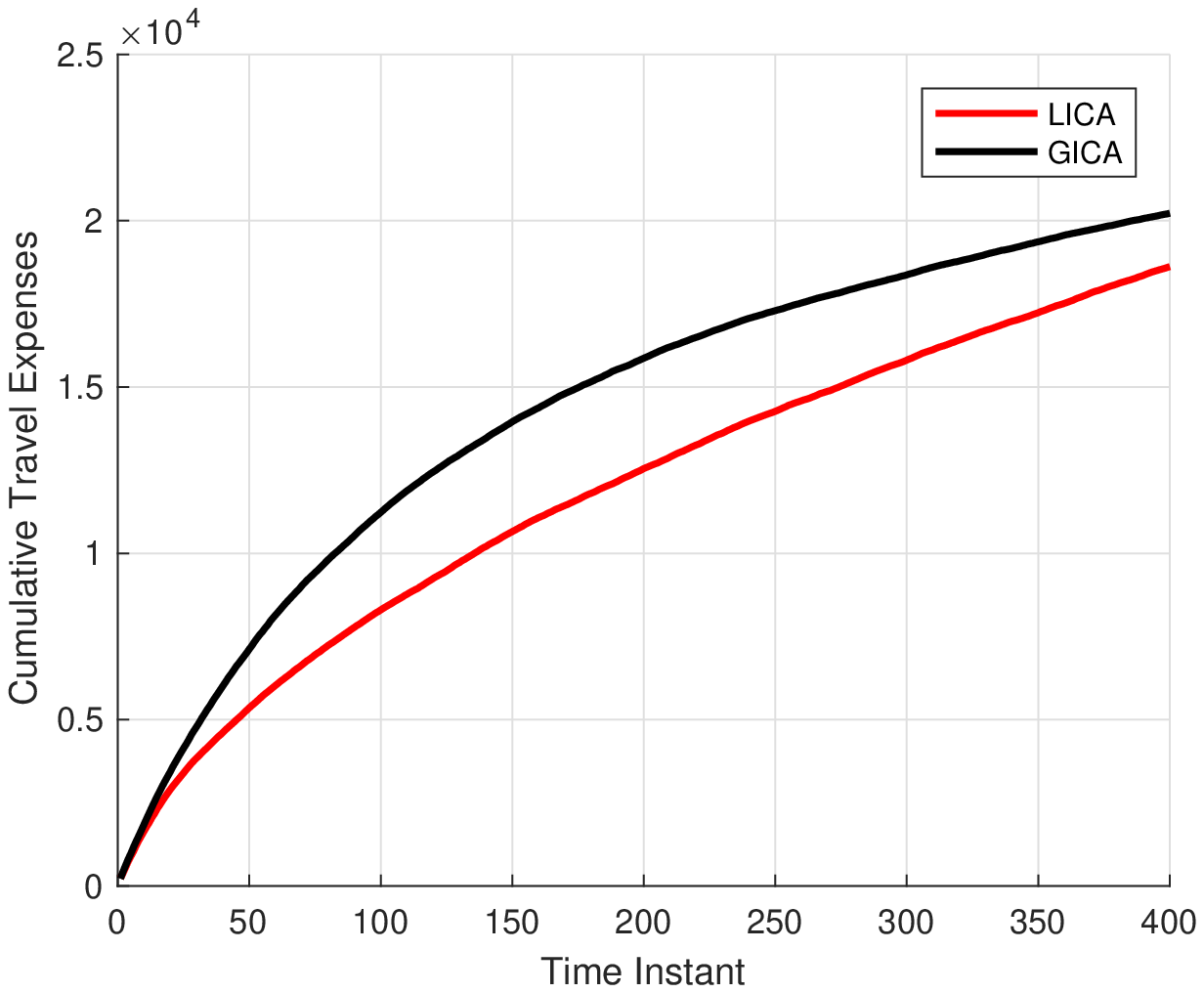}}
\hfil
\caption{Performance comparison between the proposed method (LICA) with the existing method (GICA): (a) the error between the current swarm status and the desired status, shown as Hellinger Distance; (b) the fraction of agents who transitions between any two bins; (c) the cumulative travel expenses of all the agents from the beginning.}
\label{fig.P1_comparison}
\end{figure*}

\begin{figure}[hbtp]
\centering
\subfloat[Converged Time Instants]{\includegraphics[width=0.66\linewidth]{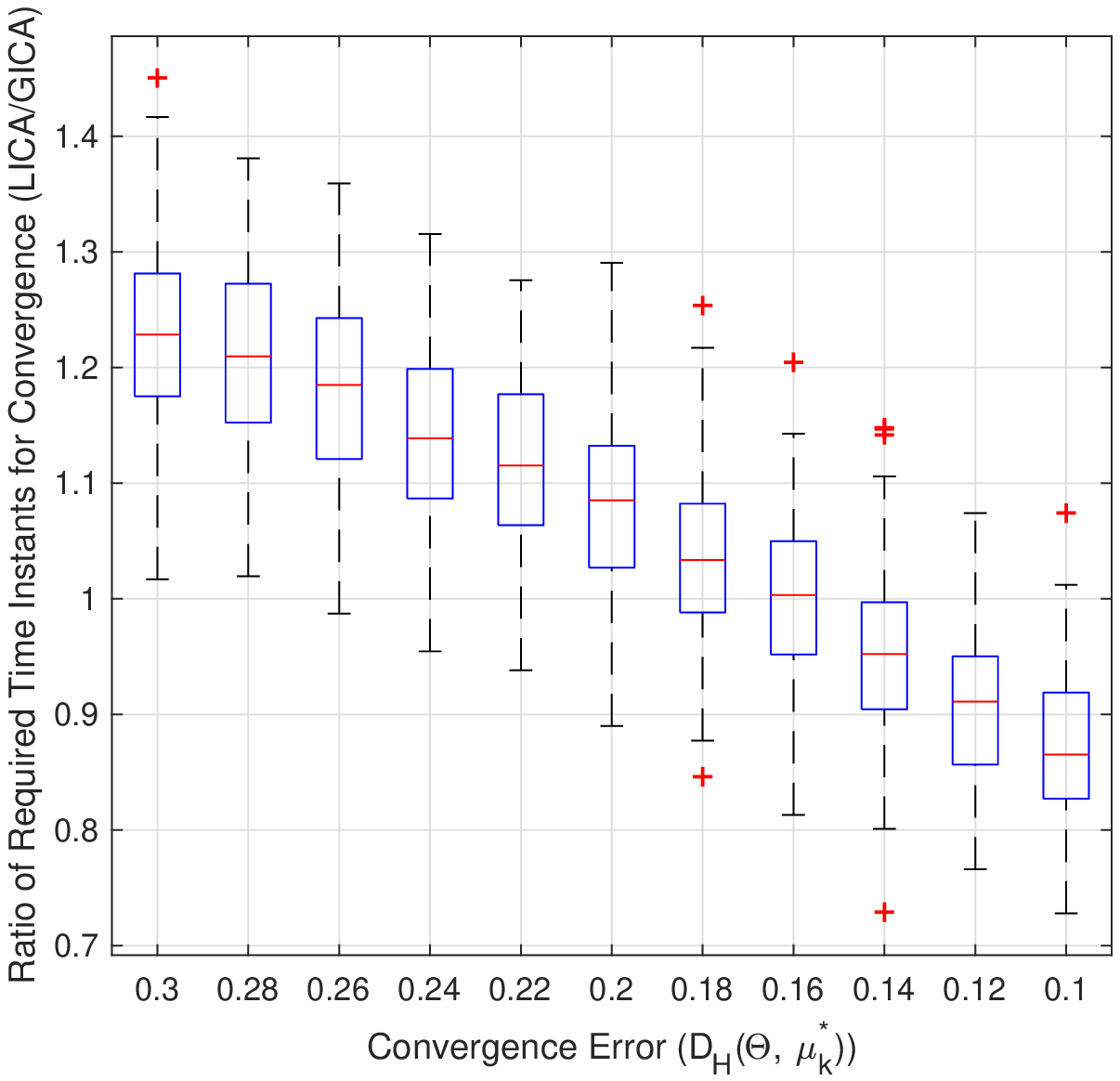}}
\hfil
\subfloat[Cumulative Travel Expenses]{\includegraphics[width=0.33\linewidth]{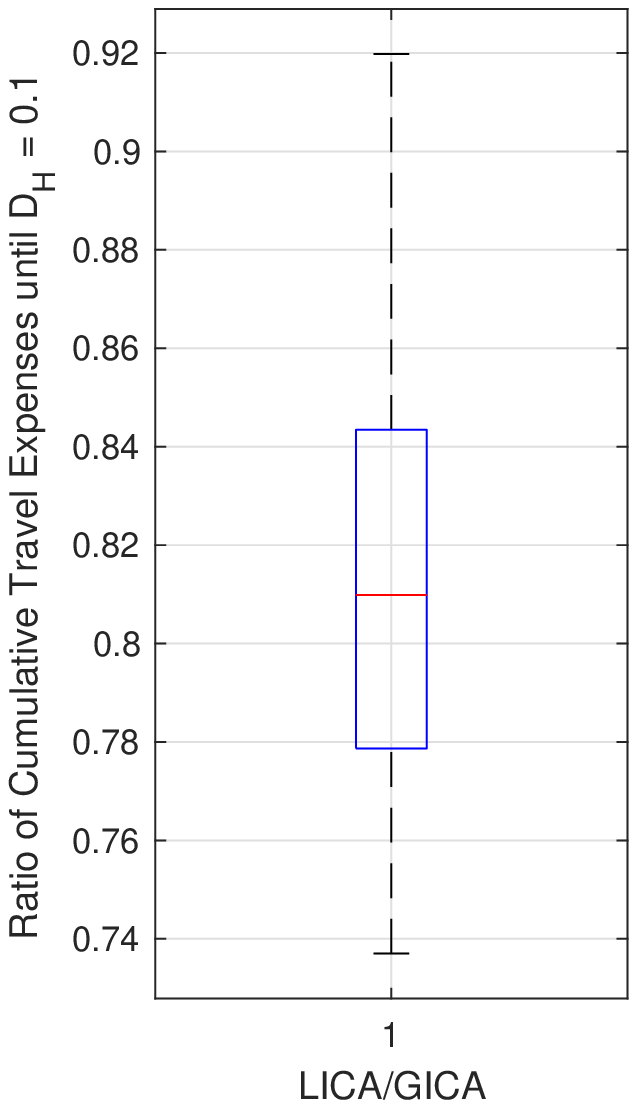}}
\hfil
\caption{Performance comparison (Monte-Carlo experiments) between the proposed method (LICA) and the existing method (GICA): (a) the required time instants to converge to $D_H(\Theta, \mu_k^{\star}) \in \{0.30, 0.28, ..., 0.12, 0.10\}$ (i.e., convergence rate); (2) the ratio of the cumulative travel expenses by LICA to those by GICA until converging to $D_H(\Theta, \mu_k^{\star})=1$.}
\label{fig.MC_comparison}
\end{figure}

Let us now compare the LICA-based method for (\ref{problem_min_cost}) with the GICA-based method in \cite{Bandyopadhyay2017}. 
The scenario considered is the same as the one in the previous subsection except for $\alpha = 0.6$. 
Note that $\epsilon_{\Theta}$ in Remark \ref{remark3} can control convergence rate, but is not discussed in \cite{Bandyopadhyay2017}. For the fair comparison,  $\epsilon_{\Theta}$ is applied to both the methods.

We conduct 100 runs of Monte Carlo experiments. 
Figure \ref{fig.P1_comparison} presents the results of one representative scenario and the statistical results of the Monte Carlo experiments are shown in Figure \ref{fig.MC_comparison}. 
According to Figure \ref{fig.P1_comparison}(a), 
the convergence rate of the proposed method is slower at the initial phase, but similar to that of the the GICA-based method as reaching $D_{H}(\Theta, \mu_k^{\star}) = 0.10$.
This is confirmed by the statistical results in Figure \ref{fig.MC_comparison}(a), 
where the ratio of the required time instants for converging to $D_{H}(\Theta, \mu_k^{\star}) \in \{0.30, 0.28, ..., 0.12, 0.10\}$ in the LICA-based method to those of the GICA-based method is presented. 
At this point, it is worth noting that these convergence rate results are presented in respect to time instants of each Markov process. 
As the LICA-based framework may have a much shorter timescale, its convergence performance in practice could be better than that of the GICA-based method.

Figure \ref{fig.P1_comparison}(c) shows that the cumulative travel expenses are smaller in the proposed method. 
The expenses by the proposed method and those by the compared method are $1.72 \times 10^4$ and $1.96 \times 10^4$, respectively, 
and their ratio is $0.878$. 
This is also confirmed by the statistical result in Figure \ref{fig.MC_comparison}(b).  
A possible explanation is that when some of the bins do not meet their desired swarm densities, the entire agents in the GICA-based method would obtain higher feedback gains, which might lead to unnecessary transitions. 
On the contrary, this is not the case in the LICA-based method since agents are only affected by their neighbour bins.


\subsection{Robustness in Asynchronous Environments}\label{sec:result.async}

\begin{figure*}[hbtp]
\centering
\subfloat[Convergence Error]{\includegraphics[width=0.33\linewidth]{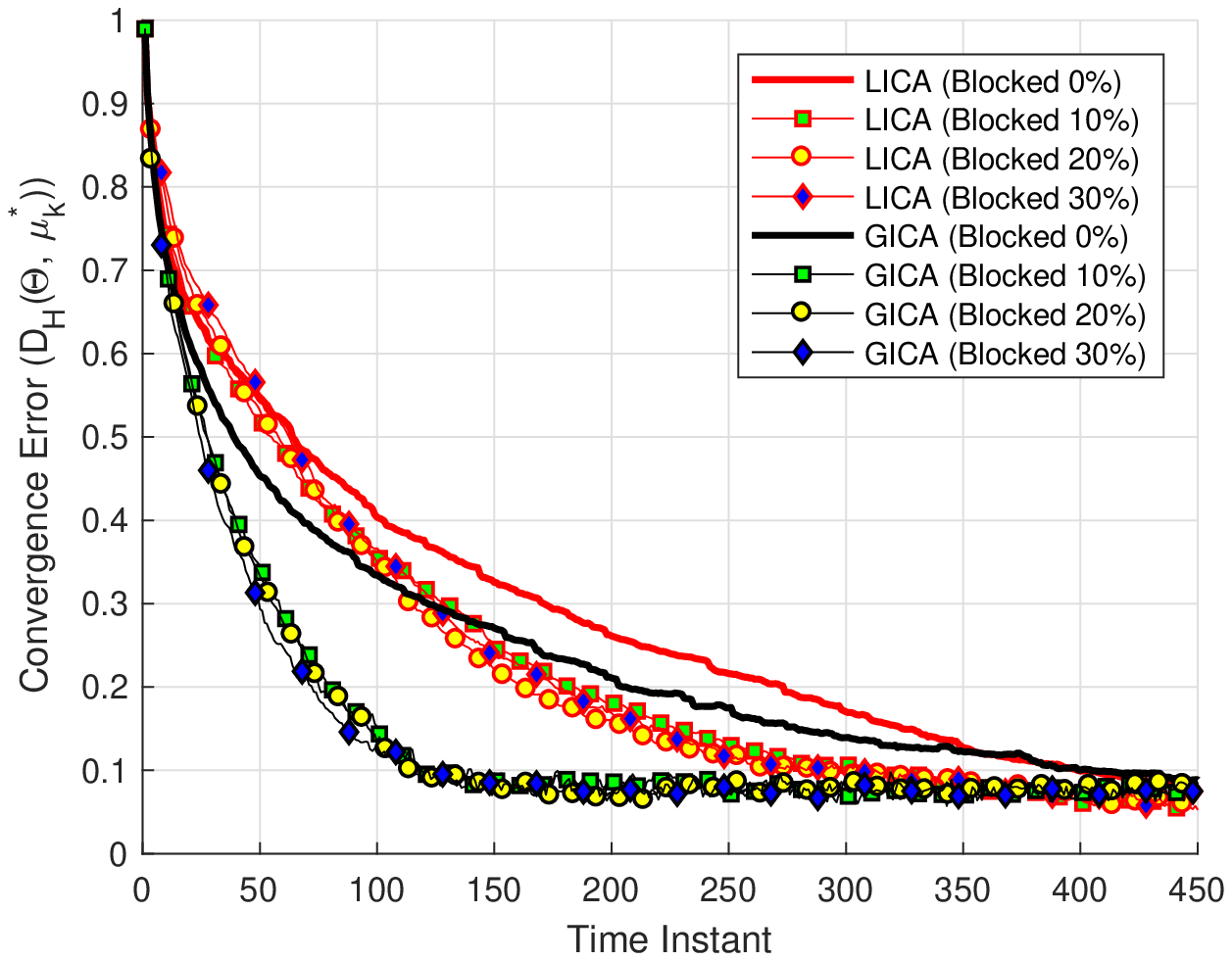}}
\hfil
\subfloat[Fraction of Transitioning Agents]{\includegraphics[width=0.33\linewidth]{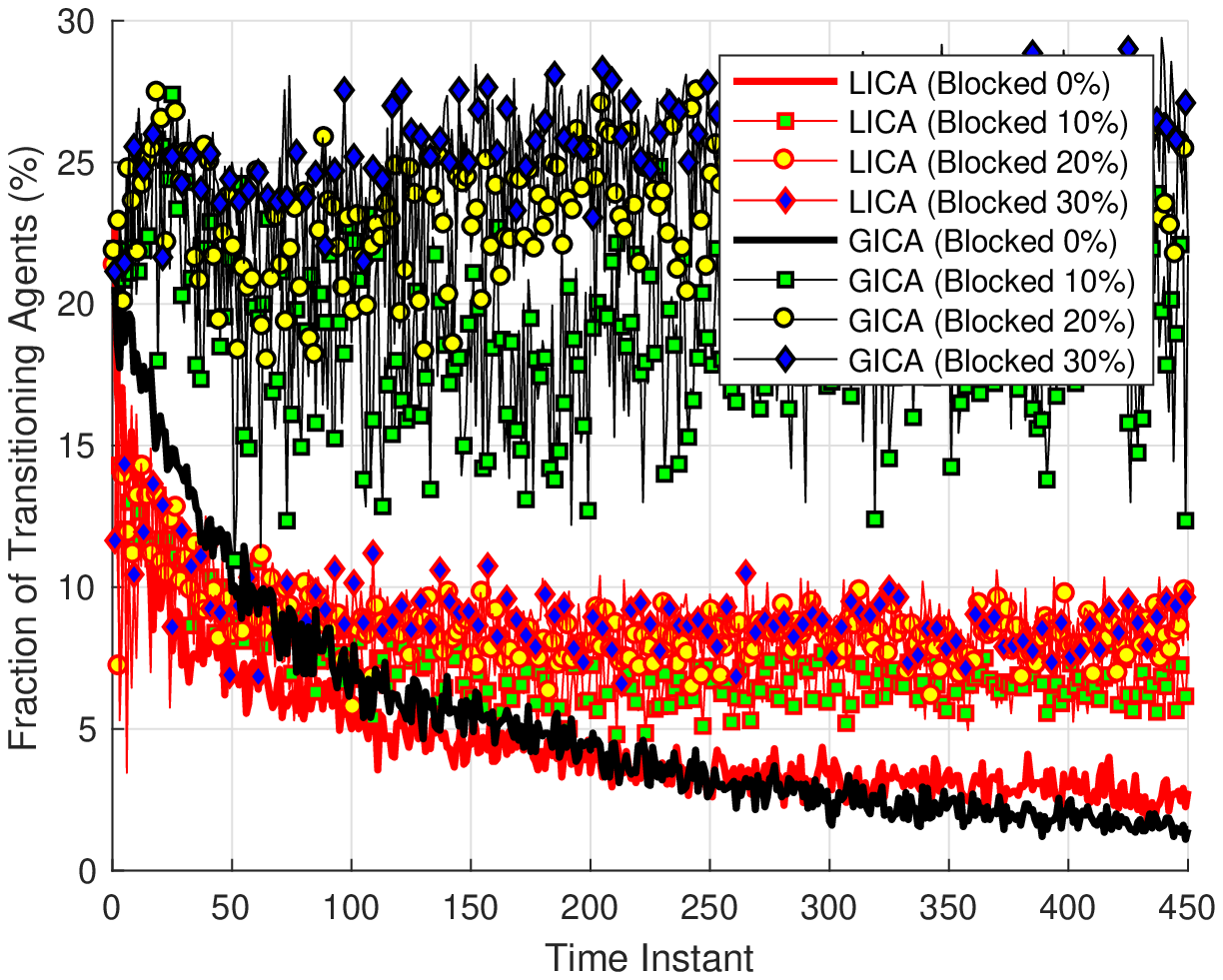}}
\hfil
\subfloat[Cumulative Travel Expenses]{\includegraphics[width=0.33\linewidth]{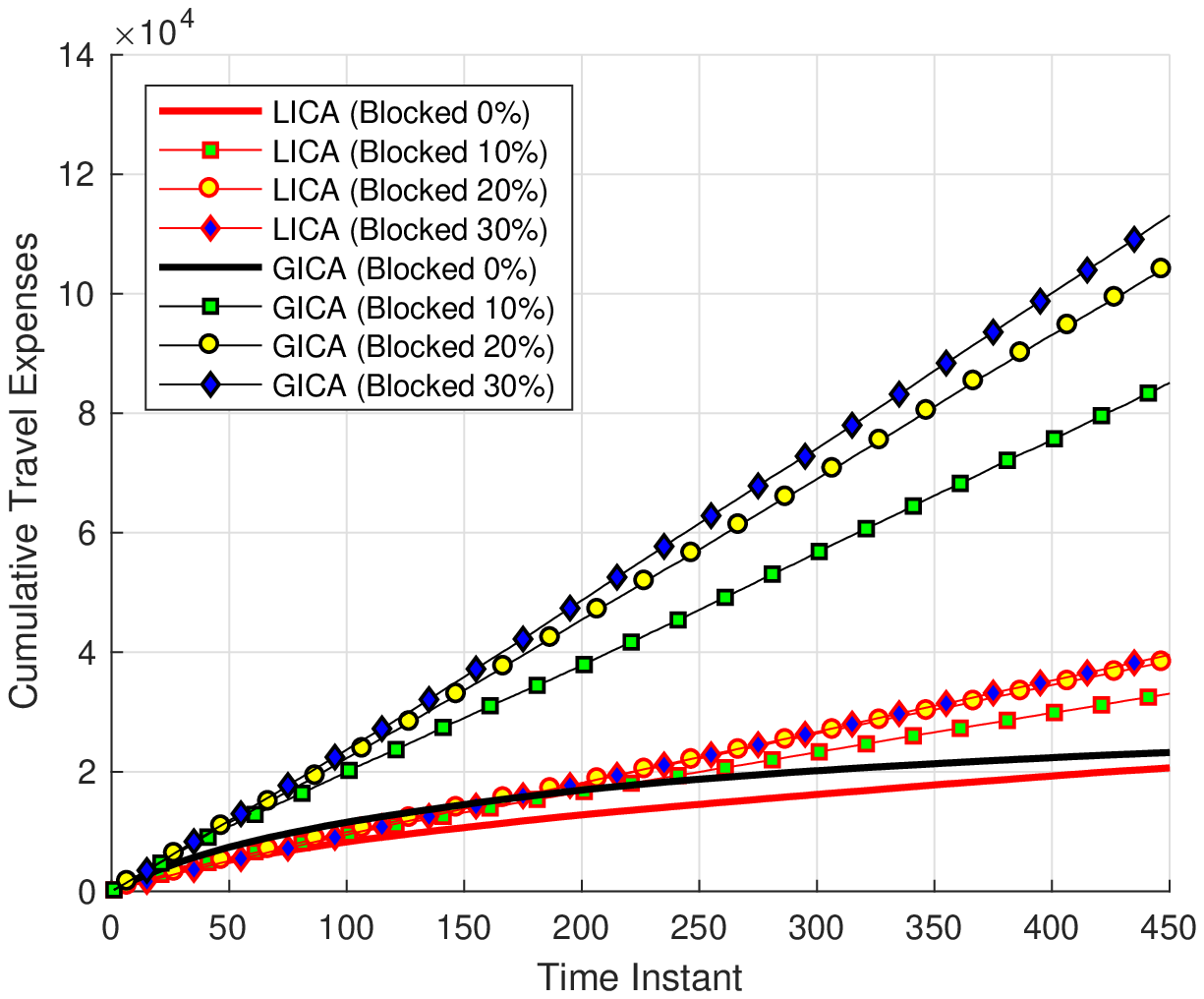}}
\hfil
\caption{Performance comparison in communication-disconnected situations: (a) the error between the current swarm status and the desired status, shown as Hellinger Distance; (b) the fraction of agents who transitions between any two bins; (c) the cumulative travel expenses of all the agents from the beginning.}
\label{fig.P_aync}
\end{figure*}
\begin{figure*}[htbp]
\centering
\subfloat[Convergence Error]{\includegraphics[width=0.33\linewidth]{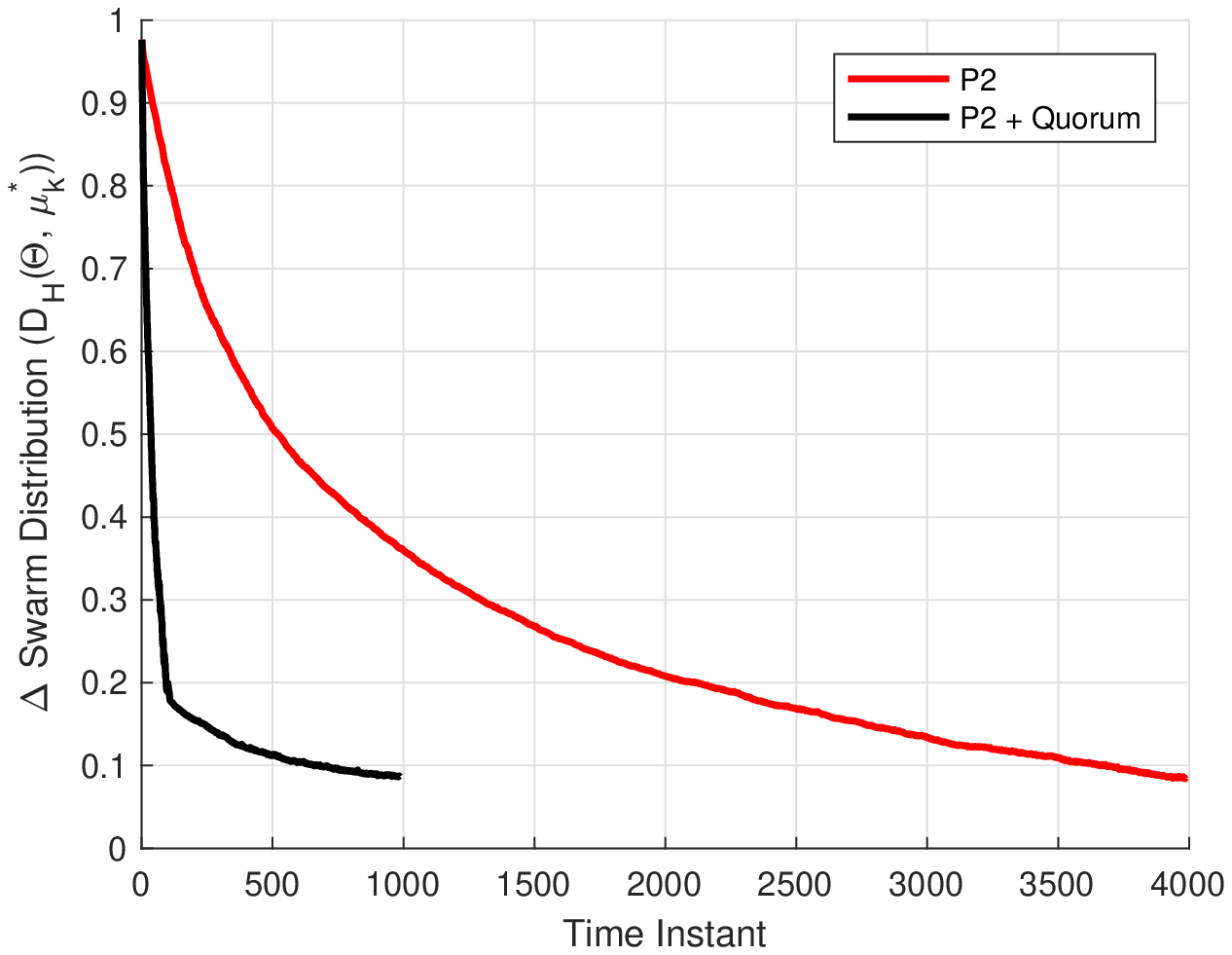}}
\hfil
\subfloat[Fraction of Transitioning Agents]{\includegraphics[width=0.33\linewidth]{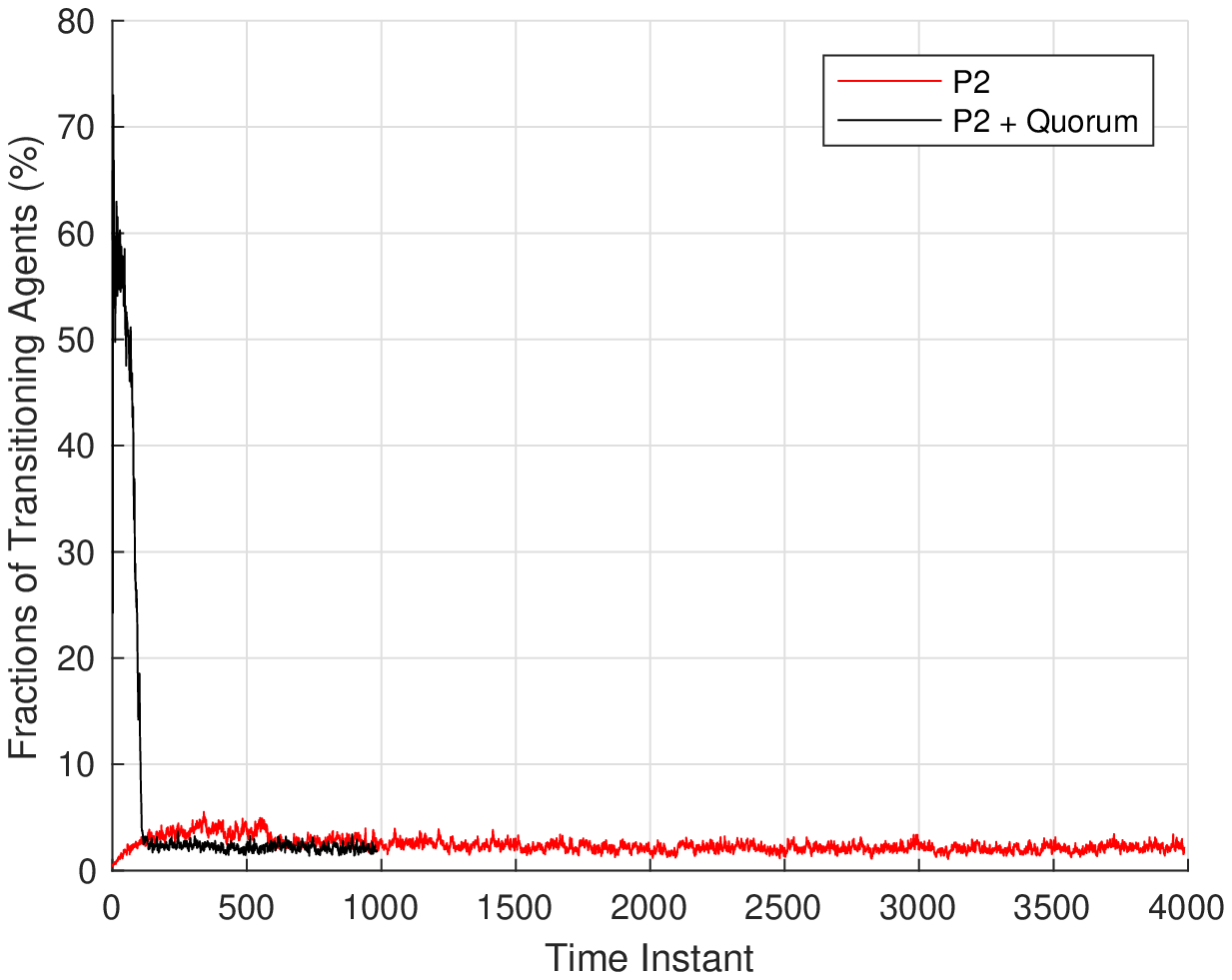}}
\hfil
\subfloat[Maximum Number of Transitioning Agents via Each Path (for P2)]{\includegraphics[width=0.33\linewidth]{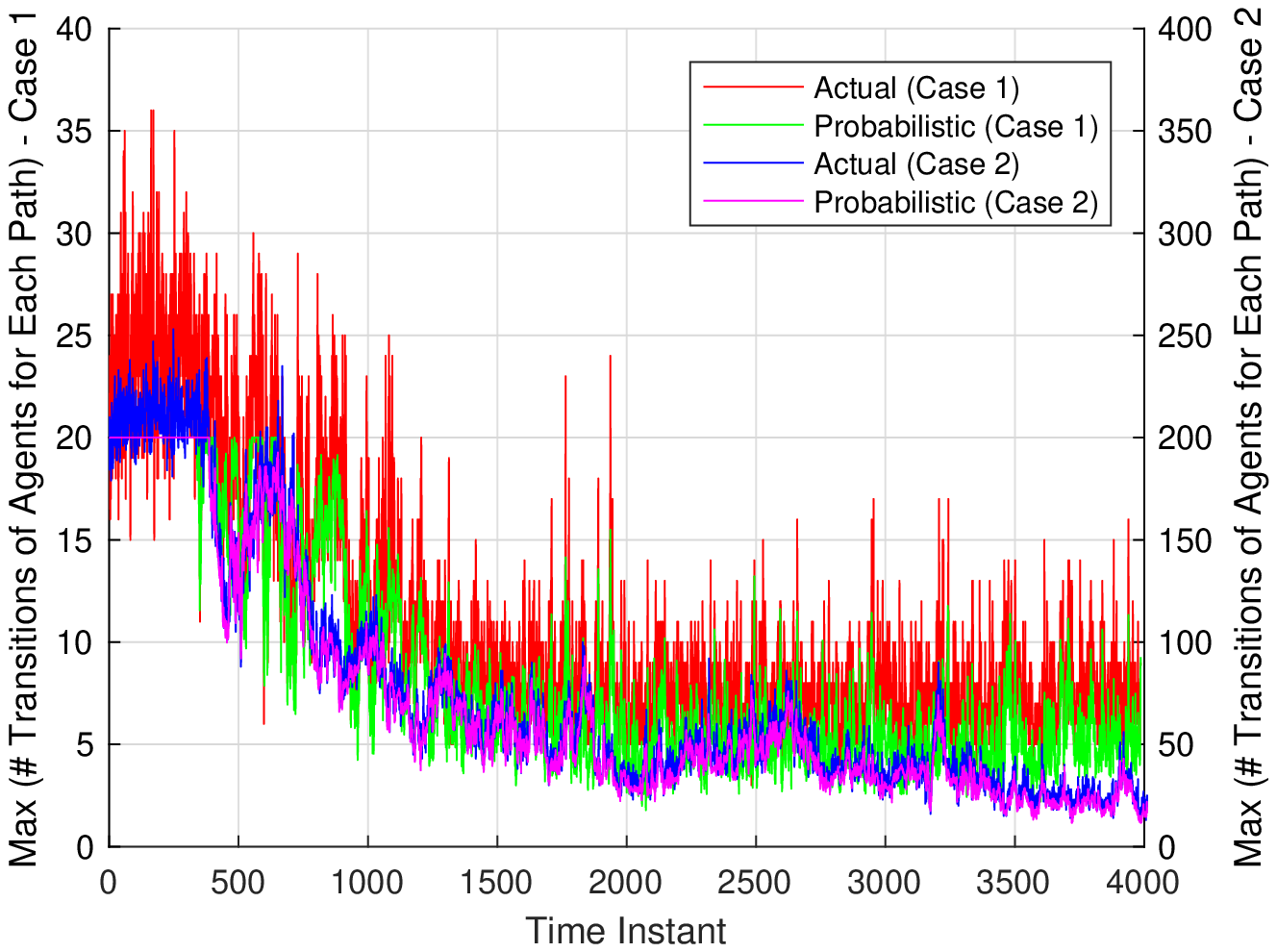}}
\hfil
\caption{Comparison results between the method for P2 and the quorum-based model: (a) the error between the current swarm status and the desired status, shown as Hellinger Distance, at each time instant; (b) the fraction of agents who transition between any two bins at each time instant. (c) The maximum number of transitioning agents via each path in the method for P2 (Case 1: $|\mathcal{A}| = 10,000$ and $c_{(i,l)} = 20$, $\forall i \neq l$; Case 2: $|\mathcal{A}| = 100,000$ and $c_{(i,l)} = 200$, $\forall i \neq l$)}
\label{fig.P2P3}
\end{figure*}

This subsection investigates the effects of asynchronous environments in the proposed LICA-based method for (\ref{problem_min_cost}) and compares them with those in the GICA-based method in \cite{Bandyopadhyay2017}. Hence, a realistic scenario where an asynchronous process is required is considered: in the scenario, it is assumed that agents in some bins cannot communicate for some reason (such bins are called \emph{blocked}) and thus other agents in normal bins have to perform their own process without waiting them. 

The proportion of blocked bins to the entire bins is set to be different values, i.e. $0\%$, $10\%$, $20\%$ and $30\%$. 
At each time instant, the corresponding proportion of bins are randomly selected as blocked bins. 
For the proposed framework, the asynchronous implementation in Section \ref{sec:async} is built upon Algorithm \ref{algorithm_P1}.
In the GICA-based method, for the comparison purpose only,  it is assumed that agents in normal bins obtain $\mu^{j}_k = \mu^{\star}_k$ without interacting with agents in the blocked bins. 
The rest of scenario setting are the same as those in Section \ref{sec:P1_experiment}. 

Figure \ref{fig.P_aync} illustrates the performance of each method: convergence rate, fraction of transitioning agents, and cumulative travel expenses. 
As the proportion of the blocked bins increases, the GICA-based method tends to have faster convergence speed, whereas it loses Desired Feature \ref{goal.no_idle_switch} and thus increases cumulative travelling expenses (as shown in Figure \ref{fig.P_aync}(a), \ref{fig.P_aync}(b), and \ref{fig.P_aync}(c), respectively). 
On the contrary, the LICA-based method shows a graceful degradation in terms of Desired Feature \ref{goal.no_idle_switch} (as shown in Figure \ref{fig.P_aync}(b)). 
A possible explanation for these results could be that higher feedback gains due to the communication disconnection induce faster convergence performance in each method than the normal situation. 
This effect is dominant for the GICA-based method because it affects the entire agents, who use global information. 
However, in the LICA-based framework, the communication disconnection only locally influences so that its effectiveness is relatively modest.

%
%
\subsection{Demonstration of Example II and III}

This subsection demonstrates the LICA-based method for (\ref{problem_max_conv}) (i.e., Algorithm \ref{algorithm_P2}) and the quorum model (i.e., Algorithm \ref{algorithm_quorum}). 
For the former, we consider a scenario where $10,000$ agents and an arena consisting of $10 \times 10$ bins are given. 
The arena is as depicted in Figure \ref{fig.comm_burden}, where the agents are allowed to move only one path away within a unit time instant. 
For each one-way path, the upper flux bound per time instant is set as $20$ agents (i.e., $c_{(i,l)} = 20$, $\forall i \neq l$). 
All the agents start from a bin, and the desired swarm distribution is uniform-randomly generated. 

For the latter, we build the quorum model upon the LICA-based method for (\ref{problem_max_conv}).
This can be a good strategy for a user who wants to achieve not only faster convergence rate but also lower unnecessary transitions after equilibrium, which are regulated by the upper flux bounds.
Thus, in the same scenario described above, we will demonstrate the combined algorithm that computes $S^j_k$ and $G^j_k$ by Algorithm \ref{algorithm_quorum} and $P^j_k$ by Algorithm \ref{algorithm_P2}. 
We set $q_i = 1.3$ and $\gamma = 30$ for (\ref{eqn:G_k}); $\alpha = 1$ and $\epsilon_{\xi} = 10^{-9}$ for (\ref{eqn:xi}); and $\tau^j = 10^{-6}$ for (\ref{eqn:eta}).

Figure \ref{fig.P2P3}(a) and \ref{fig.P2P3}(b) presents that the both approaches make the swarm converge to the desired swarm distribution. 
It is observed that the number of transitioning agents in the method for (\ref{problem_max_conv}) are restricted because of the upper flux bound during the entire process. 
Meanwhile, the quorum-based method very quickly disseminates the agents, who are initially at one bin, over other bins, and thus the fraction of transitioning agents is very high at the initial phase. After that, the population fraction by the quorum-based method drops and remains as low as that by the method for (\ref{problem_max_conv}).

Figure \ref{fig.P2P3}(c) presents the maximum value amongst the number of transitioning agents via each (one-way) path. 
The red line indicates the actual result by the method for (\ref{problem_max_conv}), while the green line indicates the corresponding probabilistic value (i.e., $\max_{\forall i}\max_{\forall l \neq i} n_k[i] P^j_k[i,l]$).
It is shown that the stochastic decision policies reflect the given upper bound, meanwhile this bound is often violated in practice due to the finite cardinality of the agents. 
However, the result in the same scenario with setting $|\mathcal{A}| = 100,000$ and $c_{(i,l)} = 200$, $\forall i \neq l$ (denoted by Case 2), depicted by the blue and magenta lines in Figure \ref{fig.P2P3}(c), suggests that such violation can be mitigated as the number of given agents increases.

\section{Conclusion}\label{Conclusion}

This paper poposed a LICA-based closed-loop-type framework for probabilistic swarm distribution guidance. 
Since the feedback gains can be generated based on local information, 
agents have shorter and different timescales for using new information, and can incorporate an asynchronous decision-making process. 
Even using local information, the proposed framework converges to a desired density distribution, while maintaining scalability, robustness, and long-term system efficiency. 
Furthermore, the numerical experiments have showed that the proposed framework is suitable for a realistic environment where communication between agents is partially and temporarily disconnected. 
This paper has explicitly presented the design requirements for the Markov matrix to hold all these advantages, and has provided specific problem examples of how to implement this framework. 

Future works include optimisation of $\bar{\xi}^j_k[i]$, which can mitigate the trade-off between convergence rate and residual error.
In addition, it is expected that the communication cost required for the proposed framework can be reduced by incorporating a vision-based local density estimation \cite{Saleh2015}.

\section*{Acknowledgement}
The authors gratefully acknowledge that this research was supported by International Joint Research Programme with Chungnam National University (No. EFA3004Z).
Thanks to Sangjun Bae for supportive discussions. 

\section*{Appendix}
\subsection{Regarding the Convergence Analysis in Theorem \ref{thm:converge}}

\begin{definition}[\emph{Irreducible}]\label{def:irreducible}
A matrix is \emph{reducible} if and only if its associated digraph is not strongly connected.
A matrix that is not reducible is \emph{irreducible}. 
\end{definition}

\begin{definition}[\emph{Primitive}] 
A \emph{primitive} matrix is a {square nonnegative} matrix $A$ such that for every $i$, $j$ there exists $k > 0$ such that $A^k[i,j] > 0$. 
\end{definition}

\begin{definition}[\emph{Regular}] 
A \emph{regular} matrix is a stochastic matrix such that all the entries of some power of the matrix are positive. 
\end{definition}

\begin{definition}\cite[pp.92, 149]{Seneta1981}\cite{Bandyopadhyay2017} (\emph{Asymptotic Homogeneity})
``A sequence of stochastic matrices $\mathcal{M}_k \in \mathbb{M}^{n \times n}$, $k \ge k_0$, is said to be \emph{asymptotically homogeneous} (with respect to $d$) 
if there exists a row-stochastic vector $d \in \mathbb{P}^{n}$ such that $\lim_{k \to \infty} d \mathcal{M}_k = d$.'' 
\end{definition}

\begin{definition}\label{def:ergodic}\cite[pp.92, 149]{Seneta1981}\cite{Bandyopadhyay2017} (\emph{Strong Ergodicity})
The matrix product $\mathcal{U}_{k_0,k} := \mathcal{M}_{k_0}\mathcal{M}_{k_0+1} \cdots \mathcal{M}_{k-1}$, formed from a sequence of stochastic matrices $\mathcal{M}_r \in \mathbb{P}^{n \times n}$, $r \ge k_0$, is said to be 
\emph{strongly ergodic} if for each $i,l,r$ we get $\lim_{r \to \infty} \mathcal{U}_{k_0,r}[i,l] = \textbf{v}[l]$, where $\textbf{v} \in \mathbb{P}^{n}$ is a row-stochastic vector. 
Here, $\textbf{v}$ is called its unique limit vector (i.e., $\lim_{r \to \infty} \mathcal{U}_{k_0,r} = \textbf{1}^{\top}\textbf{v}$). 
\end{definition}

\begin{lemma} \label{lemma:P_U} 
Given the requirements (\ref{const:stochastic})-(\ref{const:irreducible}) are satisfied, $M^j_{k}$ in Equation (\ref{eqn:markov}) has the following properties: 
\begin{enumerate}
\item row-stochastic;
\item irreducible; 
\item all diagonal elements are positive, and all other elements are non-negative;
\item there is a {positive lower bound} $\gamma$ such that $0 < \gamma \le \min_{i,l}^{+} M^j_{k}[i,l]$ (Note that $\min^+$ refers to the minimum of the positive elements);
\item asymptotically homogeneous with respect to $\Theta$. 
\end{enumerate}

In addition, $U^j_{k_0,k}$ in Equation (\ref{eqn:U}) has the following properties: 
\begin{enumerate}\addtocounter{enumi}{5}
\item irreducible; 
\item all diagonal elements are positive, and all other elements are non-negative; 
\item primitive \cite[Lemma 8.5.4, p.541]{Horn2012}  
\end{enumerate}
\end{lemma}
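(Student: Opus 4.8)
The plan is to establish properties (1)--(5) of $M^j_k$ directly from its defining convex-combination structure $M^j_k = (I - W^j_k)P^j_k + W^j_k S^j_k$ in Equation (\ref{eqn:P_matrix}), and then bootstrap properties (6)--(8) of the product $U^j_{k_0,k}$ from them. First I would dispose of the two algebraic properties. Row-stochasticity (1) follows by summing a row: each row of $P^j_k$ sums to one by (\ref{const:stochastic}), each row of $S^j_k$ sums to one since $S^j_k$ is row-stochastic, and $\omega^j_k[i]\in[0,1)$ is a convex weight, so the two contributions recombine to one; non-negativity of every entry is immediate since $P^j_k$, $S^j_k$, $1-\omega^j_k[i]$, and $\omega^j_k[i]$ are all non-negative. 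For (3), the diagonal entry $M^j_k[i,i] = (1-\omega^j_k[i])P^j_k[i,i] + \omega^j_k[i]S^j_k[i,i]$ is strictly positive because $1-\omega^j_k[i]>0$ and $P^j_k[i,i]>0$ by (\ref{const:positive_diag}), while the remaining entries are non-negative by the same bookkeeping.

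Next I would handle irreducibility (2) and asymptotic homogeneity (5). For (2), observe that for every off-diagonal pair with $C_k[i,l]=1$ we have $M^j_k[i,l]\ge(1-\omega^j_k[i])P^j_k[i,l]>0$, using $1-\omega^j_k[i]>0$ and the lower half of (\ref{const:irreducible}); hence the support of $M^j_k$ contains that of $C_k$, which is irreducible by Assumption \ref{assum:communication_range}, so $M^j_k$ is irreducible. Property (5) is the cleanest: left-multiplying by $\Theta$ and using the stationarity (\ref{const:reversible}), $\Theta P^j_k = \Theta$, gives $\Theta M^j_k = \Theta + \Theta W^j_k(S^j_k - P^j_k)$. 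Since $\omega^j_k[i] = \exp(-\tau^j k)\,G^j_k[i] \le \exp(-\tau^j k)\to 0$, the diagonal matrix $W^j_k\to\textbf{0}$, while $S^j_k - P^j_k$ stays bounded (both stochastic); therefore $\lim_{k\to\infty}\Theta M^j_k = \Theta$, which is exactly asymptotic homogeneity with respect to $\Theta$.

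The delicate step, and the one I expect to be the main obstacle, is the uniform-in-$k$ lower bound (4). The difficulty is that the off-diagonal magnitudes of $P^j_k$ shrink as $\bar{\mu}^j_k$ approaches $\bar{\Theta}$ while simultaneously $\omega^j_k[i]\to 0$, so a priori a positive entry could decay to zero and spoil $\min^{+}$. The key is that the feedback gain is floored, $\bar{\xi}^j_k[i]\ge\epsilon_{\xi}>0$ by Equation (\ref{eqn:xi}), which keeps every off-diagonal $P^j_k[i,l]$ with $C_k[i,l]=1$ bounded below by a positive constant built from the fixed design parameters and $\Theta$; combined with $1-\omega^j_k[i]\ge 1-\exp(-\tau^j k)$ being bounded below by a positive constant uniformly in $k$ (it tends to $1$ and is positive at each finite instant since $\omega^j_k[i]<1$), this floors the corresponding entries of $M^j_k$. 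I would also argue that no spurious vanishing entries appear, i.e. the nonzero pattern of $M^j_k$ coincides with that of $C_k$ for every $k$ because $S^j_k$, like $P^j_k$, respects the same communication graph; consequently every positive entry of $M^j_k$ is at least some $\gamma>0$ independent of $k$.

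Finally I would transfer these to $U^j_{k_0,k} = M^j_{k_0}\cdots M^j_{k-1}$. Non-negativity with positive diagonal (7) follows because a product of non-negative matrices is non-negative and the all-diagonal term yields $U^j_{k_0,k}[i,i]\ge\prod_r M^j_r[i,i]>0$. For irreducibility (6) I would use that, for stochastic matrices with positive diagonals, the support of a product contains the union of the supports of the factors, since $AB[i,l]\ge A[i,l]B[l,l]$ and $AB[i,l]\ge A[i,i]B[i,l]$; as each factor is already irreducible, the product inherits a strongly-connected support. Primitivity (8) is then immediate: an irreducible matrix with a positive diagonal is aperiodic, hence primitive, as recorded in \cite[Lemma 8.5.4, p.541]{Horn2012}.
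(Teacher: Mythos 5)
Your proposal is correct and follows essentially the same route as the paper's proof: row-stochasticity and positivity from the convex-combination structure, irreducibility inherited from the support of $P^j_k$ via $\omega^j_k[i]<1$, asymptotic homogeneity from $W^j_k\to\textbf{0}$, and properties (6)--(8) of the product via positive diagonals and \cite[Lemma 8.5.4, p.541]{Horn2012}. The only difference is that you treat the uniform-in-$k$ lower bound (4) with more care (invoking the $\epsilon_{\xi}$ floor and the behaviour of $1-\omega^j_k[i]$), whereas the paper dismisses it in one line as implied by properties (2) and (3); your added detail is a genuine improvement rather than a divergence in approach.
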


\begin{proof}
This lemma can be proved by similarly following the mathematical development for \cite[Theorem 4]{Bandyopadhyay2017}.
$M^j_k$ is row-stochastic because $M^j_k \cdot \textbf{1}^{\top} = (I - W^j_k) P^j_k \cdot \textbf{1}^{\top}  + W^j_k S^j_k \cdot \textbf{1}^{\top}  = (I - W^j_k) \cdot \textbf{1}^{\top}  + W^j_k \cdot \textbf{1}^{\top}  = \textbf{1}^{\top}$.
$P^j_k$ is irreducible and $\omega^j_k[i]$ is always less than 1, thus $M^j_k$ is also irreducible (i.e., $M^j_{k}[i,l] > 0$ if $P^j_{k}[i,l] > 0$). 
The property 3) is true because 
	$\mathrm{diag}(I - W^j_k) > \textbf{0}$, $W^j_k \ge \textbf{0}$, and $P^j_k$ is also a non-negative matrix such that its diagonal elements are positive. 
The property 4) is implied by either the property 2) or 3). 
From the definition of $W^j_k$, it follows that $\lim_{k \to \infty} W^j_k = \textbf{0}$ (because of $\exp(-\tau^j k)$), and thereby $\lim_{k \to \infty} M^j_k = \lim_{k \to \infty} P^j_k$. 
Hence, $\lim_{k \to \infty} \Theta M^j_k = \lim_{k \to \infty} \Theta P^j_k = \Theta$, and the property 5) is valid.

Let us now turn to $U^j_{k_0,k}$. It is irreducible due to the fact that 
if $M^j_{r}[i,l] > 0$ for some $r \in \{k_0,...,k-1\}$ and $\forall i,l \in \{1,...,n_{bin}\}$, then the corresponding element $U^j_{k_0,k}[i,l]$ is greater or equal to the product of positive diagonal elements and $M_r^j[i,l]$ and the property 2). 
The property 7) is true because $U^j_{k_0,k}$ is a product of nonnegative matrices where all diagonal entries are positive. 
It follows from \cite[Lemma 8.5.4, p.541]{Horn2012} that $U^j_{k_0,k}$ is primitive: ``if a square matrix is {irreducible}, {nonnegative} and {all its main diagonal entries are positive}, then the matrix is primitive".
\end{proof}

\ifCLASSOPTIONcaptionsoff
  \newpage
\fi



\bibliographystyle{IEEEtran}
\bibliography{library}

\begin{thebibliography}{10}
\providecommand{\url}[1]{#1}
\csname url@samestyle\endcsname
\providecommand{\newblock}{\relax}
\providecommand{\bibinfo}[2]{#2}
\providecommand{\BIBentrySTDinterwordspacing}{\spaceskip=0pt\relax}
\providecommand{\BIBentryALTinterwordstretchfactor}{4}
\providecommand{\BIBentryALTinterwordspacing}{\spaceskip=\fontdimen2\font plus
\BIBentryALTinterwordstretchfactor\fontdimen3\font minus
  \fontdimen4\font\relax}
\providecommand{\BIBforeignlanguage}[2]{{%
\expandafter\ifx\csname l@#1\endcsname\relax
\typeout{** WARNING: IEEEtran.bst: No hyphenation pattern has been}%
\typeout{** loaded for the language `#1'. Using the pattern for}%
\typeout{** the default language instead.}%
\else
\language=\csname l@#1\endcsname
\fi
#2}}
\providecommand{\BIBdecl}{\relax}
\BIBdecl

\bibitem{Brambilla2013}
M.~Brambilla, E.~Ferrante, M.~Birattari, and M.~Dorigo, ``{Swarm Robotics: a
  Review from the Swarm Engineering Perspective},'' \emph{Swarm Intelligence},
  vol.~7, no.~1, pp. 1--41, 2013.

\bibitem{Kolling2016}
A.~Kolling, P.~Walker, N.~Chakraborty, K.~Sycara, and M.~Lewis, ``{Human
  Interaction With Robot Swarms: A Survey},'' \emph{IEEE Transactions on
  Human-Machine Systems}, vol.~46, no.~1, pp. 9--26, 2016.

\bibitem{Sahin2005}
E.~Sahin, ``{Swarm Robotics: From Sources of Inspiration to Domains of
  Application},'' in \emph{Swarm Robotics}.\hskip 1em plus 0.5em minus
  0.4em\relax Berlin: Springer, 2005, pp. 10--20.

\bibitem{Acikmese2012}
B.~Acikmese and D.~S. Bayard, ``{A Markov Chain Approach to Probabilistic Swarm
  Guidance},'' in \emph{American Control Conference}, Montreal, Canada, 2012,
  pp. 6300--6307.

\bibitem{Acikmese2014}
------, ``{Probabilistic Swarm Guidance for Collaborative Autonomous Agents},''
  in \emph{American Control Conference}, Portland, OR, USA, 2014, pp. 477--482.

\bibitem{Acikmese2015}
------, ``{Markov Chain Approach to Probabilistic Guidance for Swarms of
  Autonomous Agents},'' \emph{Asian Journal of Control}, vol.~17, no.~4, pp.
  1105--1124, 2015.

\bibitem{Chattopadhyay2009}
I.~Chattopadhyay and A.~Ray, ``{Supervised Self-Organization of Homogeneous
  Swarms Using Ergodic Projections of Markov Chains},'' \emph{IEEE Transactions
  on Systems, Man, and Cybernetics, Part B (Cybernetics)}, vol.~39, no.~6, pp.
  1505--1515, 2009.

\bibitem{Demir2015}
N.~Demir and B.~Acikmese, ``{Probabilistic Density Control for Swarm of
  Decentralized ON-OFF Agents with Safety Constraints},'' in \emph{American
  Control Conference}, Chicago, IL, USA, 2015, pp. 5238--5244.

\bibitem{Luo2014}
R.~Luo, N.~Chakraborty, and K.~Sycara, ``{Supervisory Control for
  Cost-Effective Redistribution of Robotic Swarms},'' in \emph{IEEE
  International Conference on Systems, Man, and Cybernetics (SMC)}, San Diego,
  CA, USA, 2014, pp. 596--601.

\bibitem{Demir2015a}
N.~Demir, U.~Eren, and B.~A{\c{c}}ıkmeşe, ``{Decentralized Probabilistic
  Density Control of Autonomous Swarms with Safety Constraints},''
  \emph{Autonomous Robots}, vol.~39, no.~4, pp. 537--554, 2015.

\bibitem{Bandyopadhyay2013a}
S.~Bandyopadhyay, S.-J. Chung, and F.~Y. Hadaegh, ``{Inhomogeneous Markov Chain
  Approach To Probablistic Swarm Guidance Algorithms},'' in \emph{5th Int.
  Conf. on Spacecraft Formation Flying Missions and Technologies (SFFMT)},
  Munich, Germany, 2013.

\bibitem{Morgan2014}
D.~Morgan, G.~P. Subramanian, S.~Bandyopadhyay, S.-J. Chung, and F.~Y. Hadaegh,
  ``{Probabilistic Guidance of Distributed Systems using Sequential Convex
  Programming},'' in \emph{IEEE/RSJ International Conference on Intelligent
  Robots and Systems}, Chicago, IL, USA, 2014, pp. 3850--3857.

\bibitem{Bandyopadhyay2017}
S.~Bandyopadhyay, S.-J. Chung, and F.~Y. Hadaegh, ``{Probabilistic and
  Distributed Control of a Large-Scale Swarm of Autonomous Agents},''
  \emph{IEEE Transactions on Robotics}, vol.~33, no.~5, pp. 1103--1123, 2017.

\bibitem{Halasz2007}
A.~Halasz, M.~A. Hsieh, S.~Berman, and V.~Kumar, ``{Dynamic Redistribution of a
  Swarm of Robots among Multiple Sites},'' in \emph{IEEE/RSJ International
  Conference on Intelligent Robots and Systems}, San Diego, CA, USA, 2007, pp.
  2320--2325.

\bibitem{Hsieh2008}
M.~A. Hsieh, A.~Halasz, S.~Berman, and V.~Kumar, ``{Biologically Inspired
  Redistribution of a Swarm of Robots among Multiple Sites},'' \emph{Swarm
  Intelligence}, vol.~2, no. 2-4, pp. 121--141, 2008.

\bibitem{Berman2008}
S.~Berman, A.~Halasz, M.~{Ani Hsieh}, and V.~Kumar, ``{Navigation-based
  Optimization of Stochastic Strategies for Allocating a Robot Swarm among
  Multiple Sites},'' in \emph{47th IEEE Conference on Decision and Control},
  Cancun, Mexico, 2008, pp. 4376--4381.

\bibitem{Berman2009}
S.~Berman, A.~Halasz, M.~A. Hsieh, and V.~Kumar, ``{Optimized Stochastic
  Policies for Task Allocation in Swarms of Robots},'' \emph{IEEE Transactions
  on Robotics}, vol.~25, no.~4, pp. 927--937, 2009.

\bibitem{Mather2011}
T.~W. Mather and M.~A. Hsieh, ``{Macroscopic Modeling of Stochastic Deployment
  Policies with Time Delays for Robot Ensembles},'' \emph{The International
  Journal of Robotics Research}, vol.~30, no.~5, pp. 590--600, 2011.

\bibitem{Lerman2005}
K.~Lerman, A.~Martinoli, and A.~Galstyan, ``{A Review of Probabilistic
  Macroscopic Models for Swarm Robotic Systems},'' in \emph{Swarm
  Robotics}.\hskip 1em plus 0.5em minus 0.4em\relax Berlin: Springer, 2005, pp.
  143--152.

\bibitem{Johnson2016}
L.~B. Johnson, H.-L. Choi, and J.~P. How, ``{The Role of Information
  Assumptions in Decentralized Task Allocation: A Tutorial},'' \emph{IEEE
  Control Systems}, vol.~36, no.~4, pp. 45--58, 2016.

\bibitem{Xue2014}
S.~Xue, C.~Sun, J.~Zeng, Y.~Jin, and R.~Cheng, ``{Effect of Communication Modes
  to Swarm Robotic Search},'' \emph{The Open Electrical {\&} Electronic
  Engineering Journal}, vol.~8, no.~1, pp. 240--244, 2014.

\bibitem{Koh2006}
B.~I. Koh, A.~D. George, R.~T. Haftka, and B.~J. Fregly, ``{Parallel
  Asynchronous Particle Swarm Optimization},'' \emph{International Journal for
  Numerical Methods in Engineering}, vol.~67, no.~4, pp. 578--595, 2006.

\bibitem{Johnson2011}
L.~Johnson, S.~Ponda, H.-L. Choi, and J.~How, ``{Asynchronous Decentralized
  Task Allocation for Dynamic Environments},'' in \emph{Infotech@Aerospace
  2011}, St.Louis, MO, USA, 2011.

\bibitem{Couzin2002}
I.~D. Couzin, J.~Krause, R.~James, G.~D. Ruxton, and N.~R. Franks,
  ``{Collective Memory and Spatial Sorting in Animal Groups.}'' \emph{Journal
  of theoretical biology}, vol. 218, no.~1, pp. 1--11, 2002.

\bibitem{Couzin2005}
I.~D. Couzin, J.~Krause, N.~R. Franks, and S.~A. Levin, ``{Effective Leadership
  and Decision-making in Animal Groups on the Move},'' \emph{Nature}, vol. 433,
  no. 7025, pp. 513--516, 2005.

\bibitem{Gautrais2008}
J.~Gautrais, C.~Jost, and G.~Theraulaz, ``{Key Behavioural Factors in a
  Self-Organised Fish School Model},'' \emph{Annales Zoologici Fennici},
  vol.~45, no.~5, pp. 415--428, 2008.

\bibitem{Hoare2004}
D.~J. Hoare, I.~D. Couzin, J.~G.~J. Godin, and J.~Krause, ``{Context-dependent
  Group Size Choice in Fish},'' \emph{Animal Behaviour}, vol.~67, no.~1, pp.
  155--164, 2004.

\bibitem{Bandyopadhyay2014}
S.~Bandyopadhyay and S.-J. Chung, ``{Distributed Estimation using Bayesian
  Consensus Filtering},'' \emph{American Control Conference}, pp. 634--641,
  2014.

\bibitem{Seeley1995}
T.~D. Seeley, \emph{{The Wisdom of the Hive}}.\hskip 1em plus 0.5em minus
  0.4em\relax Cambridge, Massachusetts: Havard University Press, 1995.

\bibitem{Keller2000}
L.~Keller, M.~J.~B. Krieger, and J.-B. Billeter, ``{Ant-like Task Allocation
  and Recruitment in Cooperative Robots},'' \emph{Nature}, vol. 406, no. 6799,
  pp. 992--995, 2000.

\bibitem{Partridge1982}
B.~L. Partridge, ``{The Structure and Function of Fish Schools},''
  \emph{Scientific American}, vol. 246, no.~6, pp. 114--123, 1982.

\bibitem{Becco2006}
C.~Becco, N.~Vandewalle, J.~Delcourt, and P.~Poncin, ``{Experimental Evidences
  of a Structural and Dynamical Transition in Fish School},'' \emph{Physica A:
  Statistical Mechanics and its Applications}, vol. 367, pp. 487--493, 2006.

\bibitem{Seneta1981}
E.~Seneta, \emph{{Non-negative Matrices and Markov Chains}}, ser. Springer
  Series in Statistics.\hskip 1em plus 0.5em minus 0.4em\relax New York, NY:
  Springer New York, 1981.

\bibitem{Sebbane2011}
Y.~{Bestaoui Sebbane}, \emph{{Planning and Decision Making for Aerial Robots}},
  ser. Intelligent Systems, Control and Automation: Science and
  Engineering.\hskip 1em plus 0.5em minus 0.4em\relax Cham: Springer
  International Publishing, 2014, vol.~71.

\bibitem{Ipsen2011}
I.~C.~F. Ipsen and T.~M. Selee, ``{Ergodicity Coefficients Defined by Vector
  Norms},'' \emph{SIAM Journal on Matrix Analysis and Applications}, vol.~32,
  no.~1, pp. 153--200, jan 2011.

\bibitem{Jadbabaie2003}
A.~Jadbabaie, {Jie Lin}, and A.~Morse, ``{Coordination of Groups of Mobile
  Autonomous Agents using Nearest Neighbor Rules},'' \emph{IEEE Transactions on
  Automatic Control}, vol.~48, no.~6, pp. 988--1001, 2003.

\bibitem{Chung1989}
J.~Chung, P.~Kannappan, C.~Ng, and P.~Sahoo, ``{Measures of Distance between
  Probability Distributions},'' \emph{Journal of Mathematical Analysis and
  Applications}, vol. 138, no.~1, pp. 280--292, 1989.

\bibitem{Saleh2015}
S.~A.~M. Saleh, S.~A. Suandi, and H.~Ibrahim, ``{Recent Survey on Crowd Density
  Estimation and Counting for Visual Surveillance},'' \emph{Engineering
  Applications of Artificial Intelligence}, vol.~41, pp. 103--114, 2015.

\bibitem{Horn2012}
R.~A. Horn and C.~R. Johnson, \emph{{Matrix Analysis}}, 2nd~ed.\hskip 1em plus
  0.5em minus 0.4em\relax Cambridge: Cambridge University Press, 2012.

\end{thebibliography}

\end{document}